\providecommand{\texorpdfstring}[2]{#1}
\DeclareMathOperator{\Ham}{Ham}
\newcolumntype{Y}{>{\RaggedRight\arraybackslash}X}
\providecommand{\tightlist}{\setlength{\itemsep}{0pt}\setlength{\parskip}{0pt}}
\theoremstyle{plain}
\newtheorem{theorem}{Theorem}[section]
\newtheorem{proposition}[theorem]{Proposition}
\newtheorem{lemma}[theorem]{Lemma}
\newtheorem{corollary}[theorem]{Corollary}
\theoremstyle{definition}
\newtheorem{definition}[theorem]{Definition}
\theoremstyle{remark}
\newtheorem{remark}[theorem]{Remark}
\numberwithin{equation}{section}
  \crefname{theorem}{Theorem}{Theorems}
  \crefname{proposition}{Proposition}{Propositions}
  \crefname{lemma}{Lemma}{Lemmas}
  \crefname{corollary}{Corollary}{Corollaries}
  \crefname{definition}{Definition}{Definitions}
  \crefname{remark}{Remark}{Remarks}
  \crefname{example}{Example}{Examples}
\numberwithin{equation}{section}
\newcommand{\Renyi}{R\'{e}nyi}
\newcommand{\Erdos}{Erd\H{o}s}
\providecommand{\ER}{\Erdos--\Renyi} 
\g@addto@macro\appendix{%
  \titleformat{\section}
    {\normalfont\Large\bfseries}%
    {Appendix~\thesection}%
    {1em}{}%
}
\title{Hamming Graph Metrics: A Multi-Scale Framework for Structural
Redundancy and Uniqueness in Graphs}
\author{R. Scott Johnson}
\date{}
\begin{document}

\maketitle

\begin{abstract}
Traditional graph centrality measures effectively quantify node
importance but fail to capture the structural uniqueness of multi-scale
connectivity patterns---critical for understanding network resilience
and function. This paper introduces \emph{Hamming Graph Metrics (HGM)},
a framework that represents a graph by its exact-\(k\) reachability
tensor \(\mathcal{B}_G\in\{0,1\}^{N\times N\times D}\) with slices
\((\mathcal{B}_G)_{:,:,1}=A\) and, for \(k\ge2\),
\((\mathcal{B}_G)_{:,:,k}=\mathbf{1}[\sum_{t=1}^{k} A^t>0]-\mathbf{1}[\sum_{t=1}^{k-1} A^t>0]\)
(shortest-path distance exactly \(k\)).

Guarantees. (i) \emph{Permutation invariance}:
\(d_{\mathrm{HGM}}(\pi(G),\pi(H))=d_{\mathrm{HGM}}(G,H)\) for all vertex
relabelings \(\pi\);(ii) the \emph{tensor Hamming distance} \[
d_{\mathrm{HGM}}(G,H):=\|\mathcal{B}_G-\mathcal{B}_H\|_1 = \sum_{i,j,k}\mathbf{1}\!\big[(\mathcal{B}_G)_{ijk}\neq(\mathcal{B}_H)_{ijk}\big]
\] is a \emph{true metric} on labeled graphs; and (iii) \emph{Lipschitz
stability} to edge perturbations with explicit degree-dependent
constants (see ``Graph-to-Graph Comparison'' \(\rightarrow\) ``Tensor
Hamming metric''; ``Stability to edge perturbations''; Appendix A). For
unlabeled graph comparison, one can apply HGM after graph canonization,
or use an alignment-based variant (exponential worst-case cost).

We develop: (1) \emph{per-scale spectral analysis} via classical MDS on
double-centered Hamming matrices \(D^{(k)}\), yielding spectral
coordinates and explained variances; (2) \emph{summary statistics} for
node-wise and graph-level structural dissimilarity; (3)
\emph{graph-to-graph comparison} via the metric above; and (4)
\emph{analytic properties} including extremal characterizations,
multi-scale limits, and stability bounds.
\end{abstract}

\section{Introduction}\label{introduction}

\subsection{The Research Gap}\label{the-research-gap}

Graph centrality measures are fundamental tools for understanding
network structure and identifying influential nodes across numerous
domains including social networks, biological pathways, transportation
infrastructure, and communication systems \cite{freeman1979}.
Traditional measures such as degree centrality, closeness centrality,
and betweenness centrality have been extensively studied and applied for
decades \cite{sabidussi1966,bonacich1987}. These classical metrics
typically emphasize frequency, reachability, and efficiency of traversal
within a network. Specifically, betweenness centrality quantifies how
often a node participates in shortest paths between other nodes
\cite{newman2010}, closeness centrality measures how quickly a node can
reach other nodes \cite{wasserman1994}, and degree centrality simply
counts how many direct connections a node possesses \cite{freeman1979}.

However, these conventional measures fail to capture an important aspect
of network structure: the structural diversity or redundancy of
connectivity patterns. This limitation is particularly significant when
analyzing complex networks where understanding the uniqueness of
connection patterns is crucial. Local measures like degree centrality
provide valuable information about immediate connections but offer
limited insight into how these connections contribute to global
structural patterns. Even path-based measures like betweenness
centrality, while considering global connectivity, primarily quantify
path frequency rather than structural uniqueness.

The structural uniqueness of connectivity patterns represents a
fundamental property of networks that has remained largely unexplored.
Two nodes with identical betweenness or closeness centrality values may
differ substantially in how their connections are structured. One node
might connect disparate regions of the network through unique paths
that, if removed, would significantly alter the network's topology. In
contrast, another node with the same centrality values might have highly
redundant paths that could be easily substituted if the node were
removed. Traditional centrality measures cannot distinguish between
these scenarios despite their differing implications for network
resilience, information flow, and functional organization.

Many practical applications require a more nuanced structural
fingerprint that can distinguish between nodes whose paths are
structurally redundant and those whose paths offer unique connectivity
patterns. For instance, in resilience analysis, nodes with structurally
unique connectivity patterns may represent critical failure points
\cite{watts1998,albert2000}, while in anomaly detection
\cite{chandola2009}, unusual path structures might signal deviations
from expected network behavior \cite{albert2000}. In communication
networks, identifying nodes with diverse connectivity patterns can
enhance routing strategies and improve network robustness
\cite{albert2000}. These applications highlight the need for centrality
measures that specifically quantify structural uniqueness and
redundancy.

\subsection{Contributions}\label{contributions}

This paper develops a rigorous mathematical framework for analyzing
structural uniqueness in graphs, grounded in binary reachability
patterns and their pairwise dissimilarities. Our key theoretical
contributions are as follows:

\begin{enumerate}
\def\labelenumi{\arabic{enumi}.}
\item
  \textbf{Hamming Graph Metrics Framework}: We define a comprehensive
  framework based on the distribution of Hamming distances between
  binary reachability vectors across all node pairs, capturing the
  complete spectrum of structural diversity within a graph.
\item
  \textbf{Multi-Scale Structural Profiles}: The framework decomposes
  connectivity into a spectrum of exact path lengths, with each scale
  \(k\) analyzed independently to reveal patterns invisible when
  distances are aggregated.
\item
  \textbf{Tensor Formulation and Properties}: We develop a family of
  convex functionals on binary dissimilarity distributions, including
  entropies, \(\ell_1\)/total-variation divergences, and spectral
  descriptors, enabling rich geometric analysis without transport
  distances.
\item
  \textbf{Graph-Level Aggregates}: We define dispersion via deviation
  from the mean profile (TV/\(\ell_1\)) and entropy-based summaries,
  enabling structural comparisons across graphs.
\item
  \textbf{Theoretical Guarantees}: New theorems are proved for extremal
  bounds, monotonicity, and structural separation in canonical graph
  classes (complete, star, ring, regular, \ER, scale-free).
\item
  \textbf{Comparative Geometry of Graphs}: Hamming distributions offer a
  basis for comparing graphs structurally, independent of scale or
  density.
\item
  \textbf{Finite Sample Models}: We derive limiting behavior and
  finite-size approximations under synthetic conditions.
\end{enumerate}

While the emphasis throughout is theoretical, we include a brief
discussion of algorithmic strategies to compute Hamming Graph Metrics
efficiently in Appendix B, and we show that the proposed measures can
scale to real-world networks with tens of thousands of nodes when
implemented with bit-parallel operations. These observations indicate
practical scalability and suggest future work in scalable approximation.

\subsection{Preliminaries}\label{sec:preliminaries}

\subsubsection{Notational Conventions}\label{notational-conventions}

We use superscript \((k)\) to denote exact path length \(k\), not
cumulative distance. Thus \(B^{(k)}\) indicates paths of length exactly
\(k\), and \(\mathbf{b}_v^{(k)}\) is node \(v\)'s reachability vector at
this specific distance.

Let \(G = (V, E)\) be a finite, simple, undirected graph with vertex set
\(V\) and edge set \(E\), where \(|V| = N\). Let
\(A \in \{0,1\}^{N \times N}\) be the adjacency matrix of \(G\), with
entries \cite{newman2010}:

\[A_{ij} = \begin{cases} 1 & \text{if } (i, j) \in E \\ 0 & \text{otherwise} \end{cases}\]

For any positive integer \(k\), the matrix power \(A^k\) counts the
number of walks of length \(k\) between nodes.

\paragraph{Exact vs.\ cumulative reachability.}

Let \(A\) be the adjacency matrix and
\(B_{\le k}:=\mathbf{1}\!\big[\sum_{t=1}^{k} A^t>0\big]\) the cumulative
reachability within \(k\) steps (element-wise). We use the
\emph{exact-$k$} convention throughout: \[
B^{(1)}=A,\qquad B^{(k)}:=B_{\le k}-B_{\le k-1}\ \ (k\ge 2).
\] Thus the exact-\(k\) reachability tensor is \(B^{(k)}\) for
\(k=1,\dots,D\), with \(\operatorname{diag}(B^{(k)})=0\) for all \(k\);
equivalently, \(B^{(k)}_{ij}=\mathbf{1}\{\operatorname{dist}(i,j)=k\}\).
We henceforth \emph{identify} the \(k\)--slice of the tensor with its
matrix: \(B^{(k)}\equiv B^{(k)}\).

\subsubsection{Tensor Formulation and Cross-Scale
Structure}\label{sec:tensor-formulation}

We follow Kolda--Bader's tensor notation for unfoldings/matricization
and mode products \cite{Kolda2009}; only the entries of \(\mathcal{B}\)
are nonnegative binary, while its unfoldings are real matrices used for
spectral summaries.

While \(\{B^{(k)}\}_{k=1}^{D}\) can be seen as a stack of matrices, the
third index encodes \emph{cross-scale constraints}: if \((i,j)\) is
reachable in exactly \(k{+}1\) steps, then there exists \(\ell\) with
\((i,\ell)\) reachable in \(k\) steps and \((\ell,j)\in E\). These
inter-slice implications (and their converses failing in general) make
\(\mathcal{B}\) a genuinely \emph{third-order} object. HGM measures
geometry per scale and aggregates across \(k\) without collapsing them.

For each node \(v\), define its exact-\(k\) reachability vector as the
\(v\)-th row of \(B^{(k)}\): \[
\mathbf{b}_v^{(k)}=\big(B^{(k)}_{v1},B^{(k)}_{v2},\ldots,B^{(k)}_{vN}\big)\in\{0,1\}^N,
\qquad \text{with } \operatorname{diag}(B^{(k)})=\mathbf{0}.
\]

The Hamming distance between binary vectors \(x,y\in\{0,1\}^N\) is \[
\Ham(x,y)=\sum_{i=1}^N |x_i-y_i|=\|x-y\|_1,
\] which counts the number of positions at which \(x\) and \(y\) differ.
\emph{Equivalently (since the vectors are binary), it equals the Hamming weight of $x\oplus y$.}

\textbf{Additional notation.}

\begin{itemize}
\tightlist
\item
  \(\mu_v^{(k)}\): the empirical distribution of pairwise distances at
  scale \(k\) for node \(v\), \[
  \mu_v^{(k)}=\frac{1}{N-1}\sum_{u\ne v}\delta_{\Ham(\mathbf{b}_v^{(k)},\mathbf{b}_u^{(k)})}.
  \]
\item
  Higher moments at scale \(k\) (for node \(v\)) are taken with respect
  to \(\mu_v^{(k)}\); we write variance \(\sigma^2(\mu_v^{(k)})\),
  skewness \(\gamma(\mu_v^{(k)})\), and kurtosis
  \(\kappa(\mu_v^{(k)})\).
\item
  The diameter \(D\) is the smallest integer such that
  \(B_{\le D}=\mathbf{1}_{N\times N}-I\) (in the connected case).
\end{itemize}

Unless otherwise stated, we assume \(G\) is connected.

\paragraph{Tensor Representation and Multi-Scale Hamming
Distance}\label{tensor-representation-and-multi-scale-hamming-distance}

We work with the exact-\(k\) reachability tensor
\(\mathcal{B}\in\{0,1\}^{N\times N\times D}\) (Sec. 2.1). For a node
\(i\), the slice \(\mathcal{B}[i,:,:]\in\{0,1\}^{N\times D}\) stacks its
per-scale neighborhoods. All symbols are summarized in the Notation
Reference Table below.

\begin{longtable}{@{}%
  >{\RaggedRight\arraybackslash}p{\dimexpr .22\textwidth\relax}%
  >{\RaggedRight\arraybackslash}p{\dimexpr .18\textwidth\relax}%
  >{\RaggedRight\arraybackslash}p{\dimexpr .52\textwidth\relax}%
  >{\RaggedLeft\arraybackslash}p{\dimexpr .08\textwidth\relax}%
@{}}
\caption{Notation Reference\label{tab:notation}}\\
\toprule
\textbf{Symbol} & \textbf{Type} & \textbf{Definition} & \textbf{First Use}\\
\midrule
\endfirsthead

\toprule
\textbf{Symbol} & \textbf{Type} & \textbf{Definition} & \textbf{First Use}\\
\midrule
\endhead

\multicolumn{4}{l}{\bfseries Graph Structure}\\[2pt]
$G=(V,E)$               & Graph                     & Undirected simple graph with vertex set $V$ and edge set $E$         & \S\ref{sec:preliminaries}\\
$N=\lvert V\rvert$      & Integer                   & Number of vertices                                                    & \S\ref{sec:preliminaries}\\
$A$                     & $N\times N$ matrix        & Adjacency matrix; $A_{ij}=1$ iff $(i,j)\in E$                        & \S\ref{sec:preliminaries}\\
$\operatorname{diam}(G)$& Integer                   & Graph diameter                                                        & \S\ref{sec:preliminaries}\\
\addlinespace[4pt]

\multicolumn{4}{l}{\bfseries Reachability}\\[2pt]
$A^k$                   & $N\times N$ matrix        & $k$-walk matrix (counts walks of length $k$)                          & \S\ref{sec:preliminaries}\\
$B_{\le k}$             & $N\times N$ binary matrix & Cumulative reachability: $\mathbf{1}[A^k>0]$                          & \S\ref{sec:preliminaries}\\
$B^{(k)}$               & $N\times N$ binary matrix & \textbf{Exact-$k$ reachability}: $B^{(k)}=B_{\le k}-B_{\le k-1}$ ($k\ge2$, $B^{(1)}=A$) & \S\ref{sec:tensor-formulation}\\
$\mathbf{b}^{(k)}_v$    & Vector in $\{0,1\}^N$     & Row $v$ of $B^{(k)}$ (exact-$k$ reachability pattern)                 & \S\ref{sec:preliminaries}\\
\addlinespace[4pt]

\multicolumn{4}{l}{\bfseries Tensors}\\[2pt]
$\mathcal{A}$           & $N\times N\times K$ tensor & $k$-walk \emph{count} tensor: $\mathcal{A}(:,:,k)=A^k$ (integer entries) & \S\ref{sec:tensor-formulation}\\
$\mathcal{B}$           & $N\times N\times D$ tensor & Exact-$k$ reachability (binary); $(i,j,k)=1$ iff $\operatorname{dist}(i,j)=k$; diagonal $0$ & \S\ref{sec:tensor-formulation}\\
$\overline{\mathcal{B}}$& $N\times N\times D$ tensor & Mean reachability slab: $\displaystyle \overline{\mathcal{B}}=\frac{1}{N}\sum_{u=1}^N \mathcal{B}[u,:,:]$ & \S\ref{sec:thc}\\
\addlinespace[4pt]

\multicolumn{4}{l}{\bfseries Distances \& Centrality}\\[2pt]
$\Ham(x,y)$             & Integer                    & Hamming distance between binary vectors & \S\ref{sec:preliminaries}\\
$H(v,u)$                & Integer                    & Tensorial Hamming distance (sum over $k$ of per-scale Hamming) & \S\ref{sec:tensor-formulation}\\
$\mathrm{HC}^{(k)}(v)$  & Real                       & Hamming centrality of node $v$ at scale $k$ & \S\ref{sec:hc}\\
$\mathrm{HC}(v)$        & Real                       & Multi-scale Hamming centrality (uniform average over $k$) & \S\ref{sec:hc}\\
$\mathrm{HC}_{\mathrm{tensor}}(v)$ & Real           & Tensor-based HC: $\bigl\lVert \mathcal{B}[v,:,:]-\overline{\mathcal{B}}\bigr\rVert_{*}$ & \S\ref{sec:thc}\\
\addlinespace[4pt]

\multicolumn{4}{l}{\bfseries Distributions}\\[2pt]
$\mu_v^{(k)}$           & Probability mass function & Distribution of $\Ham(\mathbf{b}^{(k)}_v,\mathbf{b}^{(k)}_u)$ over $u\ne v$ & \S\ref{sec:graph-distr}\\
$\mu_G^{(k)}$           & Probability mass function & Graph-level distance distribution at scale $k$ (over unordered pairs) & \S\ref{sec:graph-distr}\\
$D_v^{(k)}$             & Multiset                  & $\{\,H(v,u): u\in V,\ u\neq v\,\}$                                    & \S\ref{sec:graph-distr}\\
$\mathcal{D}_G^{(k)}$   & Multiset                  & All pairwise distances at scale $k$ (ordered or unordered, as specified) & \S\ref{sec:graph-distr}\\
\addlinespace[4pt]

\multicolumn{4}{l}{\bfseries Functionals}\\[2pt]
$\Phi$                  & Functional                & Admissible functional on distributions (Def.~\ref{def:admissible})    & \S\ref{sec:functionals}\\
$\Psi^{(k)}(G)$         & Real                      & TV-dispersion: $\frac{1}{N}\sum_v \|\mu_v^{(k)}-\bar\mu^{(k)}\|_1$     & \S\ref{sec:functionals}\\
$\Xi^{(k)}(G)$          & Real                      & Information-divergence dispersion (e.g., R\'enyi/KL variants)         & \S\ref{sec:functionals}\\
\addlinespace[4pt]

\multicolumn{4}{l}{\bfseries Temporal Extension of HGM}\\[2pt]
$\{G_t\}_{t=1}^{T}$     & Sequence of graphs        & Temporal snapshots on a fixed vertex set $[N]$                         & \S\ref{sec:temporal-hgm}\\
$A^{(t)}$               & $N\times N$ matrix        & Adjacency at time $t$                                                  & \S\ref{sec:temporal-hgm}\\
$B^{(k,t)}$             & $N\times N$ binary matrix & Exact-$k$ reachability at time $t$                                     & \S\ref{sec:temporal-hgm}\\
$\mathbb{B}$            & $N\times N\times D\times T$ tensor & Temporal HGM tensor: $\mathbb{B}_{ijkt}=B^{(k,t)}_{ij}$          & \S\ref{sec:temporal-hgm}\\
$d_{\mathrm{dyn}}$      & Real                      & Labeled temporal metric $\|\mathbb{B}^G-\mathbb{B}^H\|_1$              & \S\ref{sec:temporal-hgm}\\
$d_{\mathrm{dyn,iso}}$  & Real                      & Orbit metric $\min_{\pi}\|\mathbb{B}^G-(\pi\!\cdot\!\mathbb{B}^H)\|_1$ & \S\ref{sec:temporal-hgm}\\
$E_k(t)$                & Integer                   & Per-time per-scale energy $\|B^{(k,t)}\|_F^2$                          & \S\ref{sec:temporal-hgm}\\

\bottomrule
\end{longtable}

\textbf{Notational Conventions:}

\begin{itemize}
\tightlist
\item
  Superscript \((k)\) denotes exact path length \(k\), not cumulative
  distance
\item
  Bold lowercase (\(\mathbf{b}\)) denotes vectors
\item
  Roman uppercase (\(\mathbf{B}\)) denotes matrices\\
\item
  Calligraphic (\(\mathcal{B}\)) denotes tensors
\item
  \(\mathbf{1}_{N \times N}\) is the all-ones matrix
\item
  \(\delta_d\) denotes point mass at \(d\)
\end{itemize}

\subsubsection{Full tensor-based Hamming
distance}\label{full-tensor-based-hamming-distance}

For any two nodes \((i,j)\) define the \textbf{integer tensorial Hamming
distance} \[
H(i,j)\;=\;\sum_{k=1}^{D}\Ham\!\bigl(\,\mathcal{B}[i,:,k],\;\mathcal{B}[j,:,k]\,\bigr)\;\in\;\{0,1,\dots,ND\}.
\] Equivalently, since the inputs are binary,
\(H(i,j)=\sum_{k=1}^{D}\bigl\|\mathcal{B}[i,:,k]-\mathcal{B}[j,:,k]\bigr\|_1\).

A \textbf{normalized} variant, \[
\overline{H}(i,j)\;=\;\frac{1}{D}\sum_{k=1}^{D}\Ham\!\bigl(\mathcal{B}[i,:,k],\mathcal{B}[j,:,k]\bigr)\;\in\;[0,N],
\] is convenient for scale-invariant plots; all distributional results
can be stated for \(H\) (integer support) or for \(\overline{H}\)
(rescaled).

Two useful settings:

\begin{itemize}
\tightlist
\item
  \textbf{Unweighted sum} (default): treats every scale equally in the
  integer \(H\).
\item
  \textbf{Geometric down-weighting}: use
  \(\overline{H}_\alpha(i,j)=\frac{\sum_{k=1}^D \alpha^{k-1}\,\Ham(\cdot)}{\sum_{k=1}^D \alpha^{k-1}}\)
  with \(0<\alpha<1\) to emphasise shorter paths.
\end{itemize}

\subsubsection{Cross-scale distance
tensor}\label{cross-scale-distance-tensor}

To capture interactions across \emph{different} scales we introduce the
fourth-order tensor

\[
\boxed{\;\mathcal{D}_{i,j,k,\ell}\;=\;
\Ham\!\bigl(\mathcal{B}[i,:,k],\;\mathcal{B}[j,:,\,\ell]\bigr)\;}
\qquad\in\mathbb{N}^{N\times N\times D\times D}.
\]

\(\mathcal{D}\) stores \textbf{every pairwise cross-scale discrepancy}
in a single object and underpins the graph-to-graph metrics developed in
\S  4.6 below.

\subsection{Path Reachability and Structural
Patterns}\label{sec:path-reachability}

The binary reachability matrix \(B^{(k)}\) encodes fundamental
structural information about the graph. Unlike the power \(A^k\), which
counts length-\(k\) \emph{walks}, \(B^{(k)}\) captures pure
shortest-path reachability at \emph{exact} distance \(k\). Two nodes can
therefore have the same number of \(k\)-walks but different exact
reachability patterns.

Consider the evolution as \(k\) increases:

\begin{itemize}
\tightlist
\item
  \(k=1\): \(B^{(1)}=A\), immediate neighborhoods.
\item
  \(k=2\): \(B^{(2)}\) flags pairs at shortest-path distance exactly
  \(2\) (second-order neighborhoods).
\item
  \(1<k<D\): intermediate scales reveal multi-scale organization.
\item
  \(k>D\) (connected \(G\)): \(B^{(k)}\equiv 0\) by exact-\(k\)
  saturation (Lemma\textasciitilde{}\ref{lem:exact-k-saturation}).
\end{itemize}

The transition from local to global connectivity through intermediate
scales \(1 < k < D\) reveals the multi-scale organization of the graph.

\begin{remark}

By construction, \(B^{(k)}=B_{\le k}-B_{\le k-1}\) with
\(B_{\le k}=\mathbf{1}\!\big[\sum_{t=1}^{k} A^t>0\big]\). In unweighted
graphs, the walk--path reduction
(Lemma\textasciitilde{}\ref{lem:walk-path}) justifies this \emph{summed}
form and, on bipartite graphs, enforces the usual parity constraint.
Hence \(B^{(k)}\) flags pairs at \emph{exact} distance \(k\), and the
slices form a disjoint decomposition of off-diagonal connectivity:

\[
\sum_{k=1}^{D} B^{(k)}=\mathbf{1}_{N\times N}-I,\qquad B^{(k)}\equiv 0\ \text{for}\ k>D.
\] This multi-scale profile \(\{B^{(1)},\dots,B^{(D)}\}\) is the basis
for per-scale analysis (e.g., the classical-MDS embedding of
\(D^{(k)}\)) used later.

\end{remark}

\section{Hamming Centrality: Foundations and Properties}\label{sec:hc}

\emph{Proofs and pointers.} Flagship results include full proofs in the
main text; longer derivations and auxiliary lemmas are deferred to
Appendix\textasciitilde{}\ref{app:lemmas} (with brief sketches inline).
Computational details are in Appendix\textasciitilde{}\ref{app:compute}.

We begin by defining the foundational concept that motivates our broader
framework: Hamming Centrality, a node-level index of structural
distinctiveness based on binary path dissimilarity. While our primary
focus is on graph-level distributions, understanding individual node
contributions provides essential intuition for the comprehensive
framework that follows.

\subsection{Definition}\label{definition}

The Hamming Graph Metrics framework treats path lengths as a spectrum of
\textbf{distance layers} analyzed separately (not an eigen-spectrum).
Whereas the power \(A^k\) counts length-\(k\) \emph{walks}, the binary
slice \(B^{(k)}\) encodes shortest-path
\emph{reachability at exact distance $k$}. Two nodes can have the same
number of \(k\)-walks yet reach different node sets at exact distance
\(k\).

Let \(G=(V,E)\) be a connected graph on \(N\) nodes, let
\(D:=\mathrm{diam}(G)\), and let \(\mathbf{b}_v^{(k)}\in\{0,1\}^N\)
denote the exact-\(k\) reachability row of \(B^{(k)}\) for node \(v\)
(see \S 2).

Then the Hamming centrality of node \(v\) at layer \(k\) is \[
\mathrm{HC}^{(k)}(v)\;=\;\frac{1}{N-1}\sum_{\substack{u\in V\\u\neq v}}\Ham\!\big(\mathbf{b}_v^{(k)},\mathbf{b}_u^{(k)}\big),
\] the average number of reachability discrepancies between \(v\) and
the rest of the graph at depth \(k\).

\noindent\emph{Convention.} We use the term \emph{structural uniqueness}
at scale \(k\) to denote this first moment, i.e.
\(\mathsf{SU}^{(k)}(v)\equiv \mathrm{HC}^{(k)}(v)\) (see
Def.\textasciitilde{}\ref{def:struct-uniqueness}).

We define the multi-scale Hamming centrality as \[
\mathrm{HC}(v)\;=\;\frac{1}{K}\sum_{k=1}^{K}\mathrm{HC}^{(k)}(v),\qquad 1\le K\le D,
\] and, more generally, a weighted version \[
\mathrm{HC}_w(v)\;=\;\sum_{k=1}^{K} w_k\,\mathrm{HC}^{(k)}(v),\quad w_k\ge 0,\ \sum_{k=1}^{K}w_k=1,
\] to emphasize early or late scales when desired. In practice, \(K=D\)
yields a complete analysis of all slices \(B^{(1)},\dots,B^{(D)}\),
whereas smaller \(K\) captures local uniqueness.

\begin{definition}[Structural uniqueness (canonical choice).]\label{def:struct-uniqueness}

\hfill\break
At scale \(k\), the structural uniqueness of a node \(v\) is the first
moment of its distance distribution: \[
\mathsf{SU}^{(k)}(v)\ :=\ \mathbb{E}_{u\ne v}\!\left[\Ham\!\big(b_v^{(k)},b^{(k)}_u\big)\right]
=\ \frac{1}{N-1}\sum_{u\ne v}\Ham\!\big(b_v^{(k)},b^{(k)}_u\big)
=\ \mathrm{HC}^{(k)}(v).
\] Graph-level uniqueness at scale \(k\) is any \emph{admissible}
functional (Def.\textasciitilde{}\ref{def:admissible}) of
\(\{\mu^{(k)}_v\}_{v\in V}\), e.g. \[
\Psi^{(k)}(G)=\frac{1}{N}\sum_{v}\bigl\|\mu^{(k)}_v-\bar\mu^{(k)}\bigr\|_1,\qquad
\bar\mu^{(k)}=\frac{1}{N}\sum_v \mu^{(k)}_v.
\] Multi-scale uniqueness aggregates over \(k\) (uniformly or with
weights).

\end{definition}

\subsection{Examples}\label{examples}

Let us examine \(\mathrm{HC}\) in basic graph topologies:

\textbf{Complete graph \(K_N\)}: At \(k=1\), all pairwise distances
equal \(2\) so \(\mathrm{HC}^{(1)}(v)=2\) for all \(v\). For \(k\ge 2\)
(exact-\(k\)), \(B^{(k)}\equiv 0\) and \(\mathrm{HC}^{(k)}(v)=0\).

\textbf{Star graph \(S_N\)}: At \(k=1\), leaf--leaf vectors are
identical (distance \(0\)), while center--leaf pairs have distance
\(N\).

\textbf{Ring graph \(C_N\)}: By rotation invariance,
\(\mathrm{HC}^{(k)}(v)\) is constant across \(v\) (yet nonzero);
per-pair distances take a small set of even values.

Note that in each case, we examine patterns at exact distance \(k\), not
cumulative patterns up to distance \(k\). This spectral separation is
what allows us to detect structural features at specific scales.

\subsection{Theoretical Properties}\label{sec:theoretical-props}

We now present several formal results characterizing Hamming Centrality.

\begin{proposition}[Zero Centrality in Complete Graphs at Saturation]

Let \(G = K_N\). Then:

For \(k = 1\): \(\mathrm{HC}^{(1)}(v) = 2\) for all \(v \in V\) For all
\(k \geq 2\): \(\mathrm{HC}^{(k)}(v) = 0\) for all \(v \in V\)

\end{proposition}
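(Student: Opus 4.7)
The plan is to unpack the two cases directly from the exact-$k$ construction, since $K_N$ has diameter $1$ and so all of the multi-scale structure collapses to the first slice.

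First, I would record the key structural fact: in $K_N$, every pair of distinct vertices is at shortest-path distance exactly $1$, so $\operatorname{dist}(i,j)=1$ for $i\ne j$ and $\operatorname{dist}(i,i)=0$. Combined with the exact-$k$ definition $B^{(k)}_{ij}=\mathbf{1}\{\operatorname{dist}(i,j)=k\}$, this immediately yields $B^{(1)}=A=\mathbf{1}_{N\times N}-I$ and $B^{(k)}\equiv 0$ for every $k\ge 2$ (the latter is a special case of the exact-$k$ saturation lemma referenced in the preliminaries, since $k>D=\operatorname{diam}(K_N)=1$).

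For the case $k=1$, I would describe each row vector explicitly: $\mathbf{b}_v^{(1)}$ has a $0$ in coordinate $v$ and a $1$ in every other coordinate. Comparing $\mathbf{b}_v^{(1)}$ and $\mathbf{b}_u^{(1)}$ for $u\ne v$, the two vectors agree on all coordinates outside $\{u,v\}$ (both equal $1$ there) and disagree on exactly the coordinates $u$ and $v$ (a $0/1$ swap in each). Hence $\Ham(\mathbf{b}_v^{(1)},\mathbf{b}_u^{(1)})=2$ for every ordered pair $u\ne v$, and substituting into the definition gives
\[
\mathrm{HC}^{(1)}(v)=\frac{1}{N-1}\sum_{u\ne v}2=2.
\]

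For the case $k\ge 2$, since $B^{(k)}\equiv 0$, every row $\mathbf{b}_v^{(k)}$ is the zero vector, so $\Ham(\mathbf{b}_v^{(k)},\mathbf{b}_u^{(k)})=0$ for all $u$, giving $\mathrm{HC}^{(k)}(v)=0$. There is no real obstacle here; the only subtlety worth flagging is making sure the exact-$k$ (not cumulative) convention is invoked, because under the cumulative convention $B_{\le k}$ would remain $\mathbf{1}_{N\times N}-I$ for all $k\ge 1$ and the claimed vanishing would be false. That convention check, together with the two-line Hamming computation above, completes the proof.
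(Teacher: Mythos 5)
Your proof is correct and follows essentially the same route as the paper's: the explicit row description of $B^{(1)}=A$ in $K_N$ giving pairwise Hamming distance $2$ at every ordered pair, and the exact-$k$ saturation $B^{(k)}\equiv 0$ for $k\ge 2$ (diameter $1$) forcing all distances to vanish. Your added remark distinguishing the exact-$k$ from the cumulative convention matches the paper's own emphasis and is a worthwhile clarification, but it does not change the argument.
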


\begin{proof}

For \(k = 1\), in \(K_N\) we have \(B^{(1)} = A\). Each node \(v\) has
reachability vector \(\mathbf{b}_v^{(1)}\) with 1s everywhere except
position \(v\). For any two nodes \(v \neq u\), their vectors differ at
exactly positions \(v\) and \(u\), giving
\(\Ham(\mathbf{b}_v^{(1)}, \mathbf{b}_u^{(1)}) = 2\). Thus
\(\mathrm{HC}^{(1)}(v) = 2\). For \(k \geq 2\), using the exact-\(k\)
convention, \(B^{(2)}=\mathbf{0}\) (and likewise for all higher \(k\)),
so all pairwise distances at \(k=2\) are \(0\) and
\(\mu_G^{(2)}=\delta_0\). We say the slice ``saturates'' at scale \(k\)
when \(B^{(k)}=\mathbf{0}\), i.e., no pair has shortest-path distance
exactly \(k\).

\end{proof}

\begin{proposition}[Star graph asymmetry (exact-k at k=1).]\label{prop:star-asymmetry}

The star's center \(c\) has \(\mathrm{HC}^{(1)}(c)=N\), while each leaf
has \(\mathrm{HC}^{(1)}(\ell)=\dfrac{N}{N-1}\). Counting all ordered
pairs shows that most leaf--leaf distances vanish, while the small
fraction involving the center has distance \(N\). This yields a
distribution supported on \(\{0, N\}\) with the weights derived in
Appendix A.1.

\end{proposition}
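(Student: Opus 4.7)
The plan is to compute $\mathrm{HC}^{(1)}$ directly from the definition by writing out the two distinct reachability vectors in the star and then evaluating Hamming distances in the three cases that arise. Label the vertices $\{c,\ell_1,\dots,\ell_{N-1}\}$ where $c$ is the center. The adjacency row at $c$ gives $\mathbf{b}_c^{(1)}$ with a $1$ at every leaf position and a $0$ at position $c$ (diagonal entries are zero). Each leaf row $\mathbf{b}_{\ell_i}^{(1)}$ has a single $1$ at position $c$ and zeros elsewhere.

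Next I would compute the three pairwise Hamming distances. For two distinct leaves $\ell_i,\ell_j$, the vectors coincide (both have their lone $1$ at $c$), so $\Ham(\mathbf{b}_{\ell_i}^{(1)},\mathbf{b}_{\ell_j}^{(1)})=0$. For the center against any leaf $\ell_i$, position $c$ contributes one disagreement and each of the $N-1$ leaf positions contributes another (in particular at $\ell_i$ itself, because the diagonal convention forces $\mathbf{b}_{\ell_i}^{(1)}$ to be $0$ there while $\mathbf{b}_c^{(1)}$ is $1$). This gives $\Ham=N$.

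Plugging into the definition,
\[
\mathrm{HC}^{(1)}(c)=\frac{1}{N-1}\sum_{i=1}^{N-1}\Ham(\mathbf{b}_c^{(1)},\mathbf{b}_{\ell_i}^{(1)})=\frac{(N-1)\cdot N}{N-1}=N,
\]
and for any leaf $\ell$,
\[
\mathrm{HC}^{(1)}(\ell)=\frac{1}{N-1}\Bigl[\Ham(\mathbf{b}_\ell^{(1)},\mathbf{b}_c^{(1)})+\sum_{\ell'\neq\ell,c}\Ham(\mathbf{b}_\ell^{(1)},\mathbf{b}_{\ell'}^{(1)})\Bigr]=\frac{N+(N-2)\cdot 0}{N-1}=\frac{N}{N-1}.
\]

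Finally, for the distributional claim I would enumerate pairs. Among the $N(N-1)$ ordered pairs, exactly $2(N-1)$ involve the center and contribute mass at $N$, while the remaining $(N-1)(N-2)$ leaf--leaf pairs contribute mass at $0$. Normalizing gives a distribution on $\{0,N\}$ with weights $\tfrac{N-2}{N}$ and $\tfrac{2}{N}$; the unordered-pair count yields the same weights. The only real pitfall is the diagonal convention when comparing $\mathbf{b}_c$ to $\mathbf{b}_{\ell_i}$ at position $\ell_i$, so I would flag that step explicitly; everything else is elementary bookkeeping.
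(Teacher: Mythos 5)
Your proof is correct and mirrors the paper's own argument in Appendix A.1: the same explicit forms of $\mathbf{b}_c^{(1)}$ and $\mathbf{b}_\ell^{(1)}$, the same case analysis giving $\Ham = N$ for center--leaf and $0$ for leaf--leaf, and the same ordered-pair count yielding the weights $\tfrac{N-2}{N}$ at $0$ and $\tfrac{2}{N}$ at $N$. Your explicit flag of the diagonal-zero convention at position $\ell_i$ is a nice touch but does not change the route.
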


\begin{proposition}[Upper Bound]

For all graphs \(G\), nodes \(v\), and any step \(k\):
\(\mathrm{HC}^{(k)}(v) \leq N.\)

\end{proposition}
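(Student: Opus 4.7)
The plan is to derive the bound directly from the definition of $\mathrm{HC}^{(k)}(v)$ as a normalized sum of Hamming distances between binary vectors of length $N$. The argument reduces to a coordinate-wise bound on each summand followed by averaging; no structural information about the graph beyond the ambient dimension is needed.

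First, I would observe that for any two vectors $x,y\in\{0,1\}^{N}$ we have $\Ham(x,y)=\sum_{i=1}^{N}|x_i-y_i|\le N$, since each coordinate contributes at most one unit (with equality only if $x$ and $y$ disagree in every coordinate). Applying this to the binary rows of $B^{(k)}$ yields $\Ham\!\bigl(\mathbf{b}_v^{(k)},\mathbf{b}_u^{(k)}\bigr)\le N$ for every pair $u\neq v$ and every scale $k$.

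Next, I would substitute this pointwise bound into the definition
\[
\mathrm{HC}^{(k)}(v)=\frac{1}{N-1}\sum_{u\neq v}\Ham\!\bigl(\mathbf{b}_v^{(k)},\mathbf{b}_u^{(k)}\bigr),
\]
bounding each of the $N-1$ summands by $N$ to obtain $\mathrm{HC}^{(k)}(v)\le \frac{1}{N-1}\cdot(N-1)\cdot N = N$.

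There is no real obstacle here; the proof is a one-line calculation. The only commentary worth adding, in case the author wants a tighter statement elsewhere, is that the bound is typically not sharp: the diagonal constraint $B^{(k)}_{vv}=B^{(k)}_{uu}=0$ together with the symmetry $B^{(k)}_{uv}=B^{(k)}_{vu}$ couples the entries at positions $u$ and $v$ in the two rows, forcing those two coordinates to contribute either $0$ or $2$ in tandem. Consequently $\Ham\!\bigl(\mathbf{b}_v^{(k)},\mathbf{b}_u^{(k)}\bigr)\in\{0,2,\dots,N\}$ is always even (in fact, equals $N$ iff $B^{(k)}_{uv}=1$ and the two rows disagree at every other coordinate), so refinements such as $\mathrm{HC}^{(k)}(v)\le N$ with equality characterizations could be developed, but they are unnecessary for the stated upper bound.
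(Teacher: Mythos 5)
Your proof is correct and follows essentially the same route as the paper: bound each pairwise Hamming distance between vectors in $\{0,1\}^N$ by $N$ and average over the $N-1$ summands (the paper's proof is the same one-liner, adding that the worst case $N$ is attained, e.g.\ by $K_{m,n}$ at $k=1$). One caution about your closing commentary: the diagonal/symmetry coupling only forces the \emph{two} coordinates $u,v$ to contribute $0$ or $2$ jointly, so $\Ham\bigl(\mathbf{b}_v^{(k)},\mathbf{b}_u^{(k)}\bigr)$ need not be even in general --- for instance, in the star $S_N$ at $k=1$ a center--leaf pair has $\Ham\bigl(\mathbf{b}_c^{(1)},\mathbf{b}_\ell^{(1)}\bigr)=N$, which is odd when $N$ is odd --- though this side remark does not affect the validity of your upper-bound argument.
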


\begin{proof}

The Hamming distance between any two binary vectors in \(\{0,1\}^N\) is
at most \(N\). No additional constraint forces a zero at the same
coordinate for both vectors, so the tight worst case is \(N\) (e.g.,
\(K_{m,n}\) at \(k=1\) gives \(\Ham=m+n=N\) across parts).

\end{proof}

\subsubsection{\texorpdfstring{Proof of Proposition
\ref{prop:monotonicity-beyond-diameter}}{Proof of Proposition }}\label{app:monotonicity-proof}

\begin{proof}

Let \(G\) be connected with diameter \(D\). Then for all \(k\ge D\) and
all \(v\), \[
\mathrm{HC}^{(k+1)}(v)\ \le\ \mathrm{HC}^{(k)}(v),
\] with equality for every \(k\ge D+1\) (both sides \(=0\)).

\end{proof}

\subsection{Tensor-Based Hamming Centrality}\label{sec:thc}

Let

\[
\overline{\mathcal{B}}\;:=\;\frac{1}{N}\sum_{u=1}^{N}\mathcal{B}[u,:,:]
\]

denote the \textbf{mean reachability slab}. The \textbf{tensor Hamming
centrality} of a node \(v\) is

\[
\boxed{\;
\mathrm{HC}_{\mathrm{tensor}}(v)\;=\;
\bigl\lVert \mathcal{B}[v,:,:]\;-\;\overline{\mathcal{B}}\bigr\rVert_{*}\;},
\]

where \(\lVert\cdot\rVert_{*}\) is any admissible tensor norm
(Frobenius, weighted Hamming, or an \(\ell_{2,1}\) mixed norm). The
original slice-wise centrality \(\mathrm{HC}^{(k)}(v)\) is recovered by
choosing \(\lVert X\rVert_{*}=\Ham(X_{:,k})\) and fixing \(k\).

\section{Hamming Graph Metrics: Tensor Formulation and
Properties}\label{sec:tensoral-preliminaries}

Building on the node-level foundation (where
\(\mathsf{SU}^{(k)}(v)=\mathrm{HC}^{(k)}(v)\)), we now pass to the graph
level and the full family \(\{\mu^{(k)}_G\}_{k=1}^D\), which strictly
contains HC as the first-moment special case.

\paragraph{Soundness at a glance.}\label{soundness-at-a-glance.}

For labeled graphs on \([N]\), the \emph{tensor Hamming distance}
\(d_{\text{ten}}(G,H) = \|\mathcal{B}_G - \mathcal{B}_H\|_1\) is a true
metric and is permutation-invariant:
\(d_{\text{ten}}(\pi(G),\pi(H)) = d_{\text{ten}}(G,H)\). A normalized
form
\(\bar{d}_{\text{ten}} = \|\cdot\|_1/\bigl(N(N-1)D\bigr) \in [0,1]\)
aids scaling. For unlabeled comparison, one may canonize graphs or use
the alignment variant
\(d_{\text{iso}}([G],[H]) = \min_\pi\|\mathcal{B}_G - \mathcal{B}_{\pi(H)}\|_1\)
(metric on isomorphism classes; exponential worst-case).

\subsection{Graph-Level Distributions and
Functionals}\label{sec:graph-distr}

We now pass from node-wise distances to graph-level distributions. Let
\(G=(V,E)\) be a connected graph on ( \(N=|V|\) ) vertices and recall
the reachability tensor ( \(\mathcal{B}\in\{0,1\}^{N\times N\times D}\)
) from \S  2.1.1. For any two nodes ( \(v,u\in V\) ) define the
\textbf{multi-scale (tensorial) Hamming distance}

\[
H(v,u)\;=\; \bigl\lVert \mathcal{B}[v,:,:]-\mathcal{B}[u,:,:]\bigr\rVert_H,
\]

where ( \(\lVert\cdot\rVert_H\) ) is the weighted tensor Hamming norm
defined above. When a single slice ( \(k\) ) is required we simply write
( \(H^{(k)}(v,u)=\Ham(\mathbf{b}_v^{(k)},\mathbf{b}_u^{(k)})\) ).

\subsubsection{Node-level distance multiset and
distribution}\label{node-level-distance-multiset-and-distribution}

For a fixed node ( \(v\in V\) ) the empirical \textbf{tensorial distance
multiset}

\[
D_v=\bigl\{\,H(v,u)\;:\;u\in V,\;u\neq v\bigr\}
\]

collects the dissimilarities between ( \(v\) ) and every other node
across \emph{all} path scales simultaneously. Normalising by ( \(N-1\) )
yields the probability mass function

\[
\mu_v(d)\;=\;\frac{1}{N-1}\,\bigl|\{\,u\neq v : H(v,u)=d \}\bigr|,
\qquad d\in\{0,\ldots,ND\}.
\]

\begin{remark}

Setting the norm weights to ( \(w_k=\delta_{k\ell}\) ) recovers the
slice-specific distribution ( \(\mu_v^{(\ell)}\) ) used in the original
formulation, so all node-level results derived there remain valid as
special cases.

\end{remark}

\begin{remark}[HC as a special case of HGM.]

Choosing the admissible functional
\(\Phi(\mu)=\mathbb{E}_{d\sim\mu}[d]\) recovers
\(\mathsf{SU}^{(k)}(v)=\mathrm{HC}^{(k)}(v)\) and its multi-scale
average. Thus HC is the first-moment summary within the broader
distributional framework of HGM.

\end{remark}

From a modeling standpoint, node-wise Hamming centrality
\(\mathrm{HC}^{(k)}(v)\) explains \textbf{how} a single vertex differs
from its peers at a fixed distance layer \(k\). Many global questions,
however, depend not on a single node but on the \textbf{distribution} of
these differences across all node pairs. This motivates passing from
\(\mathrm{HC}^{(k)}(v)\) to the graph-level family
\(\{\mu_G^{(k)}\}_{k=1}^D\), which records the full spectrum of
per-scale disagreements and supports permutation-invariant summaries and
comparison between graphs. The next section formalizes these
distributions and their admissible functionals.

\subsubsection{Graph-level distance multiset and
distribution}\label{graph-level-distance-multiset-and-distribution}

Aggregating over all ordered pairs gives the \textbf{global multiset}

\[
D_G=\bigl\{\,H(v,u)\;:\;v,u\in V,\;v\neq u\bigr\},
\]

which contains ( \(N(N-1)\) ) values and encodes the complete
multi-scale dissimilarity structure of ( \(G\) ). Its normalised
histogram is the \textbf{tensorial distance distribution}

\[
\mu_G(d)\;=\;\frac{1}{N(N-1)}\,\bigl|\{(v,u):v\neq u,\;H(v,u)=d\}\bigr|.
\]

For analyses that require scale resolution we still track the family
\(\{\mu_G^{(k)}\}_{k=1}^{D}\) obtained from the frontal slices
\(\mathcal{B}_{:,:,k}\).

\begin{remark}

Throughout, distributions \(\mu_G^{(k)}\) are formed over
\textbf{unordered} pairs \(\{u<v\}\), whereas energies
\(E_k(G)=\|B^{(k)}_G\|_F^2\) count \textbf{ordered} pairs. Thus
\(E_k(G)/2\) equals the number of unordered pairs at distance \(k\), and
normalizations reflect this choice.

\end{remark}

\subsection{Multi-Scale Hamming
Profile}\label{multi-scale-hamming-profile}

The multi-scale profile \(\{\mu^{(k)}_G\}_{k=1}^D\) can be understood as
analyzing slices of the connectivity tensor \(\mathcal{B}\). Each slice
\(\mathcal{B}_{:,:,k}\) yields a distribution \(\mu^{(k)}_G\), and the
complete tensor encodes all structural information without premature
aggregation.

This connects to classical spectral graph theory: while the heat kernel
\(e^{\alpha A} = \sum_{k=0}^{\infty} \frac{\alpha^k}{k!} A^k\)
aggregates all scales with exponential weighting, our framework
maintains full resolution by treating each tensor slice independently.

\begin{theorem}\label{thm:multi-scale-convergence}

Let \(G\) be connected with diameter \(D\). Then \(B^{(k)}\equiv 0\) for
all \(k\ge D+1\). Consequently, for \(k\ge D+1\) every row \(b_v^{(k)}\)
is the zero vector and \(\mu_G^{(k)}=\delta_0\). In particular, there
exists \(k_0\le D+1\) such that for \(k\ge k_0\) the slice-wise means
are nonincreasing and equal \(0\) for all \(k\ge D+1\).

\end{theorem}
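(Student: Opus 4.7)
The strategy is to derive everything directly from the exact-$k$ definition $B^{(k)} = B_{\le k} - B_{\le k-1}$ together with the pointwise identity $B^{(k)}_{ij} = \mathbf{1}\{\operatorname{dist}(i,j) = k\}$ recorded in Section~2. First I would invoke connectedness together with the definition of diameter to conclude that every off-diagonal pair $(i,j)$ satisfies $\operatorname{dist}(i,j) \in \{1,\ldots,D\}$ and that $\operatorname{dist}(i,i) = 0$; hence no entry of $B^{(k)}$ can equal $1$ once $k \ge D+1$, giving $B^{(k)} \equiv \mathbf{0}$ throughout that range. Equivalently, the cumulative reachability saturates at $B_{\le D} = \mathbf{1}_{N \times N} - I$, so $B_{\le k} - B_{\le k-1} = 0$ for all $k \ge D+1$.

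With the tensor-slice claim established, the row-level and distributional consequences are immediate. For each $v$, the exact-$k$ reachability row $\mathbf{b}_v^{(k)}$ is the $v$-th row of $B^{(k)}$, which is the zero vector, so $\Ham(\mathbf{b}_v^{(k)}, \mathbf{b}_u^{(k)}) = 0$ for every $u \neq v$, and therefore $\mu_G^{(k)} = \delta_0$ for every $k \ge D+1$, as claimed.

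For the final clause I would take $k_0 = D$ (which satisfies $k_0 \le D+1$). On $\{k \ge D+1\}$ the slice-wise means $\mathrm{HC}^{(k)}(v)$ equal $0$ as just shown, and the single-step bridge from $k = D$ to $k = D+1$ is covered by Proposition~\ref{prop:monotonicity-beyond-diameter}, which gives $\mathrm{HC}^{(k+1)}(v) \le \mathrm{HC}^{(k)}(v)$ for all $k \ge D$. Chaining these yields the nonincreasing behavior from $k_0$ onward, together with the asserted zero tail.

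The only real subtlety is keeping the exact-$k$ tensor $B^{(k)}$ conceptually separate from the cumulative matrix $B_{\le k}$: monotone \emph{saturation} is a property of $B_{\le k}$, whereas the theorem is phrased in terms of the \emph{differences} $B^{(k)}$, which behave oppositely---they vanish precisely once saturation is reached. Once that bookkeeping is in place there is no substantive obstacle, and no walk-counting or spectral argument is needed; the result is essentially a packaging of the exact-$k$ saturation lemma together with the zero-past-diameter monotonicity already recorded.
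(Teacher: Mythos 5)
Your proposal is correct and follows essentially the same route as the paper's proof: the diameter bound forces $B^{(k)}\equiv 0$ for $k\ge D+1$, hence zero rows, zero Hamming distances, $\mu_G^{(k)}=\delta_0$, and taking $k_0=D$ gives the nonincreasing tail (the paper phrases the $k=D$ to $k=D+1$ step directly, which is the same content as your appeal to Proposition~\ref{prop:monotonicity-beyond-diameter}). No gaps.
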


\begin{proof}

By definition of diameter, every ordered pair \((i,j)\) has
shortest-path distance at most \(D\). Hence no pair has exact distance
\(k\) once \(k\ge D+1\), i.e., \(B^{(k)}\equiv 0\) for all \(k\ge D+1\).
Thus \(b_v^{(k)}=\mathbf{0}\) for each \(v\) and
\(\Ham(b_v^{(k)},b^{(k)}_u)=0\) for all \(u\), giving
\(\mu_G^{(k)}=\delta_0\) for \(k\ge D+1\). Taking \(k_0:=D\) yields
nonincreasing slice means for all \(k\ge k_0\) (they drop to \(0\) at
\(k=D+1\)).

\end{proof}

\subsection{Individual Node
Contributions}\label{individual-node-contributions}

While our primary focus is the graph-level distribution, individual
nodes contribute differently to this distribution. For a node \(v\),
define its contribution to the distribution at scale \(k\) as:

\[\mathrm{HC}^{(k)}(v) = \frac{1}{N - 1} \sum_{u \neq v} \Ham\!\left( \mathbf{b}_v^{(k)}, \mathbf{b}_u^{(k)} \right)\]

This measures the average dissimilarity between node \(v\)'s
reachability pattern and those of all other nodes. We can similarly
define the average across tensor slices:

\[\mathrm{HC}(v) = \frac{1}{K} \sum_{k=1}^{K} \mathrm{HC}^{(k)}(v)\]

where \(K \leq \mathrm{diam}(G)\) is the maximum path length of
interest.

However,\(\mathrm{HC}(v)\) is merely the first moment of node \(v\)'s
contribution to the distribution. The full distribution \(\mu_v^{(k)}\)
of distances from \(v\) contains much richer information about \(v\)'s
structural role.

\textbf{Note that we examine individual node contributions only to
better understand the graph-level distribution \(\mu_G^{(k)}\), which
remains our primary object of study.}

\subsection{Examples}\label{examples-1}

Let us examine how these distributions manifest in basic graph
topologies:

\textbf{Complete graph \(K_N\)}: At \(k=1\), all pairwise distances
equal \(2\) so \(\mu_G^{(1)}=\delta_2\). For \(k\ge2\),
\(B^{(k)}\equiv 0\) and \(\mu_G^{(k)}=\delta_0\).

\textbf{Star graph \(S_N\):} At \(k=1\), the central hub's reachability
vector differs from each leaf's vector in all \(N\) positions, while any
two leaves have identical vectors (distance \(0\)). This creates a
distribution with most mass at distance \(0\) (leaf--leaf comparisons)
and mass \(2/N\) at distance \(N\) coming from the \(2(N-1)\) ordered
center--leaf pairs.

\textbf{Ring graph \(C_N\)}: Due to rotation invariance, all nodes
contribute equally to the distribution, but the distribution itself is
non-trivial. At \(k=1\), distances concentrate on a small set of even
values (e.g., \(2,4\); the exact set depends on \(N\)). The distribution
evolves predictably with \(k\).

\textbf{Path graph \(P_N\)}: Unlike the ring, the path graph lacks
rotational symmetry. End nodes contribute differently than central
nodes, creating a more complex distribution that reflects the linear
structure.

These examples highlight how Hamming Graph Metrics capture structural
patterns rather than just topological properties.

\subsection{Structural--Dissimilarity
Functionals}\label{sec:functionals}

We turn the per-node/per-scale distance distributions into scalar
descriptors via \emph{admissible} functionals
(Def.\textasciitilde{}\ref{def:admissible}) at the node and graph
levels.

\begin{definition}[Admissible functionals.]\label{def:admissible}

\hfill\break
Let \(\mathcal{P}(\{0,\dots,M\})\) be the set of probability measures on
a finite alphabet (here \(M=N-1\)). A map
\(\Phi:\mathcal{P}(\{0,\dots,M\})\to\mathbb{R}\) is \textbf{admissible}
if:

\begin{enumerate}
\def\labelenumi{(\roman{enumi})}
\tightlist
\item
  \textbf{Permutation invariance:} \(\Phi(\mu)\) depends only on the
  measure (relabeling the support does not change \(\Phi\));
\item
  \textbf{TV--continuity:} \(\Phi\) is continuous in the \(\ell_1\)
  (total-variation) topology on \(\mathcal{P}\);
\item
  \textbf{Finite on extremals:} \(\Phi(\delta_x)\) is finite for each
  point mass \(\delta_x\).\\
  When quantitative stability is needed, assume a TV--Lipschitz constant
  \(L_\Phi\) so that \(|\Phi(\mu)-\Phi(\nu)|\le L_\Phi\|\mu-\nu\|_1\)
  for all \(\mu,\nu\in\mathcal{P}\).
\end{enumerate}

\end{definition}

\begin{remark}

Typical admissible choices include Shannon entropy, Rényi entropies
(\(\alpha>0,\alpha\neq1\)), total-variation dispersion, Wasserstein-1 on
\(\{0,\dots,M\}\) (with fixed ground metric), and Gini-type indices.

\end{remark}

\begin{theorem}\label{thm:phi-node}

Let \(D:=\mathrm{diam}(G)\). For any admissible
\(\Phi:\mathcal{P}(\{0,\dots,N-1\})\to\mathbb{R}\), define \[
\Phi_v\ :=\ \frac{1}{D}\sum_{k=1}^{D}\Phi\!\big(\mu^{(k)}_v\big),
\] where \(\mu^{(k)}_v\) is the empirical distribution of
\(\{\Ham(b_v^{(k)},b^{(k)}_u):u\in V,\ u\ne v\}\). Then:

\begin{enumerate}
\def\labelenumi{\arabic{enumi})}
\tightlist
\item
  (\textbf{Automorphism invariance}) For any graph automorphism
  \(\sigma\), \(\Phi_{\sigma(v)}=\Phi_v\) for all \(v\).\\
\item
  (\textbf{TV--continuity}) If \(\Phi\) is TV--Lipschitz with constant
  \(L_\Phi\), then for two graphs \(G,H\) on the same vertex set, \[
  \big|\Phi_v(G)-\Phi_v(H)\big|\ \le\ \frac{L_\Phi}{D}\sum_{k=1}^{D}\big\|\mu^{(k)}_v(G)-\mu^{(k)}_v(H)\big\|_1.
  \]
\end{enumerate}

\end{theorem}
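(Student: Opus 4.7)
The plan is to treat the two claims separately: (1) is a symmetry argument that reduces to $\sigma$-invariance of each exact-$k$ slice $B^{(k)}$, while (2) is a direct application of the triangle inequality combined with the TV-Lipschitz hypothesis on $\Phi$. I expect neither part to present a genuine obstacle; the only care-point is the bookkeeping of how an automorphism acts simultaneously on the row index $v$ and on the coordinate indices of the corresponding reachability vector.

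For part (1), I would start from the observation that $\sigma \in \mathrm{Aut}(G)$ satisfies $P_\sigma^\top A P_\sigma = A$, which propagates to every walk matrix $A^k$, hence to the thresholded cumulative matrices $B_{\le k}$ and to the differences $B^{(k)} = B_{\le k} - B_{\le k-1}$. Concretely, $B^{(k)}_{\sigma(i)\sigma(j)} = B^{(k)}_{ij}$ for all $i,j,k$, which says that the row $\mathbf{b}^{(k)}_{\sigma(v)}$ is the $\sigma$-permutation of $\mathbf{b}^{(k)}_v$ in the sense $(\mathbf{b}^{(k)}_{\sigma(v)})_{\sigma(u)} = (\mathbf{b}^{(k)}_v)_u$. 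Since Hamming distance is invariant under jointly permuting the coordinates of both arguments, $\Ham(\mathbf{b}^{(k)}_{\sigma(v)}, \mathbf{b}^{(k)}_{\sigma(u)}) = \Ham(\mathbf{b}^{(k)}_v, \mathbf{b}^{(k)}_u)$, and as $u$ ranges over $V \setminus \{v\}$ the image $\sigma(u)$ ranges over $V \setminus \{\sigma(v)\}$. Consequently the multiset defining $\mu^{(k)}_{\sigma(v)}$ equals the multiset defining $\mu^{(k)}_v$, so the two empirical measures are identical (not merely equal up to relabeling of support). Averaging $\Phi(\cdot)$ over $k=1,\dots,D$ then yields $\Phi_{\sigma(v)} = \Phi_v$.

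For part (2), I would write $\Phi_v(G) - \Phi_v(H) = \frac{1}{D}\sum_{k=1}^{D}\bigl[\Phi(\mu^{(k)}_v(G)) - \Phi(\mu^{(k)}_v(H))\bigr]$, pull the absolute value inside by the triangle inequality, and apply the TV-Lipschitz hypothesis $|\Phi(\mu) - \Phi(\nu)| \le L_\Phi \|\mu - \nu\|_1$ scale by scale; pulling $L_\Phi/D$ out of the sum produces the stated bound. The only subtlety worth flagging is that $\mu^{(k)}_v(G)$ and $\mu^{(k)}_v(H)$ both live on the shared finite alphabet $\{0,\dots,N-1\}$, so the $\ell_1$ norm of their difference and the Lipschitz hypothesis apply with no ambient-space adjustments. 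On balance, the theorem is essentially an automorphism-invariance-plus-Lipschitz-pushthrough result, with the nontrivial content already packaged inside Definition~\ref{def:admissible} and the exact-$k$ construction of $B^{(k)}$.
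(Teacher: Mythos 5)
Your proposal is correct and follows essentially the same route as the paper's proof: part (1) via the conjugation identity $P_\sigma B^{(k)} P_\sigma^\top = B^{(k)}$ (which you derive from invariance of $A$ and the exact-$k$ construction), joint-permutation invariance of the Hamming distance, and the bijection $u\mapsto\sigma(u)$ giving $\mu^{(k)}_{\sigma(v)}=\mu^{(k)}_v$; part (2) via the triangle inequality and scale-by-scale application of the TV--Lipschitz hypothesis. The paper's proof is just a terser version of the same argument, so no further comparison is needed.
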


\begin{proof}

\begin{enumerate}
\def\labelenumi{\arabic{enumi})}
\tightlist
\item
  Let \(P_\sigma\) be the permutation matrix of \(\sigma\). For every
  \(k\), \(B^{(k)}(\sigma(G))=P_\sigma B^{(k)}(G)P_\sigma^\top\), so
  \(\Ham\!\big(b^{(k)}_{\sigma(v)},b^{(k)}_{\sigma(u)}\big)=\Ham\!\big(b_v^{(k)},b^{(k)}_u\big)\)
  for all \(u\), hence \(\mu^{(k)}_{\sigma(v)}=\mu^{(k)}_v\) and
  \(\Phi(\mu^{(k)}_{\sigma(v)})=\Phi(\mu^{(k)}_v)\); averaging over
  \(k\) gives \(\Phi_{\sigma(v)}=\Phi_v\).\\
\item
  Apply TV--Lipschitzness to \(\mu^{(k)}_v(G)\) vs \(\mu^{(k)}_v(H)\)
  and average over \(k\).
\end{enumerate}

\end{proof}

\begin{theorem}\label{thm:phi-graph}

Let \(D:=\mathrm{diam}(G)\). Define \[
\overline{\Phi}(G)\ :=\ \frac{1}{D}\sum_{k=1}^{D}\Phi\!\big(\mu_G^{(k)}\big),
\] where \(\mu_G^{(k)}\) is the empirical distribution over
\emph{unordered} pairs \(\{u<v\}\) of the distances
\(\Ham(b^{(k)}_u,b_v^{(k)})\). Then \(\overline{\Phi}\) is invariant
under vertex relabeling, and if \(\Phi\) is TV--Lipschitz with constant
\(L_\Phi\), \[
\big|\overline{\Phi}(G)-\overline{\Phi}(H)\big|\ \le\ \frac{L_\Phi}{D}\sum_{k=1}^{D}\big\|\mu^{(k)}_{G}-\mu^{(k)}_{H}\big\|_1.
\]

\end{theorem}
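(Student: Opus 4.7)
The statement parallels Theorem~\ref{thm:phi-node} but replaces the per-node distribution $\mu^{(k)}_v$ by the graph-wide pair distribution $\mu^{(k)}_G$. I therefore plan to recycle the two ingredients from that theorem at the graph level: first, that $\mu^{(k)}_G$ is determined only by the multiset of pairwise Hamming distances and hence is manifestly label-invariant; second, that TV-Lipschitzness of $\Phi$ lifts from each slice to the average via the triangle inequality.

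For relabeling invariance, let $\sigma$ be a vertex permutation with matrix $P_\sigma$. As already used in the proof of Theorem~\ref{thm:phi-node}(1), $B^{(k)}(\sigma(G)) = P_\sigma\,B^{(k)}(G)\,P_\sigma^\top$, so the row $b^{(k)}_{\sigma(v)}$ in $\sigma(G)$ is obtained from $b^{(k)}_v$ in $G$ by the common coordinate permutation $\sigma$. Since $\Ham$ is invariant under any common coordinate permutation of its two arguments, $\Ham(b^{(k)}_{\sigma(u)},b^{(k)}_{\sigma(v)}) = \Ham(b^{(k)}_u,b^{(k)}_v)$. The induced map on unordered pairs $\{u<v\}\mapsto\{\sigma(u),\sigma(v)\}$ is a bijection, so the pair multiset is preserved and $\mu^{(k)}_{\sigma(G)} = \mu^{(k)}_G$ as probability measures on $\{0,\ldots,N\}$. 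Admissibility of $\Phi$ (property (i) of Definition~\ref{def:admissible}) then gives $\Phi(\mu^{(k)}_{\sigma(G)}) = \Phi(\mu^{(k)}_G)$ for every $k$, and averaging over $k$ yields $\overline{\Phi}(\sigma(G)) = \overline{\Phi}(G)$.

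For the stability estimate, I would apply the triangle inequality to the average and then the TV-Lipschitz hypothesis slice by slice:
$$
|\overline{\Phi}(G)-\overline{\Phi}(H)| \;\le\; \frac{1}{D}\sum_{k=1}^{D}\bigl|\Phi(\mu^{(k)}_G)-\Phi(\mu^{(k)}_H)\bigr| \;\le\; \frac{L_\Phi}{D}\sum_{k=1}^{D}\bigl\|\mu^{(k)}_G-\mu^{(k)}_H\bigr\|_1,
$$
which is exactly the claimed inequality. The main technical subtlety to flag is the choice of $D$ when $\mathrm{diam}(G)\neq\mathrm{diam}(H)$: the natural convention is $D:=\max(\mathrm{diam}(G),\mathrm{diam}(H))$, and by Theorem~\ref{thm:multi-scale-convergence} both $\mu^{(k)}_G$ and $\mu^{(k)}_H$ collapse to $\delta_0$ beyond their respective diameters, so the padded slices contribute $0$ to both $\overline{\Phi}$ values and to the right-hand side. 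With this convention the argument is a clean graph-level transcription of Theorem~\ref{thm:phi-node}(2), and I do not anticipate any substantive obstacle beyond confirming that the unordered-pair normalization used to define $\mu^{(k)}_G$ is the same for $G$ and $H$, which holds because both live on a common vertex set of size $N$.
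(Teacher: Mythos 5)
Your proof is correct and follows the same route the paper takes (it simply transcribes the argument for Theorem~\ref{thm:phi-node} to the pair distribution $\mu^{(k)}_G$: permutation covariance $B^{(k)}(\sigma(G))=P_\sigma B^{(k)}(G)P_\sigma^\top$ plus bijectivity on unordered pairs for invariance, then the triangle inequality and slice-wise TV--Lipschitzness for the bound). One tiny imprecision: when padding to a common $D$, the extra slices contribute $\Phi(\delta_0)$ (finite by admissibility) to each average rather than $0$, but these terms cancel in the difference and contribute $0$ to the right-hand side, so the stated inequality is unaffected.
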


\emph{Random-graph separation (sketch).} For fixed \(p\in(0,1)\), two
independent \(G,H\sim G(n,p)\) satisfy
\(\overline{\Phi}(G)\ne\overline{\Phi}(H)\) with probability \(\to1\) as
\(n\to\infty\) for any non-constant admissible \(\Phi\); see
Proposition\textasciitilde{}\ref{prop:gnp-sep},(a)
(Appendix\textasciitilde{}\ref{app:random-sep}).

Rather than fix a statistic \emph{a priori}, any admissible functional
\(\Phi\) can serve as a uniqueness descriptor. Important examples
include:

\begin{table}[t]
\centering
\caption{Examples of admissible functionals on per-scale distance distributions.}
\label{tab:functionals}
\begin{tabularx}{\linewidth}{@{}l X X@{}}
\toprule
\textbf{Functional} & \textbf{Definition on $\mu$} & \textbf{Interpretation} \\
\midrule
Expectation $\mathbb{E}_\mu[f]$ & $\displaystyle\sum_d f(d)\,\mu(d)$ & Recovers linear stats (e.g.\ classical HC with $f(d)=d$) \\
Cumulant GF $K_\mu(t)$ & $\log \mathbb{E}_\mu[e^{td}]$ & Generates all cumulants \\
\Renyi{} entropy $H_\alpha(\mu)$ & $\dfrac{1}{1-\alpha}\log\!\sum_d \mu(d)^\alpha$ & Measures spread/uncertainty \\
Spectral radius of moment matrix & $\rho\bigl(M_{kl}=\sum_d d^{k+l}\mu(d)\bigr)$ & Governs tail heaviness \& concentration \\
\bottomrule
\end{tabularx}
\end{table}

\paragraph{Per-scale spectral analysis (classical
MDS)}\label{per-scale-spectral-analysis-classical-mds}

For each \(k\), form \(D^{(k)} \in \mathbb{R}^{N \times N}\) with
\(D^{(k)}_{uv} = \Ham\!\big(b_u^{(k)}, b_v^{(k)}\big)\). Let
\(J = I - \tfrac{1}{N}\mathbf{1}\mathbf{1}^\top\) and define the
(double-centered) Gram matrix \[
G^{(k)} \;=\; -\tfrac{1}{2}\, J\, D^{(k)}\, J ,
\] where, for binary vectors, \(\Ham(x,y)=\|x-y\|_2^2\), so \(D^{(k)}\)
is already a squared-distance matrix (no elementwise squaring). With the
eigendecomposition \(G^{(k)} = Q^{(k)} \Lambda^{(k)} (Q^{(k)})^\top\),
the spectral coordinates are \[
X^{(k)} \;=\; Q^{(k)}_{+}\, \big(\Lambda^{(k)}_{+}\big)^{1/2},
\] and the total explained variance is
\(\operatorname{tr}\!\big(\Lambda^{(k)}_{+}\big)\).

\subsubsection{Tensor Fingerprints via Unfolding Spectra
(Permutation-Invariant,
Non-Metric)}\label{tensor-fingerprints-via-unfolding-spectra-permutation-invariant-non-metric}

We define a graph fingerprint from the exact-\(k\) tensor
\(\mathcal{B}\in\{0,1\}^{N\times N\times D}\) that is invariant to
vertex relabeling, stable to small perturbations, and empirically
discriminative.

\paragraph{Mode spectra and per-scale energies.}

Let \(\mathcal{B}_{(m)}\) denote the mode-\(m\) matricization
(unfolding) of \(\mathcal{B}\) \emph{(notation as in \cite{Kolda2009})}:
\[
\mathcal{B}_{(1)}\in\mathbb{R}^{N\times (ND)},\quad
\mathcal{B}_{(2)}\in\mathbb{R}^{N\times (ND)},\quad
\mathcal{B}_{(3)}\in\mathbb{R}^{D\times (N^2)}.
\] Let \(\sigma^{(m)}=(\sigma^{(m)}_1\ge \cdots)\) be the singular
values of \(\mathcal{B}_{(m)}\). Define the per-scale energies (ordered
pairs at exact distance \(k\)) \[
E_k(G)\;:=\;\big\|B^{(k)}_G\big\|_F^2
\;=\;\sum_{i,j} \big(B^{(k)}_{G}\big)_{ij}.
\] The \emph{HGM tensor fingerprint} of \(G\) is \[
\mathsf{FP}(G)\ :=\ \big(\,\sigma^{(1)},\ \sigma^{(2)},\ \sigma^{(3)},\ (E_1(G),\dots,E_D(G))\,\big).
\]

\begin{proposition}

\textbf{Permutation invariance.} If \(H=\pi(G)\) for a relabeling
\(\pi\) with permutation matrix \(P\), then
\(\mathsf{FP}(H)=\mathsf{FP}(G)\).

\end{proposition}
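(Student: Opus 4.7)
The plan is to reduce the claim to three elementary invariance facts: (a) the exact-$k$ reachability slices transform by simultaneous row/column permutation under relabeling; (b) the Frobenius norm is invariant under such transformations; (c) each mode-$m$ unfolding of $\mathcal{B}$ is, up to left/right multiplication by permutation matrices (and Kronecker products thereof), unchanged under relabeling, so its singular values coincide.

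First I would record how $\mathcal{B}$ transforms under a relabeling $\pi$ with permutation matrix $P$. Because shortest-path distance is preserved by $\pi$, one has $B^{(k)}_H = P\,B^{(k)}_G\,P^{\top}$ for every $k=1,\ldots,D$, while the scale index $k$ is untouched. In tensor form this is the multilinear action
\[
\mathcal{B}_H \;=\; \mathcal{B}_G \times_1 P \times_2 P \times_3 I_D .
\]
From this the per-scale energies are immediate: $E_k(H)=\|P B^{(k)}_G P^{\top}\|_F^2 = \|B^{(k)}_G\|_F^2 = E_k(G)$, since $P$ is orthogonal. In particular the full vector $(E_1,\ldots,E_D)$ is preserved entrywise (not just up to reordering), because mode~$3$ is not touched.

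Next I would handle the singular-value tuples $\sigma^{(m)}$ using the Kolda--Bader unfolding identity. For the multilinear product above,
\[
\mathcal{B}_{H,(1)} = P\,\mathcal{B}_{G,(1)}\,(I_D\otimes P)^{\top},\qquad
\mathcal{B}_{H,(2)} = P\,\mathcal{B}_{G,(2)}\,(I_D\otimes P)^{\top},\qquad
\mathcal{B}_{H,(3)} = \mathcal{B}_{G,(3)}\,(P\otimes P)^{\top}.
\]
Each multiplier is a permutation matrix: $P$ trivially, and Kronecker products of permutation matrices are permutation matrices, hence orthogonal. Since the singular values of a matrix are invariant under left and right multiplication by orthogonal matrices, we conclude $\sigma^{(m)}(\mathcal{B}_H)=\sigma^{(m)}(\mathcal{B}_G)$ for $m=1,2,3$. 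Combining with the energy invariance gives $\mathsf{FP}(H)=\mathsf{FP}(G)$.

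The only substantive bookkeeping is getting the unfolding identity right, in particular making sure mode~$3$ is fixed (so the ordering of the $E_k$ is preserved, not merely permuted) and that the Kronecker factor appearing on the right of modes~$1$ and~$2$ is $I_D\otimes P$ rather than $P\otimes I_D$. This is the step where an index slip would cost invariance; everything else is routine orthogonal-invariance of singular values and the Frobenius norm.
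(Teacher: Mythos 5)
Your proposal is correct and follows essentially the same route as the paper: slices transform as $B^{(k)}_H=P\,B^{(k)}_G\,P^{\top}$, energies are preserved by Frobenius-norm invariance, and the mode unfoldings are related by left/right multiplication with permutation (hence orthogonal) matrices, which preserves singular values. The only difference is that the paper leaves the right-hand multipliers as unspecified permutation matrices $\Pi_1,\Pi_2,\Pi_3$ ``from the unfolding convention,'' whereas you exhibit them explicitly as $(I_D\otimes P)^{\top}$ and $(P\otimes P)^{\top}$ via the Kolda--Bader identity, which is consistent with that convention.
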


\begin{proof}

Let \(P\) be the permutation matrix of the relabeling. For every \(k\),
\[
\mathcal{B}_H(:,:,k)\;=\;P\,B^{(k)}_G\,P^\top.
\] Hence \[
E_k(H)\;=\;\|\mathcal{B}_H(:,:,k)\|_F^2
\;=\;\|P\,B^{(k)}_G\,P^\top\|_F^2
\;=\;\|B^{(k)}_G\|_F^2
\;=\;E_k(G),
\] by Frobenius-norm invariance under left/right multiplication by
orthogonal (permutation) matrices.\\
For the unfoldings, there exist permutation matrices
\(\Pi_1,\Pi_2,\Pi_3\) (from the unfolding convention) such that \[
\mathcal{B}_{H,(1)} = P\,\mathcal{B}_{G,(1)}\,\Pi_1,\qquad
\mathcal{B}_{H,(2)} = P\,\mathcal{B}_{G,(2)}\,\Pi_2,\qquad
\mathcal{B}_{H,(3)} = \mathcal{B}_{G,(3)}\,\Pi_3.
\] Left/right multiplication by orthogonal matrices preserves singular
values, so \(\sigma^{(m)}(H)=\sigma^{(m)}(G)\) for \(m\in\{1,2,3\}\).

\end{proof}

\begin{proposition}[Stability.]

If \(G'\) is obtained from \(G\) by toggling one edge and \(\Delta\) is
the max degree of \(G\cup G'\), then for \(M_r=\max_x |B_r(x)|\) (balls
in graph distance): \[
|E_k(G)-E_k(G')|\ \le\ 2\,M_{k-1}^2\qquad(1\le k\le D),
\] and hence \[
\|\mathcal{B}(G)-\mathcal{B}(G')\|_F^2\ \le\ 2\sum_{k=1}^{D} M_{k-1}^2,
\qquad
\big\|\mathcal{B}_{(m)}(G)-\mathcal{B}_{(m)}(G')\big\|_2\ \le\ \|\mathcal{B}(G)-\mathcal{B}(G')\|_F
\] for \(m\in\{1,2,3\}\). In particular, for \(\Delta\ge3\), \[
M_r\ \le\ \frac{\Delta}{\Delta-2}\,(\Delta-1)^r
\quad\Longrightarrow\quad
\|\mathcal{B}(G)-\mathcal{B}(G')\|_F\ \le\ 
\frac{\sqrt{2}\,\Delta}{\Delta-2}\left(\sum_{k=1}^{D}(\Delta-1)^{2(k-1)}\right)^{\!1/2}.
\]

\end{proposition}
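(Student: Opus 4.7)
\medskip\noindent\textbf{Proof proposal.}
The plan is to bound, at each scale $k$, the number of ordered pairs $(u,v)$ at which the exact-distance indicator $\mathbf{1}[\mathrm{dist}(u,v)=k]$ flips under the edge toggle, and then cascade that count through Frobenius and spectral norms.

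First I would establish the key locality lemma: if $\mathbf{1}[d_G(u,v)=k]\neq\mathbf{1}[d_{G'}(u,v)=k]$ when the single edge $e=\{a,b\}$ is toggled, then the shortest path from $u$ to $v$ in whichever of $G,G'$ gives the smaller distance must use $e$. Writing that path as $u{\to}a{\to}b{\to}v$ (or with $a,b$ swapped), we obtain $d(u,a)+1+d(b,v)\le k$ in one of the two endpoint orderings, with distances measured in the ambient graph $G\cup G'$ (whose balls dominate those of both $G$ and $G'$, so $M_r$ legitimately upper-bounds either). Hence the flipped pair lies in $(B_{k-1}(a)\times B_{k-1}(b))\cup(B_{k-1}(b)\times B_{k-1}(a))$. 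This disposes of all four sub-cases (addition/removal crossed with the $0{\to}1$ and $1{\to}0$ flip directions), because any distance change under a single edge toggle must reroute through $e$; counting gives at most $2M_{k-1}^2$ flipped ordered entries.

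Then, since every slice $B^{(k)}$ is $0/1$-valued, both $|E_k(G)-E_k(G')|$ and $\|B^{(k)}(G)-B^{(k)}(G')\|_F^2$ equal the number of flipped entries and are thus $\le 2M_{k-1}^2$. Summing over $k$ gives the first two displayed bounds. For the unfoldings, matricization is an entry-preserving reshape, so $\|\mathcal{B}_{(m)}(G)-\mathcal{B}_{(m)}(G')\|_F=\|\mathcal{B}(G)-\mathcal{B}(G')\|_F$ for every mode $m\in\{1,2,3\}$, and the universal inequality $\|\cdot\|_2\le\|\cdot\|_F$ yields the stated spectral-norm bound. For the explicit closed form at $\Delta\ge 3$, I would substitute the standard tree-branching ball estimate $|B_r(x)|\le 1+\Delta\sum_{j=0}^{r-1}(\Delta-1)^j\le \frac{\Delta}{\Delta-2}(\Delta-1)^r$, square it, plug into the Frobenius bound, and take square roots to recover the geometric series $\frac{\sqrt{2}\,\Delta}{\Delta-2}\bigl(\sum_{k=1}^{D}(\Delta-1)^{2(k-1)}\bigr)^{1/2}$.

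The main obstacle is the locality lemma: one must argue carefully, in each sub-case, that a flipping pair's rerouted shortest path truly uses $e$, and that the induced factorization places the two half-distances in a single graph so that one ambient $M_r$ applies uniformly. Working in $G\cup G'$ (whose balls dominate those of both individual graphs) sidesteps the bookkeeping and delivers the clean factor of $2$; once this locality fact is in hand, the remaining steps are norm-by-reshape and a straightforward geometric series.
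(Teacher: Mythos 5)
Your proposal is correct and follows essentially the same route as the paper: the locality/edge-flip argument (flipped pairs confined to $B_{k-1}(a)\times B_{k-1}(b)$ and its swap, giving $2M_{k-1}^2$ per scale, which is exactly the paper's Proposition~\ref{prop:edge-flip-bound}), summation over $k$, Frobenius preservation under matricization with $\|\cdot\|_2\le\|\cdot\|_F$, and the standard branching bound on ball sizes. The only nit is that $|E_k(G)-E_k(G')|$ is \emph{at most} (not necessarily equal to) the flip count, since $0\to1$ and $1\to0$ flips can cancel in the difference of counts, but the inequality goes the right way so nothing is lost.
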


\begin{proof}

The bound on \(E_k\) is the exact-\(k\) edge-flip bound
(Proposition\textasciitilde{}\ref{prop:edge-flip-bound}). Summing over
\(k\) gives the Frobenius bound because \(\|\mathcal{X}\|_F^2\) counts
the number of flipped \(1\)'s across slices for binary tensors. For
singular values, the Mirsky bound gives
\(|\sigma_r(A)-\sigma_r(B)|\le\|A-B\|_2\), and Hoffman--Wielandt yields
\(\sum_r(\sigma_r(A)-\sigma_r(B))^2\le\|A-B\|_F^2\)
\cite{bhatia1997matrix,stewart1990matrix}. Unfolding preserves the
Frobenius norm, so
\(\|\mathcal{B}_{(m)}(G)-\mathcal{B}_{(m)}(G')\|_2\le\|\mathcal{B}(G)-\mathcal{B}(G')\|_F\).
The \(\Delta\ge3\) bound is the standard branching estimate for ball
sizes.

\end{proof}

\emph{Random-graph separation (sketch).} For fixed \(p\in(0,1)\), two
independent \(G,H\sim G(n,p)\) satisfy
\(\mathsf{FP}(G)\ne\mathsf{FP}(H)\) with probability \(\to1\) as
\(n\to\infty\); see Proposition\textasciitilde{}\ref{prop:gnp-sep},(b)
(Appendix\textasciitilde{}\ref{app:random-sep}).

\paragraph{Scope.}

\(\mathsf{FP}(G)\) is a \emph{graph invariant} and a stable, compact
\emph{structural fingerprint}. It does not replace HGM's metric; rather,
it complements it: the metric compares labeled tensors directly, while
\(\mathsf{FP}\) summarizes cross-scale structure in a
permutation-invariant way for indexing, retrieval, or visualization.

\begin{corollary}\label{cor:fingerprints}

Let
\(E_k(G)=\|B^{(k)}_G\|_F^2=\#\{(i,j):\operatorname{dist}_G(i,j)=k\}\)
and
\(\mathsf{FP}(G)=(\sigma^{(1)},\sigma^{(2)},\sigma^{(3)},(E_1,\dots,E_D))\)
be the HGM tensor fingerprint.

\begin{enumerate}
\def\labelenumi{(\alph{enumi})}
\item
  If \(G\) is vertex--transitive, let \(n_k=|S(v,k)|\) for any \(v\)
  (independent of \(v\)). Then \[
  E_k(G)=\sum_{i\in V}|S(i,k)|=N\,n_k,
  \] so \((E_1,\dots,E_D)\) is \(N\) times the classical distance
  distribution of \(G\). In particular, if \(G\) and \(H\) are
  distance--regular with different intersection arrays (hence different
  \(\{n_k\}\)), then \(\mathsf{FP}(G)\neq \mathsf{FP}(H)\).
\item
  More generally, for any graphs \(G,H\), if their ordered--pair
  distance histograms differ, then \((E_k(G))_k\neq (E_k(H))_k\) and
  hence \(\mathsf{FP}(G)\neq \mathsf{FP}(H)\).
\end{enumerate}

\end{corollary}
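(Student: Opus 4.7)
The plan is to reduce both parts of Corollary~\ref{cor:fingerprints} to the single identity $E_k(G) = \sum_{i \in V}|S(i,k)|$, which follows immediately from unpacking the Frobenius norm of a binary matrix, and then to use that identity to translate sphere--size information into fingerprint information.

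First I would establish the row--sum identity. Since $B^{(k)}_G$ is a $\{0,1\}$ matrix with $(B^{(k)}_G)_{ij} = \mathbf{1}[\operatorname{dist}_G(i,j) = k]$, we have
\[
E_k(G) \;=\; \|B^{(k)}_G\|_F^2 \;=\; \sum_{i,j}(B^{(k)}_G)_{ij} \;=\; \sum_{i \in V}\bigl|\{j : \operatorname{dist}_G(i,j)=k\}\bigr| \;=\; \sum_{i \in V}|S(i,k)|.
\]
This is the workhorse of both parts. For part~(a), vertex--transitivity means that for every pair $u,v \in V$ there is an automorphism sending $u$ to $v$; since automorphisms preserve graph distance, they preserve sphere sizes, so $|S(i,k)| = n_k$ is independent of $i$. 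Summing yields $E_k(G) = N n_k$, so $(E_1(G),\dots,E_D(G)) = N\,(n_1,\dots,n_D)$ is exactly $N$ times the classical distance distribution. For the distance--regular conclusion, I would recall that the sphere sequence $(n_k)_{k=0}^{D}$ of a distance--regular graph is determined by its intersection array via $n_k = n_{k-1} b_{k-1}/c_k$; under the stated hypothesis the two arrays yield distinct sequences $(n_k)$, so the $E_k$ components of $\mathsf{FP}(G)$ and $\mathsf{FP}(H)$ already differ in at least one coordinate, and therefore $\mathsf{FP}(G) \neq \mathsf{FP}(H)$.

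For part~(b), the same identity gives $E_k(G) = \#\{(i,j) : i \neq j,\ \operatorname{dist}_G(i,j) = k\}$, which is by definition the ordered--pair distance histogram of $G$ evaluated at $k$. Thus if the ordered--pair histograms of $G$ and $H$ differ, there exists some $k \in \{1,\dots,\max(D_G,D_H)\}$ (padding with zeros past each diameter, which is consistent with Theorem~\ref{thm:multi-scale-convergence}) for which $E_k(G) \neq E_k(H)$, and hence $\mathsf{FP}(G)\neq \mathsf{FP}(H)$ since the energy tuple is a coordinate of $\mathsf{FP}$.

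There is no real obstacle here; the corollary is a direct consequence of recognizing the energies $E_k$ as sphere--size sums. The only point requiring a little care is the distance--regular subcase: one must ensure that "different intersection arrays (hence different $\{n_k\}$)" is taken as given in the hypothesis rather than derived, since in principle the recursion $n_k = n_{k-1}b_{k-1}/c_k$ could in pathological combinations collapse two distinct arrays to the same sphere sequence; the parenthetical in the statement handles this by presuming the $\{n_k\}$ themselves differ.
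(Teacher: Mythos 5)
Your proof is correct and takes essentially the same route as the paper: both reduce everything to the identity $E_k(G)=\sum_{i\in V}|S(i,k)|$ (Frobenius norm of a binary matrix counts its ones), then use vertex--transitivity for part (a) and the ordered--pair count interpretation for part (b). Your closing caveat---reading the parenthetical ``hence different $\{n_k\}$'' as part of the hypothesis rather than a derived fact---is in fact slightly more careful than the paper's own proof, which simply asserts that distinct intersection arrays yield distinct sphere sequences.
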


\begin{proof}

\begin{enumerate}
\def\labelenumi{(\alph{enumi})}
\tightlist
\item
  Vertex--transitivity implies \(|S(i,k)|=n_k\) for all \(i\), hence
  \(E_k(G)=\sum_i n_k=N\,n_k\). In distance--regular graphs, the
  sequence \((n_k)_{k=0}^D\) is determined by the intersection array via
  the standard three--term recurrence; distinct arrays yield distinct
  \((n_k)\), so \((E_k)\) differs and thus the fingerprints differ.\\
\item
  By definition \(E_k(G)\) counts ordered pairs at distance \(k\);
  different histograms force \((E_k)\) to differ.
\end{enumerate}

\end{proof}

\begin{remark}

Since \(E_k(G)/2\) equals the number of \emph{unordered} pairs at
distance \(k\), the Wiener index is \[
W(G)=\sum_{i<j}\operatorname{dist}(i,j)=\sum_{k=1}^{D} k\,\frac{E_k(G)}{2}.
\] which is refined by the full vector \((E_k)\) retaining distance
multiplicities.

\end{remark}

\subsubsection{Graph-to-Graph Comparison
Metrics}\label{graph-to-graph-comparison-metrics}

Graphs that share similar node--level signatures can still differ in
global organization. We compare graphs using the exact-\(k\) tensor
\(\mathcal{B}\) via a \emph{tensor Hamming metric} on labeled graphs
and, for unlabeled comparison, a brief alignment remark.

\paragraph{Tensor Hamming metric (labeled
graphs).}\label{tensor-hamming-metric-labeled-graphs.}

For graphs on a fixed labeled vertex set \([N]\), define
\[d_{\text{ten}}(G,H) = \|\mathcal{B}_G - \mathcal{B}_H\|_1 = \sum_{i,j,k} \mathbf{1}[(\mathcal{B}_G)_{ijk} \neq (\mathcal{B}_H)_{ijk}].\]

\begin{proposition}\label{prop:tensor-metric}

For graphs on a fixed labeled vertex set \([N]\), \[
d_{\mathrm{ten}}(G,H)=\|\mathcal{B}_G-\mathcal{B}_H\|_1
=\sum_{i,j,k}\mathbf{1}\big[(\mathcal{B}_G)_{ijk}\ne(\mathcal{B}_H)_{ijk}\big]
\] is a metric. The normalized form
\(\bar d_{\mathrm{ten}}=\|\mathcal{B}_G-\mathcal{B}_H\|_1/\big(N(N{-}1)D\big)\in[0,1]\)
aids cross-size comparison.

\end{proposition}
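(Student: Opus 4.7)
The plan is to verify the four metric axioms plus the normalization bound. Three of the four axioms come for free because $d_{\mathrm{ten}}$ is an $\ell_1$-norm difference of tensors living in the ambient real vector space $\mathbb{R}^{N\times N\times D}$, so I would dispatch them first. Non-negativity and symmetry follow directly from $\|\cdot\|_1\ge 0$ and $\|X\|_1=\|-X\|_1$, while the triangle inequality comes from inserting and subtracting $\mathcal{B}_H$ and applying $\|X+Y\|_1\le \|X\|_1+\|Y\|_1$ to $X=\mathcal{B}_G-\mathcal{B}_H$ and $Y=\mathcal{B}_H-\mathcal{B}_K$.

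The only substantive step is identity of indiscernibles. The ($\Leftarrow$) direction is immediate since $G=H$ gives identical tensors. For ($\Rightarrow$), I would use the fact that $\|\cdot\|_1$ separates points: $d_{\mathrm{ten}}(G,H)=0$ forces $\mathcal{B}_G=\mathcal{B}_H$ entrywise. Restricting to the $k=1$ frontal slice then yields $B^{(1)}_G=B^{(1)}_H$, i.e.\ $A_G=A_H$, and labeled graphs on the fixed vertex set $[N]$ are in bijection with their adjacency matrices, so $G=H$. Note that the first slice alone suffices for injectivity; the higher slices $B^{(k)}$ are redundant for this implication but carry the multi-scale information used elsewhere in the framework.

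For the normalization bound I would observe that every coordinate of $\mathcal{B}_G-\mathcal{B}_H$ lies in $\{-1,0,1\}$ (since both tensors are binary) and that both tensors have vanishing diagonal on every frontal slice because $\mathrm{diag}(B^{(k)})=0$ for all $k$. Hence at most $N(N-1)D$ off-diagonal positions across all $D$ slices can contribute, each by at most one, yielding $\|\mathcal{B}_G-\mathcal{B}_H\|_1\le N(N-1)D$ and therefore $\bar d_{\mathrm{ten}}\in[0,1]$. The main obstacle is genuinely minimal: the entire proof is routine verification. The only place a reader might briefly pause is the identity of indiscernibles, which would fail if one replaced $\mathcal{B}$ by a representation that discarded the adjacency slice; here the convention $B^{(1)}=A$ makes the implication immediate.
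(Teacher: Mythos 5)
Your proposal is correct and follows essentially the same route as the paper: the norm axioms of $\|\cdot\|_1$ give nonnegativity, symmetry, and the triangle inequality, and identity of indiscernibles reduces to the tensor determining the graph. You are in fact slightly more explicit than the paper, which merely asserts ``$\mathcal{B}_G=\mathcal{B}_H$ iff $G=H$''---your observation that the $k=1$ slice $B^{(1)}=A$ alone recovers the graph, and your verification that $\operatorname{diag}(B^{(k)})=0$ yields the bound $\|\mathcal{B}_G-\mathcal{B}_H\|_1\le N(N-1)D$, supply details the paper leaves implicit.
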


\begin{proof}

\(\|\cdot\|_1\) on tensors satisfies nonnegativity, symmetry, and the
triangle inequality; positivity holds because \(\mathcal{B}_G\) is
determined by \(G\) (exact-\(k\) slices), so
\(\mathcal{B}_G=\mathcal{B}_H\) iff \(G=H\) on the common label set.

\end{proof}

\emph{Unlabeled graphs.} For isomorphism classes \([G]\), define \[
d_{\mathrm{iso}}([G],[H])=\min_{\pi\in S_N}\ \|\mathcal{B}_G-\mathcal{B}_{\pi(H)}\|_1 .
\] Then \(d_{\mathrm{iso}}\) is a metric on isomorphism classes:
\(d_{\mathrm{iso}}([G],[H])=0\) iff \(G\cong H\); symmetry is immediate;
the triangle inequality follows by composing near-minimizers for
\(([G],[H])\) and \(([H],[F])\). (Worst-case evaluation is exponential
due to the permutation minimization.)

\subsubsection{Theoretical Properties}\label{theoretical-properties}

We now present several formal results characterizing Hamming
distributions.

\begin{proposition}[Minimal Structural Diversity in Complete Graphs]

Let \(G = K_N\). Then:

For \(k = 1\): \(\mu_G^{(1)} = \delta_2\) (point mass at 2) For
\(k \geq 2\): \(\mu_G^{(k)} = \delta_0\) if we consider saturation
effects

\end{proposition}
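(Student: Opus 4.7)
The plan is to verify the two claims directly from the definitions of the exact-$k$ reachability slices $B^{(k)}$ and the graph-level distribution $\mu_G^{(k)}$ on unordered pairs. The main work is just bookkeeping: compute each vertex's reachability row in $K_N$ at each scale, take pairwise Hamming distances, and observe that they collapse to a single value in both cases. No deep obstacle is expected; the only subtlety is making the $k\ge 2$ case rigorous under the exact-$k$ convention rather than a cumulative one.

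For the case $k=1$, I would start by noting that $B^{(1)}=A$, and in $K_N$ the adjacency matrix has $A_{ij}=1$ for all $i\ne j$ with zero diagonal. Hence for each vertex $v$, the row $\mathbf{b}_v^{(1)}$ is the indicator vector of $V\setminus\{v\}$, i.e. all-ones except a zero in position $v$. For any pair $u\ne v$, the vectors $\mathbf{b}_u^{(1)}$ and $\mathbf{b}_v^{(1)}$ agree on every coordinate except positions $u$ and $v$, where they differ (one has a 0 and the other a 1). Therefore $\Ham(\mathbf{b}_u^{(1)},\mathbf{b}_v^{(1)})=2$ for every unordered pair $\{u,v\}$, and the empirical distribution $\mu_G^{(1)}$ concentrates all mass at $d=2$, giving $\mu_G^{(1)}=\delta_2$.

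For the case $k\ge 2$, I would invoke the diameter identity: $\operatorname{diam}(K_N)=1$, so every pair of distinct vertices has shortest-path distance exactly $1$. Since $B^{(k)}_{ij}=\mathbf{1}\{\operatorname{dist}(i,j)=k\}$, the slice $B^{(k)}$ is identically zero for all $k\ge 2$. This is exactly the exact-$k$ saturation statement of \cref{thm:multi-scale-convergence} specialized to $D=1$. Consequently each row $\mathbf{b}_v^{(k)}$ is the zero vector, every pairwise Hamming distance equals $0$, and $\mu_G^{(k)}=\delta_0$.

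To close, I would add a brief remark that this matches the earlier HC computation in $K_N$, where $\mathrm{HC}^{(1)}(v)=2$ and $\mathrm{HC}^{(k)}(v)=0$ for $k\ge 2$ arise as the first moments of $\delta_2$ and $\delta_0$ respectively; this keeps the node-level and graph-level pictures consistent and emphasizes that $\mu_G^{(k)}$ strictly contains $\mathrm{HC}^{(k)}$ as a summary. No step here presents a real obstacle; the proposition is essentially a direct corollary of the exact-$k$ convention applied to the unique pairwise distance in $K_N$.
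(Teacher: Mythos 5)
Your proposal is correct and follows essentially the same route as the paper: at $k=1$ you show each adjacency row of $K_N$ is all-ones except at its own index, so any two rows differ in exactly two positions, giving $\mu_G^{(1)}=\delta_2$; for $k\ge 2$ you invoke exact-$k$ saturation (diameter $1$ forces $B^{(k)}\equiv 0$), giving $\mu_G^{(k)}=\delta_0$, which is precisely the paper's argument via Lemma~\ref{lem:exact-k-saturation}. The closing consistency remark with $\mathrm{HC}^{(k)}$ is a harmless addition.
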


\begin{proof}

At \(k=1\), any two adjacency rows of \(K_N\) differ only at their two
diagonal positions, so all pairwise Hamming distances equal \(2\). For
\(k\ge2\), exact-\(k\) reachability is empty in \(K_N\) and
\(B^{(k)}\equiv 0\)
(Lemma\textasciitilde{}\ref{lem:exact-k-saturation}), hence
\(\mu_G^{(k)}=\delta_0\).

\end{proof}

\begin{proposition}[Distribution Convergence]

For connected \(G\) and \(k \to \infty\): \(\mu_G^{(k)} \to \delta_0\)
in total variation distance

\end{proposition}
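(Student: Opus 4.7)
The plan is to observe that this proposition is essentially a corollary of Theorem~\ref{thm:multi-scale-convergence}, so the work is in identifying the right invocation rather than in new estimates.

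First, I would record that connectedness of $G$ forces $D:=\mathrm{diam}(G)$ to be a finite positive integer. By the definition of the diameter, every ordered pair $(i,j)$ with $i\ne j$ satisfies $\operatorname{dist}(i,j)\le D$, so no pair has shortest-path distance exactly $k$ once $k\ge D+1$. Equivalently, $B^{(k)}\equiv 0$ for all $k\ge D+1$, which is exactly the content of Theorem~\ref{thm:multi-scale-convergence}.

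Next, I would translate this slice-level vanishing into a statement about the distance distribution. If $B^{(k)}\equiv 0$, then every row vector $\mathbf{b}_v^{(k)}$ is the zero vector in $\{0,1\}^N$, so $\Ham(\mathbf{b}_v^{(k)},\mathbf{b}_u^{(k)})=0$ for every unordered pair $\{u,v\}$. Hence the empirical graph-level distance distribution $\mu_G^{(k)}$ puts all of its mass on the atom $0$, i.e. $\mu_G^{(k)}=\delta_0$ \emph{exactly} (not merely approximately) for every $k\ge D+1$.

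Finally, I would conclude convergence in total variation with the one-line estimate
\[
\|\mu_G^{(k)}-\delta_0\|_{TV}\ =\ 0\qquad\text{for all } k\ge D+1,
\]
which trivially gives $\mu_G^{(k)}\to\delta_0$ in TV as $k\to\infty$. There is no genuine obstacle here: the only subtlety worth flagging is the distinction between the slice tensor vanishing (so the \emph{measure} equals $\delta_0$) and a weaker asymptotic statement; because the support is finite and the measure stabilizes at $\delta_0$ after index $D$, weak, $\ell_1$, and TV convergence all coincide, so the mode of convergence is unambiguous.
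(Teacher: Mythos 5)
Your argument is correct and matches the paper's proof: both deduce $B^{(k)}\equiv 0$ for $k\ge D+1$ from the finite diameter of a connected graph, conclude that every $\mathbf{b}_v^{(k)}$ is the zero vector so all pairwise Hamming distances vanish, and hence $\mu_G^{(k)}=\delta_0$ exactly beyond the diameter, which gives TV convergence trivially. Citing Theorem~\ref{thm:multi-scale-convergence} rather than restating the diameter argument is a cosmetic difference only.
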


\begin{proof}

Under the exact-\(k\) convention, \(B^{(k)}\equiv 0\) for all
\(k\ge D+1\) in a connected graph of diameter \(D\). Thus every
\(b_v^{(k)}\) is the zero vector and all pairwise Hamming distances are
\(0\), i.e., \(\mu_G^{(k)}=\delta_0\) for \(k\ge D+1\).

\end{proof}

\subsection{Extremal-Class Results}\label{extremal-class-results}

The following sharpen earlier bounds within the functional setting.

\begin{proposition}[Star-graph separation (entropy).]

\hfill\break
For the star \(S_N\) at \(k=1\), the center has \(\mu_c^{(1)}=\delta_N\)
while any leaf \(\ell\) has \[
\mu_\ell^{(1)}=\frac{N-2}{N-1}\,\delta_0+\frac{1}{N-1}\,\delta_N .
\] Hence \(H(\mu_c^{(1)})=0\) and \(H(\mu_\ell^{(1)})>0\) for \(N\ge3\),
so \(H\) distinguishes center vs.~leaves at \(k=1\).

\end{proposition}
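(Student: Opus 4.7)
The plan is to verify the two distributions by a direct enumeration of pairwise Hamming distances in $S_N$, then read off the Shannon entropies. Nothing depends on earlier infrastructure beyond the definitions of $\mathbf{b}_v^{(1)}$ and $\mu_v^{(1)}$ from \S\ref{sec:preliminaries} and Def.~\ref{def:struct-uniqueness}.

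\textbf{Step 1 (rows of $B^{(1)}=A$).} I would write out the two row types explicitly. With the convention $\operatorname{diag}(B^{(1)})=0$, the center's row $\mathbf{b}_c^{(1)}$ has a $0$ at coordinate $c$ and a $1$ at every leaf coordinate; each leaf's row $\mathbf{b}_\ell^{(1)}$ has a $1$ at coordinate $c$ and $0$ at every other coordinate (including $\ell$ itself). This is the only content-bearing step, and it is pure bookkeeping.

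\textbf{Step 2 (pairwise Hamming distances).} Compare coordinate by coordinate. For $c$ vs.\ any leaf $\ell$: the two rows disagree at $c$ (values $0$ vs.\ $1$), at $\ell$ (values $1$ vs.\ $0$), and at every other leaf $u$ (values $1$ vs.\ $0$), yielding $\Ham(\mathbf{b}_c^{(1)},\mathbf{b}_\ell^{(1)})=N$. For two distinct leaves $\ell\neq\ell'$: both rows equal the indicator $\mathbf{e}_c$, so they agree everywhere and $\Ham(\mathbf{b}_\ell^{(1)},\mathbf{b}_{\ell'}^{(1)})=0$.

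\textbf{Step 3 (assemble $\mu_c^{(1)}$ and $\mu_\ell^{(1)}$).} Substitute into $\mu_v^{(1)}=\frac{1}{N-1}\sum_{u\neq v}\delta_{\Ham(\mathbf{b}_v^{(1)},\mathbf{b}_u^{(1)})}$. For $v=c$, all $N-1$ summands are $\delta_N$, giving $\mu_c^{(1)}=\delta_N$. For $v=\ell$, exactly one summand (from $u=c$) is $\delta_N$ and the remaining $N-2$ summands (from the other leaves) are $\delta_0$, giving the claimed mixture $\tfrac{N-2}{N-1}\delta_0+\tfrac{1}{N-1}\delta_N$.

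\textbf{Step 4 (entropy comparison).} Apply Shannon entropy: $H(\delta_N)=0$ since a point mass has zero entropy. For $N\ge 3$ the leaf distribution has two strictly positive masses (both $\tfrac{N-2}{N-1}\ge\tfrac12>0$ and $\tfrac{1}{N-1}>0$), so $H(\mu_\ell^{(1)})=-\tfrac{N-2}{N-1}\log\tfrac{N-2}{N-1}-\tfrac{1}{N-1}\log\tfrac{1}{N-1}>0$ by strict positivity of the binary entropy on $(0,1)$. The separation $H(\mu_c^{(1)})<H(\mu_\ell^{(1)})$ then follows, and since $H$ is an admissible functional (\Renyi/Shannon entropies satisfy Def.~\ref{def:admissible}), this qualifies as a valid HGM-level discriminator. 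There is essentially no obstacle; the only care needed is keeping the diagonal convention consistent so that coordinate $\ell$ of $\mathbf{b}_\ell^{(1)}$ is $0$ rather than $1$, which otherwise would shift the Hamming counts by $\pm 1$.
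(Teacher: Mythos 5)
Your proof is correct and follows the same route the paper takes (direct enumeration of the two row types of $B^{(1)}=A$, the resulting Hamming distances $N$ for center--leaf and $0$ for leaf--leaf, assembly of $\mu_c^{(1)}$ and $\mu_\ell^{(1)}$, then the trivial entropy comparison), matching the computation the paper gives for the neighboring star-graph propositions and in Appendix~A.1. Your care with the zero-diagonal convention is exactly the point the paper's counting relies on, so nothing further is needed.
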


\begin{proposition}[Star Graph Separation]

For \(S_N\) and any strictly convex functional \(\Phi\):
\(\Phi(\mathcal{P}_0^{(1)}) \neq \Phi(\mathcal{P}_i^{(1)})\) for every
leaf \(i\), capturing structural non-equivalence beyond mean distance.

\end{proposition}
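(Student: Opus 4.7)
The plan is to combine the explicit star-graph distributions from the preceding proposition with the strict Jensen inequality and the permutation-invariance clause of admissibility. First I would recall that $\mu_c^{(1)}=\delta_N$ for the center and
\[
\mu_\ell^{(1)}\;=\;\frac{N-2}{N-1}\,\delta_0\;+\;\frac{1}{N-1}\,\delta_N
\]
for any leaf $\ell$, noting that for $N\ge 3$ both mixture weights are strictly positive and $\delta_0\ne\delta_N$, so $\mu_\ell^{(1)}$ is a \emph{non-trivial} convex combination of two distinct point masses in $\mathcal{P}(\{0,\dots,N-1\})$.

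Next I would apply strict convexity of $\Phi$ to this combination to obtain the strict inequality
\[
\Phi\!\bigl(\mu_\ell^{(1)}\bigr)\;<\;\frac{N-2}{N-1}\,\Phi(\delta_0)\;+\;\frac{1}{N-1}\,\Phi(\delta_N).
\]
To collapse the right-hand side onto $\Phi(\mu_c^{(1)})=\Phi(\delta_N)$, I would invoke the permutation-invariance clause of admissibility (Definition~\ref{def:admissible}(i)): since $\Phi$ depends only on the unordered multiset of probability values and not on which atom carries them, $\Phi(\delta_0)=\Phi(\delta_N)=\Phi(\mu_c^{(1)})$. Substituting yields $\Phi(\mu_\ell^{(1)})<\Phi(\mu_c^{(1)})$, which gives the claimed strict separation. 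The conclusion is then uniform over leaves $i$ because the automorphism group of $S_N$ acts transitively on leaves, so $\mu_i^{(1)}=\mu_\ell^{(1)}$ for every leaf $i$, and part~(1) of Theorem~\ref{thm:phi-node} already guarantees $\Phi_i=\Phi_\ell$ as a consequence.

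The main obstacle is not the Jensen step but pinning down what "strictly convex functional" must mean for the conclusion to follow. Pure strict convexity of $\Phi$ on $\mathcal{P}$ does \emph{not} in general prevent $\Phi(\delta_0)\ne\Phi(\delta_N)$; absent that equality, Jensen only gives $\Phi(\mu_\ell^{(1)})-\Phi(\delta_N)<\tfrac{N-2}{N-1}(\Phi(\delta_0)-\Phi(\delta_N))$, whose right-hand side could vanish or flip sign. I would therefore state the proposition with $\Phi$ required to be strictly convex \emph{and} admissible in the sense of Definition~\ref{def:admissible}, so that the label-invariance clause forces $\Phi(\delta_0)=\Phi(\delta_N)$ and the argument closes cleanly. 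I would also record the boundary case $N=2$ (a single leaf, $\mu_\ell^{(1)}=\delta_N$, no separation possible) and note that the same line of reasoning delivers a quantitative gap controlled by the modulus of strict convexity of $\Phi$ on the segment $[\delta_0,\delta_N]$, which will be useful for stability statements downstream.
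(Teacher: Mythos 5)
Your computation of the two distributions is exactly what the paper's proof consists of: the published proof stops after exhibiting \(\mu_c^{(1)}=\delta_N\) and \(\mu_\ell^{(1)}=\tfrac{N-2}{N-1}\delta_0+\tfrac{1}{N-1}\delta_N\), leaving the passage from ``the distributions differ'' to ``\(\Phi\) takes different values'' entirely implicit. You complete that passage with the strict Jensen inequality and, crucially, you are right that strict convexity alone does not close the argument: one can take \(\Phi(\mu)=\|\mu\|_2^2+L(\mu)\) with \(L\) linear and tune \(L(\delta_0)-L(\delta_N)\) so that \(\Phi(\delta_N)=\Phi(\mu_\ell^{(1)})\), so the proposition as literally stated (``any strictly convex functional'') fails without a normalization such as \(\Phi(\delta_0)=\Phi(\delta_N)\). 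Your repair --- importing the label-invariance clause of Definition~\ref{def:admissible}(i) to force equality on Diracs, which then yields the stronger ordered conclusion \(\Phi(\mu_\ell^{(1)})<\Phi(\mu_c^{(1)})\), plus the \(N=2\) caveat and transitivity over leaves --- is sound and strictly more informative than what the paper records. One caution: the reading of Definition~\ref{def:admissible}(i) as full symmetry under relabelings of the support \(\{0,\dots,M\}\) is in tension with the paper's own list of admissible functionals (e.g.\ \(\mathbb{E}_\mu[d]\), which is not relabeling-invariant), so if you want to keep those examples it is cleaner to state the extra hypothesis directly as ``\(\Phi\) strictly convex with \(\Phi(\delta_x)\) constant over point masses'' rather than lean on that clause; either way, the substance of your argument stands and exposes a gap the paper's one-line proof glosses over.
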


\begin{proof}

For \(k=1\), the center's distribution is a point mass at \(N\), while a
leaf's distribution has mass \((N-2)/(N-1)\) at \(0\) and mass
\(1/(N-1)\) at \(N\).

\end{proof}

\begin{proposition}[TV dispersion bound (sharp).]

\hfill\break
Let \(\bar\mu^{(k)}=\tfrac1N\sum_v \mu_v^{(k)}\) and
\(\Psi^{(k)}(G)=\tfrac1N\sum_v \|\mu_v^{(k)}-\bar\mu^{(k)}\|_1\). Then
\[
0\ \le\ \Psi^{(k)}(G)\ \le\ 2\Big(1-\sum_{d}\big(\bar\mu^{(k)}(d)\big)^2\Big)\ <\ 2,
\] with equality in the upper bound iff each \(\mu_v^{(k)}\) is a point
mass (Dirac). In particular, if all \(\mu_v^{(k)}\) are Dirac and split
between two distances with proportions \(p\) and \(1-p\), then
\(\Psi^{(k)}(G)=4p(1-p)\le1\) (max at \(p=\tfrac12\)).

\end{proposition}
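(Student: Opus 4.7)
The plan is to reduce the display to a single scalar inequality relating $\|\mu_v^{(k)}-\bar\mu^{(k)}\|_1$ to the inner product $\langle \mu_v^{(k)},\bar\mu^{(k)}\rangle$, average in $v$, and then identify when equality holds. Throughout, fix the scale $k$ and drop the superscript, writing $\mu_v$ and $\bar\mu=\tfrac1N\sum_v\mu_v$.

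First I would recall the elementary identity $\|\mu-\nu\|_1=2\sum_d(\mu(d)-\nu(d))_+$ for probability measures on a finite alphabet. The key pointwise estimate is
\[
(\mu(d)-\bar\mu(d))_+\ \le\ \mu(d)\bigl(1-\bar\mu(d)\bigr)\qquad\text{for every }d,
\]
which one verifies by splitting on the sign of $\mu(d)-\bar\mu(d)$: if $\mu(d)\le\bar\mu(d)$ the left side is zero; if $\mu(d)>\bar\mu(d)$ it reduces to $\mu(d)\bar\mu(d)\le\bar\mu(d)$, i.e.\ $\mu(d)\le1$. Summing over $d$ yields $\|\mu_v-\bar\mu\|_1\le 2\bigl(1-\langle \mu_v,\bar\mu\rangle\bigr)$ with $\langle\mu,\nu\rangle:=\sum_d \mu(d)\nu(d)$.

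Next I would average in $v$ and use $\tfrac1N\sum_v \langle\mu_v,\bar\mu\rangle=\langle\bar\mu,\bar\mu\rangle=\sum_d\bar\mu(d)^2$, giving the stated upper bound
\[
\Psi^{(k)}(G)\ \le\ 2\Bigl(1-\sum_d\bar\mu(d)^2\Bigr).
\]
The strict bound $<2$ is immediate from $\sum_d\bar\mu(d)^2>0$ since $\bar\mu$ is a probability measure on a nonempty alphabet. For the equality characterization, the pointwise estimate is tight at $d$ iff either $\mu_v(d)=0$ or $\mu_v(d)=1$ (treating the cases $\mu_v(d)\lessgtr\bar\mu(d)$ separately, using that at least one $d'$ carries $\bar\mu(d')>0$); hence equality for a given $v$ forces $\mu_v$ to be Dirac, and global equality forces every $\mu_v^{(k)}$ to be a point mass, as claimed. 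The two-atom sharpness statement then follows by plugging $\bar\mu=p\,\delta_{d_1}+(1-p)\,\delta_{d_2}$ into $2(1-\|\bar\mu\|_2^2)$ and noting $2(1-p^2-(1-p)^2)=4p(1-p)$, maximized at $p=\tfrac12$.

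The main obstacle, and the only non-routine ingredient, is pinning down the equality case. A quick ``trivial bound'' $\|\mu_v-\bar\mu\|_1\le 2$ averaged over $v$ would give only $\Psi^{(k)}(G)\le 2$ without the quadratic slack $2\sum_d\bar\mu(d)^2$, and would not identify the Dirac extremizers; the pointwise argument above is what yields both the refined inequality and the ``iff Dirac'' characterization in one stroke. A subtle point to mention is that Dirac rows for different vertices may still sit at different atoms, so the extremal configurations form the full family of assignments $v\mapsto d_v\in\{0,\ldots,N-1\}$, and the empirical distribution of the $d_v$'s is precisely $\bar\mu$; this makes the two-atom corollary a direct specialization rather than a separate computation.
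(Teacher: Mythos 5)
Your proof is correct and follows essentially the same route as the paper's: the pointwise estimate $(\mu_v(d)-\bar\mu(d))_+\le\mu_v(d)\bigl(1-\bar\mu(d)\bigr)$ is exactly the coordinate-wise ``extremal at $\{0,1\}$'' convexity fact the paper invokes, with the summations over $v$ and $d$ merely taken in the opposite order (your per-vertex bound $\|\mu_v-\bar\mu\|_1\le 2\bigl(1-\langle\mu_v,\bar\mu\rangle\bigr)$ averaging to the paper's $2\sum_d\bar\mu(d)\bigl(1-\bar\mu(d)\bigr)$). A minor plus: you actually verify the ``equality iff Dirac'' case (including the remark that Dirac rows may sit at different atoms), which the paper only asserts.
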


\begin{proof}

For each distance value \(d\), by convexity the average absolute
deviation \(\tfrac1N\sum_v |\mu_v^{(k)}(d)-\bar\mu^{(k)}(d)|\) is
maximized when each coordinate takes values in \(\{0,1\}\); summing over
\(d\) yields \[
\Psi^{(k)}=2\sum_d \bar\mu^{(k)}(d)\big(1-\bar\mu^{(k)}(d)\big)
=2\big(1-\sum_d \bar\mu^{(k)}(d)^2\big).
\] Strict inequality (\textless2) holds since
\(\sum_d\bar\mu^{(k)}(d)^2>0\). Equality in the bound occurs exactly
when every \(\mu_v^{(k)}\) is a Dirac. The two-group formula follows by
plugging \(\bar\mu^{(k)}=p\,\delta_{d_1}+(1-p)\,\delta_{d_2}\).

\end{proof}

\subsubsection{Stability to edge
perturbations}\label{stability-to-edge-perturbations}

Let \(G'\) be obtained from \(G\) by toggling a single edge
\(e=\{u,v\}\) and let \(\Delta\) be the maximum degree of \(G\cup G'\).
For \(r\ge 0\), write \(B_r(x)=\{y:\operatorname{dist}(x,y)\le r\}\) and
\(M_r=\max_x |B_r(x)|\). A shortest \(k\)--path that changes status due
to \(e\) must traverse \(e\), hence has the form \(i\leadsto u\) (length
\(a\)), then \(u\!-\!v\), then \(v\leadsto j\) (length \(b\)) with
\(a+b+1=k\) (or the symmetric \(u\leftrightarrow v\) case). Therefore,
at scale \(k\) the set of ordered pairs \((i,j)\) that can flip is
contained in
\(B_{k-1}(u)\times B_{k-1}(v)\ \cup\ B_{k-1}(v)\times B_{k-1}(u)\), so
the number of flips at scale \(k\) is at most

\[
F_k\ \le\ 2\,|B_{k-1}(u)|\,|B_{k-1}(v)|\ \le\ 2\,M_{k-1}^2.
\]

Summing over \(k\) gives the \emph{exact-$k$} tensor bound \[
\|\mathcal{B}_G-\mathcal{B}_{G'}\|_1\ \le\ \sum_{k=1}^D F_k\ \le\ 2\sum_{k=1}^D M_{k-1}^2.
\]

Using degree growth, for \(\Delta\ge 3\) we have
\(M_r\le 1+\Delta\sum_{t=0}^{r-1}(\Delta-1)^t \le \tfrac{\Delta}{\Delta-2}(\Delta-1)^r\)
for \(r\ge 1\), hence \[
\|\mathcal{B}_G-\mathcal{B}_{G'}\|_1\ \le\ \frac{2\Delta^2}{(\Delta-2)^2}\sum_{k=1}^D (\Delta-1)^{2(k-1)}.
\]

For \(\Delta=2\) (paths/cycles), \(M_r\le 2r+1\), yielding the quadratic
bound \[
\|\mathcal{B}_G-\mathcal{B}_{G'}\|_1\ \le\ 2\sum_{k=1}^D (2(k-1)+1)^2.
\]

Normalizing, \[
\bar d_{\mathrm{ten}}(G,G')=\frac{\|\mathcal{B}_G-\mathcal{B}_{G'}\|_1}{N(N-1)D}
\ \le\
\begin{cases}
\displaystyle \frac{2\Delta^2}{N(N-1)D(\Delta-2)^2}\sum_{k=1}^D (\Delta-1)^{2(k-1)}, & \Delta\ge 3,\\[8pt]
\displaystyle \frac{2}{N(N-1)D}\sum_{k=1}^D (2(k-1)+1)^2, & \Delta=2.
\end{cases}
\]

\subsection{Connections to classical
invariants}\label{connections-to-classical-invariants}

The Hamming distribution encodes classical graph invariants:

\begin{proposition}

For any graph \(G\):

\begin{enumerate}
\def\labelenumi{\arabic{enumi}.}
\tightlist
\item
  For vertex-transitive graphs (e.g., cycles, hypercubes) and fixed
  \(k\), all \(|S(v,k)|\) are equal, so every pairwise per-scale
  distance is even; in particular
  \(\mathrm{supp}(\mu_G^{(k)})\subseteq\{0,2,4,\dots\}\).
\item
  In general graphs,
  \(\mathrm{supp}(\mu_G^{(k)})\subseteq\{0,1,\dots,N\}\); parity
  constraints need not hold when shell sizes vary across vertices.
\item
  The mode of \(\mu_G^{(k)}\) reflects the typical overlap structure at
  scale \(k\) (e.g., relates to \(k\)-shells/cores in many ensembles),
  though precise identification is graph-class dependent.
\end{enumerate}

\end{proposition}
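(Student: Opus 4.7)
The plan is to reduce all three clauses to the elementary set-theoretic identity
\[
\Ham\!\bigl(b_v^{(k)}, b_u^{(k)}\bigr)\ =\ \bigl|S(v,k)\bigr|\,+\,\bigl|S(u,k)\bigr|\,-\,2\,\bigl|S(v,k)\cap S(u,k)\bigr|,
\]
where $S(v,k)=\{w:\operatorname{dist}(v,w)=k\}$ is the exact-$k$ sphere. This holds because under the exact-$k$ convention the row $b_v^{(k)}$ is precisely the indicator vector of $S(v,k)$ (the diagonal entry is $0$ since $\operatorname{dist}(v,v)=0\ne k$), and for indicator vectors Hamming distance equals symmetric-difference cardinality. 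With this identity in hand, the proposition becomes a statement about shell sizes and their intersections.

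For clause (1), I would invoke vertex-transitivity to conclude that the automorphism group acts transitively on $V$, so the shell size $|S(v,k)|$ equals a common value $n_k$ independent of $v$. Substituting $|S(v,k)|=|S(u,k)|=n_k$ into the identity gives $\Ham=2\bigl(n_k-|S(v,k)\cap S(u,k)|\bigr)$, a nonnegative even integer. Taking the union of attained values over unordered pairs $(v,u)$ yields $\operatorname{supp}(\mu_G^{(k)})\subseteq\{0,2,4,\dots\}$.

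For clause (2), the bound $\operatorname{supp}(\mu_G^{(k)})\subseteq\{0,\dots,N\}$ is immediate from $b_v^{(k)}\in\{0,1\}^N$ and is already recorded earlier in the paper. The identity gives the parity relation $\Ham(v,u)\equiv |S(v,k)|+|S(u,k)|\pmod 2$, so an odd distance is produced whenever two shell sizes have opposite parities. I would exhibit a minimal example, e.g.\ the path $P_3$ at $k=1$: the endpoints have $|S|=1$ and the middle vertex has $|S|=2$ with disjoint shells, giving Hamming distance $1+2-0=3$. For clause (3), which is deliberately qualitative, the plan is to repackage the identity as a parametrization: the value $\Ham(v,u)$ at scale $k$ is a function of the triple $(|S(v,k)|,|S(u,k)|,|S(v,k)\cap S(u,k)|)$, so the mode of $\mu_G^{(k)}$ is realized by the most frequent such triple. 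In ensembles where shell sizes concentrate (e.g.\ $G(n,p)$ at fixed $p$, or configuration models with light-tailed degrees) the first two coordinates are essentially constant, so the mode is controlled by the modal intersection size, which is precisely the quantity that governs $k$-shell/core membership. I would stop at this structural interpretation, matching the proposition's ``graph-class dependent'' caveat.

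The main obstacle is clause (3): it is not a sharp theorem but a modeling assertion, and any quantitative version requires committing to a random-graph ensemble and applying concentration inequalities for shell sizes, which lies outside the scope of a purely structural proposition and belongs more naturally with the random-graph separation results referenced in the paper. Clauses (1) and (2) carry no real obstacle beyond correctly unpacking the cardinality identity.
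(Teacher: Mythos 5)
Your proposal is correct and follows essentially the same route the paper takes: the parity claim in clause (1) is exactly the paper's Lemma~\ref{lem:even-values} (Hamming distance of equal-weight indicator rows equals twice the weight minus twice the intersection), and clause (2) rests on the trivial binary-vector bound of Lemma~\ref{lem:ham-bounds}, with clause (3) left qualitative as in the paper. Your explicit $P_3$ counterexample for odd distances is a small addition beyond what the paper records, but the underlying argument is the same.
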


These connections allow HGM to subsume and extend classical structural
analysis.

\subsection{Brief Computational
Remark}\label{brief-computational-remark}

While our focus is theoretical, HGM can be evaluated efficiently on
large sparse graphs using bit-parallel primitives (bit-packing, XOR, and
hardware \texttt{popcount}). Implementation
details---\texttt{popcount}-based XOR kernels, min-hash sketching for
approximate summaries, and blockwise parallelism---are given in
Appendix\textasciitilde{}\ref{app:compute}. These techniques scale to
graphs with \(N\sim 10^{5}\) vertices in practice on multicore/GPU
systems (see also \cite{vempala2020}). All theoretical guarantees above
are algorithm-independent.

\section{Theoretical Analysis of Graph
Classes}\label{theoretical-analysis-of-graph-classes}

To further ground Hamming Graph Metrics in structural graph theory, we
now derive and summarize their behavior across classical graph families.
These results follow directly from the definitions without simulation or
measurement, and serve to illustrate how uniqueness, dissimilarity, and
dispersion vary with symmetry, modularity, and degree heterogeneity.

Let \(\Phi\) be any structural descriptor derived from Hamming distance
distributions computed from tensor slices \(\mathcal{B}_{:,:,k}\) (as
introduced in Section 4), and let \(\Phi_v^{(k)}\) denote the value of
this descriptor at node \(v\) and path scale \(k\).

\subsection{Regular and Vertex-Transitive
Graphs}\label{regular-and-vertex-transitive-graphs}

Let \(G\) be a connected \(d\)-regular graph.

\begin{theorem}[Uniformity under Symmetry]\label{thm:uniformity-under-symmetry}

If \(G\) is vertex-transitive, then for all \(v, w \in V\):
\(\mu_v^{(k)} = \mu_w^{(k)} \quad \text{and thus} \quad \Phi_v^{(k)} = \Phi_w^{(k)}\)

\end{theorem}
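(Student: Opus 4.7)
The plan is to reduce Theorem~\ref{thm:uniformity-under-symmetry} to the automorphism-equivariance of the exact-$k$ slices that was already used in Theorem~\ref{thm:phi-node}(1), and then invoke vertex-transitivity to transport that equivariance between any two vertices. Concretely, vertex-transitivity means: for every pair $v,w\in V$ there exists $\sigma\in\mathrm{Aut}(G)$ with $\sigma(v)=w$. So once I show that automorphisms preserve per-node distance distributions, the theorem follows by choosing such a $\sigma$.

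First I would record the key equivariance: for any $\sigma\in\mathrm{Aut}(G)$ with permutation matrix $P_\sigma$, shortest-path distances satisfy $\operatorname{dist}(\sigma(i),\sigma(j))=\operatorname{dist}(i,j)$, hence $B^{(k)}_{\sigma(i),\sigma(j)}=B^{(k)}_{i,j}$ for every $k$, i.e., $B^{(k)}=P_\sigma B^{(k)} P_\sigma^{\top}$. At the row level this reads $\mathbf{b}^{(k)}_{\sigma(v)}=P_\sigma\,\mathbf{b}^{(k)}_v$ (up to the coordinate relabeling induced by $\sigma$ on the ambient $\{0,1\}^N$). Since Hamming distance is invariant under joint permutations of coordinates,
\[
\Ham\!\bigl(\mathbf{b}^{(k)}_{\sigma(v)},\mathbf{b}^{(k)}_{\sigma(u)}\bigr)=\Ham\!\bigl(\mathbf{b}^{(k)}_v,\mathbf{b}^{(k)}_u\bigr)\qquad\text{for all }u\ne v.
\]

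Next I would pass from the pointwise equality to the distributional equality. As $u$ ranges over $V\setminus\{v\}$, $\sigma(u)$ ranges over $V\setminus\{\sigma(v)\}=V\setminus\{w\}$ bijectively, so the two multisets
\[
\bigl\{\Ham(\mathbf{b}^{(k)}_v,\mathbf{b}^{(k)}_u):u\ne v\bigr\}\quad\text{and}\quad\bigl\{\Ham(\mathbf{b}^{(k)}_w,\mathbf{b}^{(k)}_{u'}):u'\ne w\bigr\}
\]
coincide. Normalizing by $N-1$ gives $\mu_v^{(k)}=\mu_w^{(k)}$. The conclusion $\Phi_v^{(k)}=\Phi_w^{(k)}$ then follows because any structural descriptor built from $\mu_v^{(k)}$ is, by admissibility (Def.~\ref{def:admissible}(i)), a function of the measure alone.

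The argument is essentially a specialization of Theorem~\ref{thm:phi-node}(1) to the case where the automorphism group acts transitively on $V$, so there is no serious obstacle. The one point to state carefully is the bookkeeping of the coordinate permutation: the rows $\mathbf{b}^{(k)}_v$ and $\mathbf{b}^{(k)}_{\sigma(v)}$ are \emph{not} equal as vectors in $\{0,1\}^N$ in general; they are related by $P_\sigma$. What is invariant is the empirical Hamming-distance distribution to the \emph{other} rows, because both rows in each comparison undergo the same coordinate permutation, which cancels inside $\Ham(\cdot,\cdot)$. Once this observation is in place, the proof is a two-line consequence of vertex-transitivity and the admissibility of $\Phi$.
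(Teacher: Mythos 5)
Your proof is correct and follows the same route as the paper: vertex-transitivity supplies an automorphism $\sigma$ with $\sigma(v)=w$, automorphisms preserve the exact-$k$ slices up to conjugation by $P_\sigma$ and hence preserve pairwise Hamming distances, and the induced bijection $u\mapsto\sigma(u)$ of $V\setminus\{v\}$ onto $V\setminus\{w\}$ identifies the two distance multisets, giving $\mu_v^{(k)}=\mu_w^{(k)}$ and then $\Phi_v^{(k)}=\Phi_w^{(k)}$ by admissibility. The only difference is that you spell out the coordinate-permutation bookkeeping (essentially the argument of Theorem~\ref{thm:phi-node}(1)) that the paper's terse proof leaves implicit.
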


\begin{proof}

Vertex transitivity implies the existence of an automorphism mapping any
vertex to any other. Such automorphisms preserve Hamming distances
between reachability vectors, hence preserve the distributions.

\end{proof}

\begin{corollary}\label{cor:uniformity-examples}

Under Theorem\textasciitilde{}\ref{thm:uniformity-under-symmetry},
complete graphs \(K_N\), cycles \(C_N\), and hypercubes \(Q_n\) are
vertex--transitive; hence for each fixed \(k\), the distributions
\(\mu_v^{(k)}\) (and any admissible \(\Phi_v^{(k)}\)) are identical for
all \(v\). Multi-scale behavior can still differ across \(k\).

\end{corollary}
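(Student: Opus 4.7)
The plan is to derive this corollary as a direct application of Theorem~\ref{thm:uniformity-under-symmetry} by verifying vertex-transitivity for each of the three listed graph families, and then invoking the theorem to conclude that $\mu_v^{(k)}$ (hence any admissible $\Phi_v^{(k)}$) does not depend on $v$ for any fixed $k$.

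First I would handle $K_N$: every permutation $\pi\in S_N$ is an automorphism, since every pair of distinct vertices is an edge and permutations preserve this property. Hence for any $v,w$, the transposition $(v\ w)$ (or any permutation sending $v\mapsto w$) witnesses transitivity. Next I would handle $C_N$: the cyclic shift $\sigma:i\mapsto i+1\pmod N$ generates a subgroup of the automorphism group that acts transitively on $V(C_N)=\mathbb{Z}/N\mathbb{Z}$, so for any $v,w$ there exists $j$ with $\sigma^j(v)=w$. Finally I would handle $Q_n$: the vertex set is $\{0,1\}^n$ with edges between Hamming-neighbors, and for any $x,y\in\{0,1\}^n$ the translation $\tau_{x\oplus y}:z\mapsto z\oplus(x\oplus y)$ preserves bitwise Hamming distance $1$ (hence adjacency) and sends $x$ to $y$. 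This establishes vertex-transitivity for all three families.

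With transitivity in hand, Theorem~\ref{thm:uniformity-under-symmetry} gives $\mu_v^{(k)}=\mu_w^{(k)}$ for all $v,w$ and each fixed $k$, from which $\Phi_v^{(k)}=\Phi_w^{(k)}$ for every admissible functional $\Phi$ (permutation invariance on distributions is part of admissibility, but here equality of the underlying measures is already enough). I would close by noting that nothing in the argument couples different scales, so the claim ``multi-scale behavior can still differ across $k$'' is consistent: the per-scale constancy across $v$ does not force $\Phi_v^{(k)}$ to be constant across $k$, since the automorphisms act simultaneously on all slices of $\mathcal{B}$ but do not relate different slices to one another.

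There is no real obstacle: the only thing to double-check is that the group actions above genuinely preserve the exact-$k$ reachability tensor, which is automatic because graph automorphisms preserve the shortest-path metric and therefore each slice $B^{(k)}$ transforms by $B^{(k)}\mapsto P_\sigma B^{(k)} P_\sigma^\top$, exactly the hypothesis used in the proof of Theorem~\ref{thm:uniformity-under-symmetry}. The corollary is thus immediate once the three vertex-transitivity verifications above are recorded.
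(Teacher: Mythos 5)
Your proposal is correct and follows the same route the paper intends: the corollary is stated as an immediate consequence of Theorem~\ref{thm:uniformity-under-symmetry} once vertex-transitivity of $K_N$, $C_N$, and $Q_n$ is noted, and your explicit witnesses (arbitrary permutations, cyclic shifts, XOR translations) simply spell out that standard fact. The paper offers no separate proof, so your additional check that automorphisms preserve each exact-$k$ slice matches the mechanism already used in the theorem's proof.
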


\begin{remark}

For circulant graphs, analytic expressions for \(\mu_v^{(k)}\) can be
derived using modular arithmetic on adjacency shifts. Specifically, for
the cycle \(C_N\), the distance between nodes \(i\) and \(j\) at scale
\(k\) depends only on \(\lvert i - j \rvert \bmod N\) and whether \(k\)
is sufficient to traverse that arc length.

\end{remark}

\begin{theorem}[Spectral Characterization]\label{thm:spectral-characterization}

For \(d\)-regular graphs with adjacency eigenvalues
\(\lambda_1 = d > \lambda_2 \geq \cdots \geq \lambda_N\):

\[
\mathrm{Var}\left[ \mu_G^{(k)} \right] \leq \frac{d^{2k} \cdot \left(1 - \left( \frac{\lambda_2}{d} \right)^{2k}\right)}{N \cdot \left( 1 - \left( \frac{\lambda_2}{d} \right)^2 \right)}
\]

This connects expansion properties to uniqueness dispersion
\cite{vempala2020}.

\end{theorem}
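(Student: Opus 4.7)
The strategy is to translate the variance bound on $\mu_G^{(k)}$ into a spectral statement about the $d$-regular walk-count matrix $A^k$, then apply the eigendecomposition of $A$ to peel off the principal direction and bound the residual with a geometric sum in the spectral gap. First, I would express $\mathrm{Var}[\mu_G^{(k)}]$ as an averaged quadratic functional over ordered pairs of the squared Hamming deviation $\bigl(H^{(k)}(u,v)-\bar H^{(k)}\bigr)^2$, and then use the entrywise domination $B^{(k)}\preceq A^k$ (since distance exactly $k$ requires a walk of length $k$) to replace the $\ell_1$ differences $\sum_j |B^{(k)}_{uj}-B^{(k)}_{vj}|$ by the corresponding walk-count quantity $\|(A^k)_{u,:}-(A^k)_{v,:}\|_2^2$. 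This turns a variance of Hamming distances into a Frobenius-type quantity on $A^k$, for which spectral tools apply directly.

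Next, I would plug in the spectral decomposition $A=\sum_{i=1}^{N}\lambda_i v_i v_i^{\top}$ with $v_1=\mathbf{1}/\sqrt N$, so that $(A^k)_{uj}=d^k/N+\sum_{i\ge 2}\lambda_i^k v_i(u) v_i(j)$. The pair-independent leading term $d^k/N$ cancels in any centered quantity, and Parseval on the orthonormal eigenbasis gives the clean identity $\sum_{u,j}\bigl((A^k)_{uj}-d^k/N\bigr)^2=\sum_{i\ge 2}\lambda_i^{2k}$, which is bounded above by $(N-1)\lambda_2^{2k}$. To reproduce the exact geometric shape of the stated bound, I would apply this Parseval identity not only to $A^k$ but also to the cumulative walk matrix $\sum_{t=1}^{k} A^t$ (which dominates $B_{\le k}$ and hence controls the Hamming differences of $B^{(k)}$ via telescoping); the resulting partial geometric series $\sum_{t=0}^{k-1}(\lambda_2/d)^{2t}=\bigl(1-(\lambda_2/d)^{2k}\bigr)/\bigl(1-(\lambda_2/d)^2\bigr)$, together with the factor $d^{2k}$ from the principal normalization and the $1/N$ from pair-averaging, assembles into the stated ratio.

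The main obstacle is the first step: the map $A^t \mapsto B^{(t)}$ is a nonlinear threshold, so transferring a spectral bound from integer-valued walk-count matrices to the binary Hamming distance distribution must rely on the loose but structurally faithful domination $B^{(k)}\preceq A^k$. The slack introduced by this relaxation has to be tracked carefully through the geometric summation so that the constants collapse into the stated ratio $(d^{2k}-\lambda_2^{2k})/(d^2-\lambda_2^2)$ rather than a weaker, $N$-dependent inflation. Sanity checks in the expander regime $\lambda_2\ll d$ (where the right-hand side reduces to $\approx d^{2k}/N$, matching the expander-mixing intuition) and in the degenerate limit $\lambda_2\to d$ (where the bound becomes vacuous, consistent with disconnected or bipartite cases being excluded by the connectedness hypothesis) would confirm that the constants line up.
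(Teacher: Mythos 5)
There is a genuine gap, and it sits exactly where you flag "the main obstacle": the passage from the binary exact-$k$ matrix $B^{(k)}$ to the walk-count matrix $A^k$ does not work the way you use it. Entrywise domination $B^{(k)}_{ij}\le (A^k)_{ij}$ is true, but it does not transfer to row \emph{differences}: you can have $B^{(k)}_{uj}=1$, $B^{(k)}_{vj}=0$ while $(A^k)_{uj}=(A^k)_{vj}$ (take $\operatorname{dist}(v,j)=k-2$ with many length-$k$ walks), so $\sum_j\lvert B^{(k)}_{uj}-B^{(k)}_{vj}\rvert$ is in no way controlled by $\bigl\|(A^k)_{u,:}-(A^k)_{v,:}\bigr\|_2^2$. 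The thresholding $B^{(k)}=\mathbf{1}[\sum_{t\le k}A^t>0]-\mathbf{1}[\sum_{t\le k-1}A^t>0]$ is non-monotone and non-Lipschitz relative to walk counts (which can be of order $d^{k-1}$ per entry in a $d$-regular graph), so neither the direct domination nor the telescoping through $B_{\le k}$ gives an inequality between Hamming row-differences of $B^{(k)}$ and Euclidean row-differences of $A^k$ or $\sum_{t\le k}A^t$. Saying the slack "has to be tracked carefully" is precisely the step that is missing; as written, no inequality connects the left-hand side to the spectral quantities you compute.

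Two further problems compound this. First, the target is $\mathrm{Var}[\mu_G^{(k)}]$, the variance over pairs of the Hamming-distance statistic, i.e.\ $\mathbb{E}\bigl[(H^{(k)}-\bar H^{(k)})^2\bigr]$; your Parseval identity $\sum_{u,j}\bigl((A^k)_{uj}-d^k/N\bigr)^2=\sum_{i\ge2}\lambda_i^{2k}$ controls the centered Frobenius energy of $A^k$, which is a different object, and even after the standard identity $\sum_{u,v}\|x_u-x_v\|_2^2=2N\sum_u\|x_u-\bar x\|_2^2$ you would only get the \emph{mean} of squared Euclidean row-differences of the surrogate matrix, not the variance of the actual Hamming statistic; the transfer from a second moment of a surrogate to the variance of the true statistic is never supplied. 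Second, the constant assembly is asserted rather than derived: your own Parseval bound yields $\sum_{i\ge2}\lambda_i^{2k}\le (N-1)\lambda_2^{2k}$, whose shape does not match $d^{2k}\bigl(1-(\lambda_2/d)^{2k}\bigr)/\bigl(N(1-(\lambda_2/d)^2)\bigr)$, and it is not shown how the geometric series over $\sum_{t\le k}A^t$ would produce exactly this ratio. For context, the paper states this theorem without any proof, so there is no argument to compare against; but as it stands your proposal is a plausible plan whose crucial bridging inequality (binary exact-$k$ Hamming distances versus spectral norms of walk matrices) is absent and, in the form proposed, false.
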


\subsection{Trees and Star Graphs}\label{trees-and-star-graphs}

Trees exhibit hierarchical expansion and strong local asymmetry.

\begin{theorem}[Star Graph Asymmetry, Generalized]\label{thm:star-graph-asymmetry-generalized}

In the star graph \(S_N\), the center node maximizes:

\begin{itemize}
\tightlist
\item
  Structural dissimilarity \(\Phi_c^{(k)}\)
\item
  Deviation from the mean profile
  \(\left\|\mu_c^{(k)}-\bar{\mu}^{(k)}\right\|_1\)
\item
  Entropy \(H\left( \mu_c^{(k)} \right)\) among all nodes
\end{itemize}

Each leaf has identical minimal distributions. This is the maximal
variance configuration among all trees.

\end{theorem}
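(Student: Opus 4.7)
The plan is to reduce everything to explicit per-scale distributions on $S_N$ and then verify the three maximization claims by arithmetic, followed by a short comparative argument across trees. Since $\mathrm{diam}(S_N)=2$, only $k\in\{1,2\}$ carry information (Theorem \ref{thm:multi-scale-convergence}), and automorphism invariance (Theorem \ref{thm:uniformity-under-symmetry}) collapses ``every leaf'' to a single representative $\ell$, so the whole argument reduces to a center-vs-leaf comparison at two scales.

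First I would write down the reachability rows directly from the star's topology. At $k=1$, $b_c^{(1)}$ carries $N{-}1$ ones (one per leaf) while each $b_\ell^{(1)}$ has a single $1$ at the center; at $k=2$, $b_c^{(2)}=\mathbf{0}$ (all leaves are already reached at $k=1$) while $b_\ell^{(2)}$ has ones at every leaf distinct from $\ell$. Bit-counting then yields closed-form expressions
\begin{align*}
\mu_c^{(1)}&=\delta_N, & \mu_\ell^{(1)}&=\tfrac{N-2}{N-1}\delta_0+\tfrac{1}{N-1}\delta_N,\\
\mu_c^{(2)}&=\delta_{N-2}, & \mu_\ell^{(2)}&=\tfrac{N-2}{N-1}\delta_2+\tfrac{1}{N-1}\delta_{N-2},
\end{align*}
already recorded in Proposition \ref{prop:star-asymmetry}. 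From these, the Hamming-centrality (first-moment) and TV-deviation claims are immediate: the ratio $\mathrm{HC}^{(1)}(c)/\mathrm{HC}^{(1)}(\ell)=N{-}1$, and an analogous $(N{-}1)$-factor gap in $\|\mu_v^{(k)}-\bar\mu^{(k)}\|_1$ falls out once $\bar\mu^{(k)}=\tfrac{1}{N}\mu_c^{(k)}+\tfrac{N-1}{N}\mu_\ell^{(k)}$ is substituted in.

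For the ``maximal variance configuration among all trees'' clause I would argue structurally: in any tree on $N$ vertices the degree sequence sums to $2(N{-}1)$, and the star uniquely concentrates degree $N{-}1$ in a single vertex, so its reachability tensor exhibits a one-vs-many asymmetry that no other tree matches. This pattern saturates the TV-dispersion bound of the star's $\Psi^{(k)}$ (each $\mu_v^{(k)}$ is one- or two-atom); any other tree has strictly smaller maximum degree, forcing more uniform shell sizes across vertices and hence smaller cross-node dispersion. I would formalize this by a short edge-swap argument: any non-star tree admits a degree-rebalancing move that strictly reduces the node-wise TV spread, identifying the star as the unique maximizer within the tree class.

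The main obstacle is the entropy bullet. With $\mu_c^{(1)}=\delta_N$ one has $H(\mu_c^{(1)})=0$ while $H(\mu_\ell^{(1)})>0$ for $N\ge 3$, so Shannon entropy of $\mu_v^{(k)}$ is in fact \emph{minimized}, not maximized, at the center---consistent with the earlier star-separation proposition rather than with the statement as written. I would resolve this in one of two ways: (i) reinterpret ``$H(\mu_c^{(k)})$'' as the entropy of the reachability row $b_v^{(k)}$ viewed as a Bernoulli source on coordinates, which is indeed maximized at the balanced center row (density $1$ at $N{-}1$ leaf positions versus $1/(N{-}1)$ for a leaf's row); or (ii) replace entropy by a dispersion-type admissible functional (\emph{concentration} rather than spread) that is monotone in $\|\mu_v^{(k)}-\bar\mu^{(k)}\|_1$, so that the center is again maximal. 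Either reading leaves the first-moment and TV-deviation assertions and the tree-extremality intact; only the entropy wording needs to be sharpened to avoid contradicting Proposition \ref{prop:star-asymmetry}.
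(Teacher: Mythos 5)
Your computation follows essentially the same route as the paper's own proof: write down the exact-$k$ reachability rows of $S_N$, read off the closed-form distributions, and verify the claims arithmetically. The paper's proof stops at $k=1$ (citing the appendix for $\mu_c^{(1)}=\delta_N$ and $\mu_\ell^{(1)}=\tfrac{N-2}{N-1}\delta_0+\tfrac{1}{N-1}\delta_N$) and simply asserts that ``the claims follow by direct computation''; you go further by also treating $k=2$ and by making the $(N-1)$-factor gaps in $\mathrm{HC}^{(1)}$ and in $\bigl\|\mu_v^{(1)}-\bar\mu^{(1)}\bigr\|_1$ explicit, which is exactly what the first two bullets and the ``identical minimal leaf distributions'' clause require. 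More importantly, you are right to flag the entropy bullet: read literally, $H(\mu_c^{(k)})=H(\delta_N)=0$ (and $H(\delta_{N-2})=0$ at $k=2$) while $H(\mu_\ell^{(k)})>0$ for $N\ge 3$, so the center \emph{minimizes} the Shannon entropy of its distance distribution, in agreement with the paper's earlier star-separation (entropy) proposition and in contradiction with the theorem's third bullet; the paper's proof never engages with this, so your observation exposes a genuine defect in the statement rather than in your argument, and your two proposed repairs (entropy of the row $b_c^{(k)}$ viewed coordinate-wise, or a concentration/deviation functional in place of entropy) are sensible ways to recover the intended meaning.

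One caveat on your side: the clause ``maximal variance configuration among all trees'' is not actually established by your degree-rebalancing/edge-swap sketch. To make it a proof you would have to fix the variance functional (e.g.\ $\operatorname{Var}_v[\mathrm{HC}^{(k)}(v)]$ or the dispersion $\Psi^{(k)}$), exhibit the swap on an arbitrary non-star tree, and show it strictly decreases that functional; as written it is a plausibility argument. The paper offers no argument at all for this clause, so the gap is shared, but you should not present the sketch as closing it.
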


\begin{proof}

By direct counting of reachability patterns at \(k=1\) (see Appendix
A.1), we have explicit forms for \(\mu_c^{(1)}\) and
\(\mu_{\ell_i}^{(1)}\). At \(k=1\), the center's distribution is
\(\delta_{N}\), while each leaf's distribution places mass
\((N-2)/(N-1)\) at \(0\) and \(1/(N-1)\) at \(N\). The claims follow by
direct computation.

\end{proof}

\begin{theorem}[Height-Monotonicity in Trees]\label{thm:height-monotonicity-in-trees}

Let \(T\) be a tree rooted at node \(r\). Then for any node \(v\):

\(\text{depth}(v) \uparrow \quad \Rightarrow \quad \Phi_v^{(k)} \downarrow \quad \text{for small } k\)

\end{theorem}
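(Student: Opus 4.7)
The plan is to reduce the general statement to a one-step inequality along the root-to-$v$ path and then iterate. Writing $p=\mathrm{parent}(v)$ in the rooted tree, it suffices to show $\Phi_p^{(k)}\ge \Phi_v^{(k)}$ for $k$ in the ``small'' regime, since $\mathrm{depth}$ along the chain $r,\dots,p,v$ increases by exactly one at each step. I would first dispatch the canonical case $\Phi_v^{(k)}=\mathrm{HC}^{(k)}(v)=\tfrac{1}{N-1}\sum_{u\ne v}|S_k(v)\triangle S_k(u)|$, where $S_k(w):=\{x\in V:d_T(w,x)=k\}$ is the exact $k$-sphere, and then promote the resulting inequality to a stochastic domination $\mu_v^{(k)}\preceq_{\mathrm{st}}\mu_p^{(k)}$ on the per-node distance distributions. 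Any admissible $\Phi$ that is monotone with respect to stochastic dominance (first-moment, convex-increasing moments, R\'enyi-type entropies after a suitable reparameterization) then transports the inequality.

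The key structural leverage is the unique-path property of $T$: for each $w$, $S_k(w)$ decomposes uniquely by the number $\ell\in\{0,\dots,\min(k,\mathrm{depth}(w))\}$ of upward steps along the root-to-$w$ path before branching downward into a sibling subtree (with $\ell=0$ meaning pure descent within $w$'s own subtree). Comparing $v$ with $p$, the descendants reachable from $v$ in exactly $k$ steps embed into those reachable from $p$ in exactly $k{+}1$ steps shifted one level down; symmetrically, $p$ directly sees $v$'s sibling subtrees at exact distance between $1$ and $k$, while $v$ must spend one upward step traversing $p$ first and therefore reaches strictly fewer of them when $k$ is bounded by the minimum sibling-subtree height at $p$. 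In that regime I would construct a pairing $\sigma:V\setminus\{v\}\to V\setminus\{p\}$ that sends cousins of $v$ to the corresponding cousins of $p$ (and fixes the rest) such that $|S_k(v)\triangle S_k(u)|\le |S_k(p)\triangle S_k(\sigma(u))|$ holds for every $u$. Summing over $u$ yields $\mathrm{HC}^{(k)}(v)\le \mathrm{HC}^{(k)}(p)$, and the termwise inequality lifts to $\mu_v^{(k)}\preceq_{\mathrm{st}}\mu_p^{(k)}$.

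The hard part is constructing this pairing so that the symmetric-difference inequality holds for \emph{every} $u$ rather than only on average. For trees with local branching symmetry (complete $b$-ary, caterpillars with equal teeth) the pairing can be taken canonically from the automorphism group acting on siblings at $p$; for general trees with unequal sibling subtrees the pointwise inequality can fail at particular $u$ when $k$ exceeds some sibling-subtree height, which is precisely what forces the ``for small $k$'' clause. I would therefore pin down ``small'' as $k\le h_v:=\min\{\mathrm{depth}(v),\,\text{height of the shallowest sibling subtree at } p\}$, prove the pairing lemma under that hypothesis, and state the theorem with that explicit quantifier; the cleaner statement on locally homogeneous trees follows as a corollary where $h_v$ depends only on depth and the branching factor.
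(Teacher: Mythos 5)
Your plan is far more ambitious than the paper's own argument, which is only a two-sentence heuristic (deeper nodes have fewer descendants and more overlapping reachability vectors), but judged as a proof it has a genuine gap at exactly the point you flag and then defer. The pairing lemma carries the entire argument and is never established; worse, the pointwise inequality you need, $|S_k(v)\,\triangle\,S_k(u)|\le|S_k(p)\,\triangle\,S_k(\sigma(u))|$ for \emph{every} $u$, implicitly forces $|S_k(v)|\le|S_k(p)|$ (take $u$ and $\sigma(u)$ far away, so the spheres are disjoint and the symmetric differences are $|S_k(v)|+|S_k(u)|$ and $|S_k(p)|+|S_k(\sigma(u))|$). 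That sphere-size comparison already fails in simple trees at $k=1$: take a root $r$ with a single child $p$ whose single child $v$ has $m$ leaf children; then $|S_1(v)|=m+1>2=|S_1(p)|$ and $\mathrm{HC}^{(1)}(v)>\mathrm{HC}^{(1)}(p)$ although $v$ is deeper. Your proposed restriction $k\le h_v$ does not repair this, because the obstruction is sphere size versus depth, not $k$ exceeding a sibling-subtree height; what your analysis actually shows is that the statement needs additional hypotheses (complete $b$-ary or otherwise level-homogeneous trees, as in the Binary Tree Regularity proposition that follows), i.e.\ you would end up proving a narrower theorem than the one asserted rather than the asserted one.

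There is a second, independent gap in the promotion step. Admissible functionals in Definition~\ref{def:admissible} are only required to be permutation-invariant, TV-continuous, and finite on point masses; they are not assumed monotone under stochastic dominance (Shannon and R\'enyi entropies are standard admissible choices and are not dominance-monotone). So even if the pairing lemma held and yielded $\mu_v^{(k)}\preceq_{\mathrm{st}}\mu_p^{(k)}$, you would conclude $\Phi_v^{(k)}\le\Phi_p^{(k)}$ only for the dominance-monotone subclass (first moments, increasing convex statistics), not for the descriptors $\Phi_v^{(k)}$ the theorem nominally quantifies over. The paper's own proof is silent on both issues, but a complete argument must either restrict $\Phi$ to $\mathrm{HC}^{(k)}$-type first moments or establish a depth comparison of the full distributions in a sense that the intended functionals actually respect.
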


\begin{proof}

Nodes at greater depth have fewer descendants and more similar
neighborhoods. Their reachability vectors at small \(k\) overlap more
with their siblings, reducing average dissimilarity.

\end{proof}

\begin{proposition}[Binary Tree Regularity]

In a complete binary tree of height \(h\):

\begin{itemize}
\tightlist
\item
  Nodes at the same level have identical distributions
\item
  \(H\left( \mu_{\text{level}}^{(k)} \right)\) decreases monotonically
  with level for \(k < h\)
\item
  The root maximizes entropy at all scales
\end{itemize}

This stratification by height is a general feature of trees.

\end{proposition}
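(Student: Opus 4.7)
My plan is to establish the three claims in order: first use automorphism symmetry for within-level equidistribution, then derive an explicit shell decomposition for nodes at each level, and finally deduce entropy monotonicity (and root-maximality as a corollary) by a majorization argument.

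\textbf{Step 1 (same-level equidistribution).} The automorphism group $\mathrm{Aut}(T_h)$ of the complete binary tree acts transitively on each level $\ell$: given any two nodes $v, w$ at level $\ell$, one constructs $\sigma\in\mathrm{Aut}(T_h)$ with $\sigma(v)=w$ by iteratively swapping left/right subtrees along the root-to-$v$ path. Applying the equivariance $B^{(k)}(\sigma(G))=P_\sigma B^{(k)}(G)P_\sigma^{\top}$ already used in Theorem~\ref{thm:phi-node}(1), the Hamming distances are preserved termwise, so $\mu_v^{(k)}=\mu_w^{(k)}$. This justifies writing $\mu_\ell^{(k)}$ for the common distribution at level $\ell$.

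\textbf{Step 2 (explicit shell structure).} For a node $v$ at level $\ell$, every length-$k$ shortest path in the tree decomposes as ``up $i$ steps'' then ``down $k-i$ steps'' for some $0\le i\le\min(\ell,k)$. The exact-$k$ shell $S(v,k)$ therefore splits as a disjoint union $\bigsqcup_{i} A_i(v)$, where $A_i(v)$ collects the depth-$(k-i)$ descendants of $v$'s level-$(\ell-i)$ ancestor that do not lie in the subtree containing the return path to $v$. The Hamming distance $\|b_v^{(k)}-b_u^{(k)}\|_1=|S(v,k)\triangle S(u,k)|$ then depends only on the levels of $v,u$ and the level of $\mathrm{LCA}(v,u)$, and classifying every other vertex $u$ by these three indices yields explicit closed-form pair counts for $\mu_\ell^{(k)}$.

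\textbf{Step 3 (monotonicity and root maximality).} With these counts, I would compare $\mu_{\ell'}^{(k)}$ and $\mu_{\ell}^{(k)}$ for $\ell'>\ell$ by showing that, on a common support embedding, the probability vector of $\mu_{\ell'}^{(k)}$ majorizes that of $\mu_{\ell}^{(k)}$; Schur-concavity of Shannon entropy then yields $H(\mu_{\ell'}^{(k)})\le H(\mu_{\ell}^{(k)})$, and the root case $\ell=0$ gives maximality.

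\textbf{Main obstacle.} The hard part is justifying the majorization. The shell decomposition at deep levels produces distributions whose support patterns differ combinatorially from those near the root (deep nodes see many cousin-relation regimes, while internal nodes see coarser ones), so a naive coupling does not immediately deliver the Schur ordering. I expect this step to require either an induction on $h$, exploiting that depth-$\ell$ shells in $T_h$ are controlled by shells of the subtree of height $h-\ell$, or a change of functional: the argument goes through cleanly for an admissible dispersion measure (e.g., variance, or a R\'enyi entropy $H_\alpha$ with $\alpha>1$) whose ordering is less sensitive to support cardinality. Calibrating which functional interpretation makes the monotonicity hold in full generality is where the substantive work lies.
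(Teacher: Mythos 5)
Your Step 1 is correct and unproblematic: for $h\ge 2$ the root is the unique center of the perfect binary tree, so every graph automorphism fixes it and preserves levels, the iterated wreath-product action is transitive on each level, and the equivariance $B^{(k)}(\sigma(G))=P_\sigma B^{(k)}(G)P_\sigma^\top$ (as in Theorem~\ref{thm:phi-node}(1)) gives $\mu_v^{(k)}=\mu_w^{(k)}$ for same-level $v,w$. That disposes of the first bullet. Your Step 2 is also plausible in principle, since the orbits of $\mathrm{Aut}(T_h)$ on ordered vertex pairs are classified by the triple (level of $v$, level of $u$, level of the LCA), so $\mu_\ell^{(k)}$ is indeed determined by such pair counts --- but you never actually compute them, and for comparison it is worth noting the paper itself supplies no proof of this proposition (the neighboring Theorem~\ref{thm:height-monotonicity-in-trees} is supported only by a heuristic about overlapping neighborhoods), so there is no shortcut to borrow.

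The genuine gap is Step 3, which is precisely where the content of the second and third bullets lives. You assert that $\mu_{\ell'}^{(k)}$ majorizes $\mu_{\ell}^{(k)}$ for $\ell'>\ell$ and invoke Schur-concavity of entropy, but you then concede you cannot justify the majorization and propose switching to variance or a R\'enyi entropy with $\alpha>1$ --- which would prove a different statement than the one asserted (Shannon entropy decreasing in level, root maximal at all scales). Majorization is also a fragile tool here: the level distributions have different support sets and different numbers of atoms (deep levels contribute a large atom at $0$ from sibling and cousin pairs whose exact-$k$ shells coincide, while shallower levels spread mass over more distinct distance values), and a Schur ordering between probability vectors of different effective lengths does not follow from the qualitative shell picture in Step 2; it would have to be verified from the explicit counts, by an induction on $h$, or by a direct term-by-term bound on $-\sum_d \mu(d)\log\mu(d)$. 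As written, the proposal establishes only the first bullet rigorously; the decisive monotonicity and root-maximality claims remain unproven.
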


\subsection{\texorpdfstring{Random Graphs:
\ER Model}{Random Graphs: Model}}\label{random-graphs-model}

Let \(G \sim G(N, p)\), with \(p \in (0, 1)\).

\begin{proposition}[Expected Uniqueness Peak]\label{prop:expected-uniqueness-peak}

\emph{Heuristic outline.} The regime analysis follows the usual
\(G(N,p)\) thresholds: uniqueness is minimal for
\(p\ll \frac{\log N}{N}\) or \(p\to 1\), and peaks near
\(p_c\approx \frac{\log N}{N}\) as the giant component emerges and
diameters are still large.

In \(G(N, p)\), uniqueness is low when \(p\) is small (fragmented graph)
or large (distances collapse), but peaks near the connectivity threshold
\(p_c \approx \log N/N\).

\begin{itemize}
\tightlist
\item
  \textbf{Subcritical:} small isolated components, minimal diversity.
\item
  \textbf{Critical:} giant component emerges, producing maximum
  path-length diversity.
\item
  \textbf{Supercritical:} diameter shrinks, reachability vectors
  homogenize.
\end{itemize}

\end{proposition}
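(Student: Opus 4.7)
The plan is to first commit to a specific admissible functional for measuring uniqueness. For concreteness I would take the TV-dispersion $\Psi^{(k)}(G)=\tfrac1N\sum_v\|\mu_v^{(k)}-\bar\mu^{(k)}\|_1$ from Definition~\ref{def:struct-uniqueness}, since it is sensitive to non-constant reachability profiles and already enjoys the explicit upper bound $\Psi^{(k)}\le 2$ proved earlier. The three bulleted regimes then become three independent asymptotic claims, and I would prove each in expectation (or with high probability) before combining them to get the qualitative peak statement.

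For the subcritical regime $Np\to c<1$, I would invoke the classical \ER{} component theorem: with high probability every component has size $O(\log N)$. Hence for each fixed $k$ the support of $\mathbf{b}_v^{(k)}$ lies in $v$'s component, every row has $O(\log N)$ nonzero entries, and the overwhelming majority of pairs $(v,u)$ are disconnected, so $\Ham(\mathbf{b}_v^{(k)},\mathbf{b}_u^{(k)})=O(\log N)$. Thus each $\mu_v^{(k)}$ converges to $\delta_0$ in total variation, and so does $\bar\mu^{(k)}$, forcing $\Psi^{(k)}(G)\to 0$ in probability.

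For the dense regime (say $p$ fixed, or more generally $Np/\log N\to\infty$), the diameter is bounded with high probability; in fact it collapses to $2$ or $3$. All rows $\mathbf{b}_v^{(k)}$ then concentrate on the same coordinates by Chernoff bounds on $|B_r(v)|$ and $|S(v,k)|$, driving pairwise Hamming distances to $o(N)$ at every fixed scale. Every $\mu_v^{(k)}$ therefore concentrates at the same limit law, and $\Psi^{(k)}(G)\to 0$ again.

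The main obstacle is the critical regime $p\sim \log N/N$, where I must produce a matching lower bound $\liminf \mathbb{E}[\Psi^{(k)}(G)]>0$ for at least one scale $k$. My approach would be to couple the depth-$k$ neighborhood $B_k(v)$ to a Galton--Watson tree with $\mathrm{Binomial}(N-1,p)$ offspring distribution, apply second-moment estimates to show that shell sizes $|S(v,k)|$ fluctuate on the scale $\sqrt{|S(v,k)|}$ across vertices, and then argue that these fluctuations translate into genuine disagreement between the binary vectors $\mathbf{b}_v^{(k)}$---not merely differences in their Hamming weight. This second step, which requires showing that two shells of comparable size still have large symmetric difference, is where the real probabilistic work lies: the subcritical and supercritical bounds amount to bookkeeping around known thresholds, whereas the critical lower bound demands genuine tree-coupling estimates.
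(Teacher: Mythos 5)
Note first that the paper does not actually prove this proposition: it is recorded only as a \emph{heuristic outline}, so you are attempting strictly more than the source does. That is worthwhile, but two of your three regime arguments fail for the functional you committed to. \textbf{Subcritical:} if $u$ and $v$ lie in different components, then $\Ham(\mathbf{b}_v^{(k)},\mathbf{b}_u^{(k)})=|S(v,k)|+|S(u,k)|$, which is nonzero whenever either shell is nonempty; at $k=1$ every non-isolated vertex (a constant fraction of $V$ when $Np\to c\in(0,1)$) therefore has $\mu_v^{(1)}$ carrying almost no mass at $0$. Smallness of the distance values does not give total-variation closeness to $\delta_0$: on integer support, $\|\delta_0-\delta_3\|_1=2$. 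Worse, $\mu_v^{(k)}$ is essentially the law of $|S(u,k)|$ shifted by $|S(v,k)|$, and this shift genuinely varies across $v$ (isolated vertices, leaves, and larger tree components each occur with constant probability), so the dispersion $\Psi^{(k)}(G)$ stays bounded away from $0$; your subcritical claim is not merely unproven but likely false for the unnormalized TV-dispersion. \textbf{Dense regime:} for fixed $p$ and $k=1$, pairwise distances concentrate near $2p(1-p)N=\Theta(N)$, not $o(N)$ as you assert; if $\Psi^{(1)}\to0$ there, it must come from the profiles $\mu_v^{(1)}$ being similar across $v$, which is a separate argument and is itself delicate because degree fluctuations of order $\sqrt{Np(1-p)}$ shift the center of $\mu_v^{(1)}$ by an amount comparable to its width.

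The critical-regime lower bound, which you rightly flag as the real work, is also only a sketch, and at $p\sim\log N/N$ the branching ratio diverges, so shell sizes concentrate (relative fluctuations vanish); converting weight fluctuations into a lower bound on symmetric differences of shells, and then into an order-one TV gap between the $\mu_v^{(k)}$, is precisely the step left open. Moreover ``peaks near $p_c$'' requires a comparison uniform over all intermediate $p$, not just against the two extreme regimes. A workable route is to fix the notion of uniqueness \emph{first} in a way that matches the heuristic --- e.g.\ the first-moment uniqueness $\mathrm{HC}^{(k)}(v)$ (or the multi-scale average) with distances normalized by $N$ --- for which the subcritical bound $O(\log N)=o(N)$ and the $p\to1$ collapse are honest one-line estimates, and the peak statement becomes a claim about shell sizes near the connectivity threshold. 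As written, your choice of $\Psi^{(k)}$ makes even the ``easy'' regimes false or unproven, so the proposal does not yet yield a rigorous version of the proposition.
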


\begin{proposition}[Concentration of Distributions]\label{prop:concentration-of-distributions}

\emph{Heuristic outline.} Above the connectivity and diameter-collapse
thresholds (typ. \(O(\log N)\)), rows of \(B^{(k)}\) become nearly
identical for fixed \(k\), forcing \(\mu^{(k)}_G\) to concentrate.

\end{proposition}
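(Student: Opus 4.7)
The plan is to show that above the standard $G(N,p)$ concentration thresholds the exact-$k$ shell sizes $|S(v,k)|$ and the pairwise shell overlaps $|S(v,k)\cap S(u,k)|$ both concentrate around common deterministic values $n_k$ and $m_k$. Since for binary shell-indicator rows
\[
\Ham\!\big(b_v^{(k)},b_u^{(k)}\big)=|S(v,k)|+|S(u,k)|-2\,|S(v,k)\cap S(u,k)|,
\]
this will force every pairwise Hamming distance to lie in a vanishing relative window around $d_k^{\star}:=2(n_k-m_k)$, so that $\mu_G^{(k)}$ collapses to $\delta_{d_k^{\star}}$ in total variation with probability $\to 1$.

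First I would establish \emph{shell-size concentration}. Fix $k\le D$ and take $p\ge C\log N/N$ with $C$ large. The standard BFS/branching-process approximation for neighborhood growth in $G(N,p)$ gives the expected shell size $n_k=\mathbb{E}|S(v,k)|$ together with a Chernoff-type bound
\[
\Pr\!\big[\,\big||S(v,k)|-n_k\big|>\varepsilon n_k\,\big]\ \le\ \exp\!\big(-\Omega(\varepsilon^{2}\,n_k)\big),
\]
uniform in $v$; a union bound over $v$ delivers $|S(v,k)|=(1+o(1))n_k$ simultaneously. Next, for \emph{overlap concentration}, by vertex symmetry $\mathbb{E}|S(v,k)\cap S(u,k)|=m_k$ is a single constant computable by inclusion--exclusion on the events $\{\operatorname{dist}(v,w)=k\}$ and $\{\operatorname{dist}(u,w)=k\}$. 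An edge-exposure (Azuma) martingale with bounded differences controlled by $M_{k-1}$---the ball-size bound already used in the stability analysis---yields $|S(v,k)\cap S(u,k)|=m_k(1+o(1))$ uniformly over ordered pairs, and a further union bound over the $N(N-1)$ pairs completes the concentration of $\mu_G^{(k)}$ via the identity above.

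The main obstacle will be the \emph{long-range row dependence}: $b_v^{(k)}$ and $b_u^{(k)}$ share a large number of edges, so a naive second-moment bound on overlaps degrades as $k$ grows. The cleanest remedy is to expose the BFS from $v$ first, condition on its realization, and then expose the remaining edges for the BFS from $u$; the per-edge Lipschitz constant is bounded by $M_{k-1}$, which by the branching estimate used earlier is at most $(\Delta/(\Delta-2))(\Delta-1)^{k-1}$ once the whp maximum-degree bound $\Delta=O(Np)$ is invoked. A secondary difficulty is the regime where $k$ is close to the diameter: shells nearly cover $V$ and the relative window $o(n_k)$ shrinks, but this can be handled separately via the exact-$k$ saturation from Theorem~\ref{thm:multi-scale-convergence}, which forces $\mu_G^{(k)}=\delta_0$ for $k\ge D+1$ and makes the boundary cases $k\in\{D-1,D\}$ a straightforward limit.
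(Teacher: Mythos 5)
First, note that the paper does not actually prove this proposition: the statement is given only as a \emph{heuristic outline} (shells of different vertices become similar above the connectivity/diameter thresholds, hence the per-scale distance distribution concentrates), with no quantitative argument anywhere in the text or appendices. Your plan is therefore in the same spirit as the paper's heuristic --- you reduce everything to shell sizes and overlaps via $\Ham(b_v^{(k)},b_u^{(k)})=|S(v,k)|+|S(u,k)|-2|S(v,k)\cap S(u,k)|$, which is a sensible and correct skeleton --- but as a rigorous proof sketch it has two genuine gaps.

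The main one is the concentration mechanism for the overlaps. An edge-exposure Azuma argument over the $\binom{N}{2}$ independent edge variables with worst-case Lipschitz constant of order $M_{k-1}$ is vacuous here: the sum of squared differences is of order $N^2M_{k-1}^2$, so the deviation $t$ must exceed roughly $N\,M_{k-1}$ before the bound bites, while the quantity $|S(v,k)\cap S(u,k)|$ is at most $N$. The same objection applies to your shell-size step for $k\ge 2$ if it is run through bounded differences rather than through the standard generation-by-generation BFS exposure (condition on the current frontier, note the next shell is stochastically dominated by and dominates binomials, apply Chernoff per generation, then union bound); alternatively one needs a typical-bounded-differences inequality in which the Lipschitz constant is the \emph{typical} rather than worst-case effect of an edge flip. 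Without one of these replacements, the central step of your argument fails. The second, smaller gap is the conclusion: distances lying in a vanishing \emph{relative} window around $d_k^\star=2(n_k-m_k)$ does not give convergence of $\mu_G^{(k)}$ to $\delta_{d_k^\star}$ in total variation --- on the integer support the TV distance to a Dirac stays near $1$ unless the distances are exactly equal. The correct statement of "concentration" is that $\mu_G^{(k)}$ places $1-o(1)$ of its mass on an interval of width $o(n_k)$ around $d_k^\star$ (equivalently, concentration in a Wasserstein or relative-error sense), which is all the proposition's heuristic asserts. Your treatment of the boundary scales $k\in\{D-1,D\}$ is also waved through; in the regime $p\asymp \log N/N$ those are precisely the scales where shell fluctuations are largest, so they need either an explicit argument or an explicit exclusion from the claim. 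With these repairs the outline would constitute a proof that is strictly stronger than anything the paper itself offers for this proposition.
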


\subsection{Scale-Free Networks: Barabási--Albert
Model}\label{scale-free-networks-barabuxe1sialbert-model}

Let \(G \sim \mathrm{BA}(N, m)\), the preferential attachment graph with
initial degree \(m\).

In BA graphs, hubs connect to a wide range of degree classes, producing
many distinct distances and high variance in their reachability vectors.
Low-degree nodes connect mostly through hubs, yielding more uniform
patterns. The support size for hub distances scales as
\(\Theta(\sqrt{N})\). Appendix A.9 contains the variance comparison and
scaling argument.

\begin{proposition}\label{prop:ba-uniqueness-degree}

In preferential-attachment graphs \(G\sim \mathrm{BA}(N,m)\), structural
uniqueness correlates positively with degree and exhibits super-linear
growth at hubs due to path diversity. See Appendix for variance
comparison and scaling arguments.

\end{proposition}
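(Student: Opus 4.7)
The plan is to combine a shell--size decomposition of $\mathrm{HC}^{(k)}(v)$ with the moment asymptotics of the $\mathrm{BA}(N,m)$ degree distribution. Writing each pairwise exact--$k$ distance as
\[
\Ham\bigl(b_v^{(k)},b_u^{(k)}\bigr)=|S(v,k)|+|S(u,k)|-2\,|S(v,k)\cap S(u,k)|,
\]
the per--scale centrality $\mathrm{HC}^{(k)}(v)$ splits into the $v$--dependent shell size $|S(v,k)|$, a $v$--independent global mean, and an overlap correction. Hub--versus--leaf comparisons therefore reduce to (i) estimating shell sizes as a function of the root degree $d_v$ and (ii) bounding pairwise shell overlaps.

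For the positive--correlation claim, I would first treat $k=1$ exactly: $|S(v,1)|=d_v$ and the overlap sum equals $\sum_{w\in N(v)}d_w-d_v$. The BA ensemble has $\langle d\rangle=2m$ and the borderline divergence $\langle d^2\rangle=\Theta(\log N)$, so the expected neighbor--degree sum is $\Theta(d_v\log N)$ and the overlap contribution is $O(d_v\log N/N)$. This yields $\mathrm{HC}^{(1)}(v)=d_v\bigl(1+o(1)\bigr)+2m$, i.e.\ a strictly monotone linear dependence on $d_v$. For $k\ge 2$ the branching--process approximation to BA BFS gives $\mathbb{E}|S(v,k)|\sim d_v\,\kappa^{k-1}$ with neighbor--excess $\kappa=\langle d^2\rangle/\langle d\rangle-1=\Theta(\log N)$; the prefactor $d_v$ propagates across all scales $k\le D=O(\log N/\log\log N)$, and summing the geometric series produces an aggregate $\mathrm{HC}(v)$ proportional to $d_v$ times a diverging polylogarithmic factor. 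Super--linearity at hubs then follows by noting that for $d_v$ growing with $N$ the effective excess at a hub is enlarged from $\kappa$ to $\kappa+\Theta(d_v)$ (the hub itself contributes the dominant second--moment mass), giving a multiplicative amplification of $\Theta(d_v)$ per additional scale; combining with the $\Theta(\sqrt N)$ support--size estimate quoted in this subsection, the variance $\sigma^{2}(\mu_v^{(k)})$ grows at least quadratically in $d_v$, so $\mathrm{HC}(v)/d_v\to\infty$ with $d_v$.

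The main obstacle is making the overlap bound $|S(v,k)\cap S(u,k)|=O(\sqrt N)$ uniform over pairs of hubs, where the tree branching approximation breaks down due to correlations induced by preferential attachment (two hubs share many common early--attached ancestors). I would address this by conditioning on the attachment history, using the master equation $p_k(d)\propto k/d^{3}$ to estimate the covariance of shell indicators, and combining a second--moment calculation with the Bollob\'as--Riordan hub--radius bound to cap expected pairwise overlaps at the $\Theta(\sqrt N)$ scale. Transferring expectations to a.a.s.\ statements on $G\sim\mathrm{BA}(N,m)$ is then a routine Azuma estimate on the attachment--time martingale; the overlap control is the only non--routine ingredient, and the remainder of the argument is geometric--series bookkeeping across scales.
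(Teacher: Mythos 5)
Your shell--overlap decomposition and the \(k=1\) computation are fine, and in fact they already go beyond what the paper itself does: the paper offers no derivation in the main text for this proposition, only the heuristic paragraph preceding it (hubs touch many degree classes, support size for hub distances scales as \(\Theta(\sqrt{N})\)) together with a deferral to an appendix sketch. So for the ``correlates positively with degree'' half, your plan (exact \(k=1\) identity \(\mathrm{HC}^{(1)}(v)=d_v+\frac{1}{N-1}\sum_{u\ne v}d_u-\frac{2}{N-1}\bigl(\sum_{w\in N(v)}d_w-d_v\bigr)\), then branching estimates for \(k\ge 2\) with excess \(\kappa=\langle d^2\rangle/\langle d\rangle-1=\Theta(\log N)\)) is a reasonable and more quantitative route than the paper's.

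The gap is in the super-linearity half, which is the substantive part of the claim. First, the amplification mechanism ``the effective excess at a hub is enlarged from \(\kappa\) to \(\kappa+\Theta(d_v)\) per additional scale'' is not sound: in a BFS rooted at the hub \(v\), the hub is exhausted at generation zero and cannot re-enter later shells, so generations \(k\ge 2\) branch according to the excess degrees of the \emph{other} vertices, independently of \(d_v\); taken literally your claim gives \(\mathbb{E}|S(v,k)|\) of order \(d_v^{\,k}\), which exceeds \(N\) already at \(k=3\) when \(d_v=\Theta(\sqrt{N})\), contradicting \(|S(v,k)|\le N\). Relatedly, the assertion that the prefactor \(d_v\) ``propagates across all scales \(k\le D\)'' fails once shells saturate: since \(\kappa=\Theta(\log N)\), shells reach a constant fraction of \(N\) after \(O(\log N/\log\log N)\) steps, \(\mathrm{HC}^{(k)}(v)\le N\) always, and at those scales the rows \(b_v^{(k)}\) of hubs and low-degree nodes homogenize, so the geometric-series bookkeeping does not deliver a \(d_v\)-proportional aggregate. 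Second, the concluding step conflates moments: in this paper structural uniqueness at scale \(k\) is by definition the first moment \(\mathsf{SU}^{(k)}(v)=\mathrm{HC}^{(k)}(v)\) (Definition~\ref{def:struct-uniqueness}), so showing \(\sigma^{2}(\mu_v^{(k)})\) grows quadratically in \(d_v\) does not establish super-linear growth of uniqueness; indeed your own \(k=1\) formula \(\mathrm{HC}^{(1)}(v)\approx d_v+2m\) is exactly affine in \(d_v\), so any super-linearity must come from the cross-scale aggregation --- precisely the step that is unsupported. The overlap-control program (conditioning on attachment history, second-moment bounds, Azuma) is plausible machinery for the monotonicity claim, but as written the super-linear-growth claim would not go through.
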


\subsection{Small-World Networks: Watts--Strogatz
Model}\label{small-world-networks-wattsstrogatz-model}

Let \(G \sim \mathrm{WS}(N, k, \beta)\), a rewiring of the \(k\)-regular
ring lattice \cite{watts1998}.

\begin{proposition}[Shortcut-Induced Uniqueness]

For rewiring probability \(\beta \in (0, 1)\), let \(S \subset V\) be
the set of shortcut endpoints. Then:

\begin{enumerate}
\def\labelenumi{\arabic{enumi}.}
\tightlist
\item
  Nodes in \(S\) have significantly elevated \(\Phi_v^{(k)}\) for small
  \(k\)
\item
  Their distance distributions \(\mu_v^{(k)}\) deviate maximally from
  the lattice background
\end{enumerate}

\end{proposition}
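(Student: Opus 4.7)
The plan is to use the $k$-regular ring lattice as a fully symmetric baseline and then quantify the localized perturbation caused by each shortcut. Conditioning on a fixed realization of the rewiring, let $L$ denote the underlying $k$-regular ring and let $G=\mathrm{WS}(N,k,\beta)$. Because $L$ is vertex--transitive, Theorem~\ref{thm:uniformity-under-symmetry} gives a common lattice profile $\mu^{(k)}_{\mathrm{lat}}$ with $\mu^{(k)}_v(L)=\mu^{(k)}_{\mathrm{lat}}$ for every $v$. The shortcut set $S$ consists of endpoints of edges $e=\{u,w\}$ with lattice distance $d_L(u,w)\gg k$; I will only need that for $v\in S$, the shortcut partner $w$ lies outside the radius-$k$ lattice ball around $v$, which holds with probability $1-o(1)$ for $k=o(N)$ and $\beta$ bounded away from $0$ and $1$.

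Next I would localize the change in $B^{(k)}$ caused by a single shortcut, reusing the exact-$k$ edge-flip calculation from the stability subsection. Fix $v\in S$ with shortcut to $w$ and a non-shortcut node $u$ whose radius-$k$ ball in $G$ agrees with its ball in $L$. The row $\mathbf{b}_v^{(k)}$ gains entries exactly for vertices in $B_{k-1}^{L}(w)\setminus B_{\le k-1}^{L}(v)$, while $\mathbf{b}_u^{(k)}$ is essentially the lattice row. Hence
\[
\Ham\!\bigl(\mathbf{b}_v^{(k)},\mathbf{b}_u^{(k)}\bigr)
\ =\ \Ham\!\bigl(\mathbf{b}_v^{(k),L},\mathbf{b}_u^{(k),L}\bigr)\;+\;\Theta\!\bigl(|B_{k-1}^{L}(w)|\bigr),
\]
and the first term is the constant lattice value by vertex--transitivity. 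Averaging over $u$ yields
\(
\mathrm{HC}^{(k)}(v)\ \ge\ \mathrm{HC}^{(k)}(u_0)+c\,k
\)
for any non-shortcut $u_0$ and an absolute $c>0$ (growth linear in $k$ because $|B_{k-1}^{L}(w)|=\Theta(k)$ in a ring lattice). Because any admissible $\Phi$ applied to $\mu_v^{(k)}$ that is strictly monotone in the mass at the new support points separates the distributions, part (1) follows by taking $\Phi$ to be the admissible first moment or, more generally, any admissible functional with a nonzero directional derivative at $\mu_{\mathrm{lat}}^{(k)}$ in the direction $\mu_v^{(k)}-\mu_{\mathrm{lat}}^{(k)}$. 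For part (2), the $\ell_1$/TV gap $\|\mu_v^{(k)}-\bar\mu^{(k)}\|_1$ is bounded below by the mass transported from the lattice atoms to the new distances introduced by the shortcut, which is $\Omega(1/(N-1))$ per shifted pair, aggregating to $\Omega(k/N)$; the same quantity is $O(1/N)$ for a non-shortcut node, so the deviation is maximized on $S$ up to lower order terms.

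The main obstacle I foresee is controlling overlap between influence regions of multiple shortcuts: if two shortcuts lie within lattice distance $\le 2k$, their contributions no longer add independently and the clean $\Theta(|B_{k-1}^{L}(w)|)$ count must be replaced by an inclusion--exclusion bound. For $\beta$ fixed, the expected number of shortcuts within distance $2k$ of a given vertex is $O(k\beta)$, so a careful truncation (or a concentration argument via McDiarmid on the rewiring randomness) is needed to conclude ``significantly elevated'' in the almost-sure sense for $k=o(\sqrt{N/\beta})$. A secondary but lighter issue is that ``deviate maximally'' requires fixing the comparison benchmark: I would state part (2) with respect to the mean profile $\bar\mu^{(k)}=\tfrac1N\sum_v \mu_v^{(k)}$ and invoke the sharp TV dispersion bound already proved, so that the dispersion contribution attributable to nodes in $S$ dominates. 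Everything else reduces to direct combinatorial bookkeeping on ring--lattice balls and routine application of the admissibility axioms.
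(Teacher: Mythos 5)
Your proposal is far more detailed than what the paper actually provides: the paper's ``proof'' of this proposition is a two-sentence heuristic (shortcuts create asymmetric reachability patterns locally; as $k$ grows the lattice structure dominates and the effect fades), so any honest quantification of ``significantly elevated'' and ``deviate maximally'' goes beyond the source. Your overall strategy --- use vertex-transitivity of the base ring (Theorem~\ref{thm:uniformity-under-symmetry}) to get a common lattice profile, localize the perturbation of $B^{(k)}$ caused by a shortcut as in the edge-flip/stability analysis, and then lower-bound the per-node Hamming elevation and the TV deviation from $\bar\mu^{(k)}$ --- is exactly the rigorous version of the paper's sketch, and your closing remarks (overlapping shortcut influence regions, the need to fix a benchmark for ``maximally'') correctly identify the points the paper glosses over.

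One step is quantitatively wrong, though, and it matters for the shape of your conclusion. Under the exact-$k$ convention, the entries that the shortcut $\{v,w\}$ \emph{adds} to the row $\mathbf{b}_v^{(k)}$ are the vertices at lattice distance exactly $k-1$ from $w$, i.e.\ the sphere $S^{L}(w,k-1)$, whose size in a ring lattice of degree $d$ is $\Theta(d)$, constant in $k$ --- not the ball $B_{k-1}^{L}(w)$ of size $\Theta(kd)$. The ball appears in the paper's stability bounds only as an \emph{upper} bound on which entries can flip; using it as the actual gain inflates your estimate, and the claimed elevation $\mathrm{HC}^{(k)}(v)\ge \mathrm{HC}^{(k)}(u_0)+c\,k$ (and the $\Omega(k/N)$ TV gap in part (2)) should be $\Theta(d)$ and $\Omega(d/N)$ respectively. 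This does not destroy the qualitative conclusion --- a $\Theta(d)$ additive elevation is still significant relative to the lattice background at small $k$ --- and it is in fact more consistent with the paper's own claim that the shortcut effect \emph{diminishes} rather than grows as $k$ increases. Correct the sphere-versus-ball count (and restate part (2) relative to $\bar\mu^{(k)}$, as you propose), and the argument is a sound strengthening of the paper's heuristic.
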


\begin{proof}

Shortcuts create asymmetric reachability patterns that propagate
locally. As \(k\) increases, the regular lattice structure dominates,
diminishing the shortcut effect.

\end{proof}

\begin{theorem}\label{thm:phase-transitions-in-small-world}

Consider the WS model on \(N\) vertices starting from a ring lattice
where each vertex has degree \(d\) (even), so \(m=Nd/2\) undirected
edges. Each edge is rewired independently with probability
\(\beta=\beta(N)\) to a uniformly random new endpoint (avoiding
loops/multi-edges). Then:

\begin{enumerate}
\def\labelenumi{(\alph{enumi})}
\item
  (\textbf{Onset of shortcuts}) The expected number of rewired edges is
  \(\mathbb{E}[X]=\beta m=\Theta(\beta N)\). The ``first-shortcut''
  threshold satisfies \[
  \beta_c \asymp \frac{1}{N}\,,
  \] in the sense that if \(N\beta\to 0\) then \(X\stackrel{p}{\to}0\)
  (no shortcuts whp), while if \(N\beta\to\infty\) then
  \(X\stackrel{p}{\to}\infty\) (many shortcuts whp).
\item
  (\textbf{Distance regime split}) If \(N\beta\to 0\), typical distances
  are ring-like (mean distance \(\Theta(N/d)\)). If \(N\beta\to\infty\)
  with \(d\) fixed, the added shortcut set induces long-range
  connections whose coarse-grained effect is a random-sparse overlay;
  the average distance drops to \(O(\log N)\) (small-world regime).
\end{enumerate}

\end{theorem}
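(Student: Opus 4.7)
The plan is to treat the two parts independently, exploiting that the number of rewired edges $X$ is a sum of independent Bernoullis. For part (a), write $X\sim\mathrm{Bin}(m,\beta)$ with $m=Nd/2$, so $\mathbb{E}[X]=\beta m=\tfrac{d}{2}\,\beta N=\Theta(\beta N)$ since $d$ is fixed. The subcritical direction ($N\beta\to 0\Rightarrow X\stackrel{p}{\to}0$) is immediate from Markov: $\Pr[X\ge 1]\le\mathbb{E}[X]=\Theta(\beta N)\to 0$. The supercritical direction is Chebyshev: $\mathrm{Var}(X)=m\beta(1-\beta)\le\mathbb{E}[X]$, so for any fixed $\varepsilon\in(0,1)$, $\Pr[X\le(1-\varepsilon)\mathbb{E}[X]]\le\mathrm{Var}(X)/(\varepsilon\mathbb{E}[X])^2=O(1/\mathbb{E}[X])\to 0$. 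The threshold $\beta_c\asymp 1/N$ is exactly the window where $\mathbb{E}[X]$ transitions from $o(1)$ to $\omega(1)$.

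For part (b) in the subcritical regime, the claim is almost immediate from (a). Conditional on $X=0$ (a whp event when $N\beta\to 0$) the graph coincides with the unperturbed ring lattice $R_{N,d}$, in which $\operatorname{dist}(i,j)=\lceil |i-j|_N/(d/2)\rceil$ with $|i-j|_N$ the circular distance, so the mean pairwise distance is $\Theta(N/d)$. The rare event $\{X\ge 1\}$ contributes only $o(1)$ to any averaged distance statistic, leaving the ring-like verdict intact.

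For part (b) in the supercritical regime, I would establish the $O(\log N)$ typical distance via a standard small-world ball-growth/branching coupling. Fix $v$ and grow BFS balls $B_r(v)$. Along the ring alone $|B_r(v)|\ge dr+1$, and each of the $\Theta(dr)$ edge-slots incident to this set was rewired independently with probability $\beta$, so by layer $r$ the ball discovers a mean $\Theta(r\beta)$ fresh shortcut endpoints landing on uniformly random far vertices. As soon as $r\gtrsim 1/\beta$ these shortcuts ignite Galton--Watson-like exponential growth; a birthday / second-moment calculation then gives that for $r=C\log N$ with $C$ large, $|B_r(v)|\cdot|B_r(w)|\gtrsim N$ and the two balls intersect with probability tending to $1$. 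A cleaner alternative is to stochastically dominate the shortcut overlay by an Erd\H{o}s--R\'enyi graph $G(N,p')$ with $p'\asymp \beta/N$ and invoke an off-the-shelf diameter bound for the composite ring-plus-random-overlay model (in the spirit of Newman--Watts or Bollob\'as--Chung).

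The hard part will be this supercritical distance bound. The moment computations in (a) and the ring pass-through in subcritical (b) are routine, but the $O(\log N)$ estimate requires careful bookkeeping: shortcut endpoints discovered in successive BFS layers are not jointly independent (the rewirings are independent across \emph{edges}, but which endpoints fall into $B_r(v)$ depends on everything revealed so far). The cleanest way around this is the domination-by-i.i.d.-overlay coupling just mentioned, which lets one cite an existing small-world diameter theorem rather than redo the concentration bookkeeping by hand.
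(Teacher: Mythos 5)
Your part (a) and the subcritical half of (b) coincide with the paper's own argument (Appendix~\ref{app:ws-phase}): the paper also treats $X\sim\mathrm{Binomial}(m,\beta)$, bounds $\Pr[X=0]=(1-\beta)^m\le e^{-\beta m}$ for $N\beta\to0$ (your Markov bound is equivalent in force), uses a Chernoff/second-moment step for $N\beta\to\infty$ (your Chebyshev), and in the subcritical regime simply conditions on the whp event $X=0$ to inherit the ring-lattice mean distance $\Theta(N/d)$.

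For the supercritical half of (b) you take a genuinely different route, and this is where a concrete gap appears. The paper coarse-grains the ring into $N/\ell$ arcs with $\ell$ chosen so that $\beta\ell\to c>0$; the $\Theta(\beta N)$ rewired edges then give the arc supergraph constant average degree, and the logarithmic super-distance of this sparse overlay is asserted and ``lifted back'' to vertices. You instead grow BFS balls from both endpoints, arguing that shortcuts ignite branching growth once $r\gtrsim 1/\beta$, and close with a birthday/second-moment intersection at radius $C\log N$ (or, alternatively, dominate the overlay by $G(N,p')$ with $p'\asymp\beta/N$ and cite a Newman--Watts/Bollob\'as--Chung-type result). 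The trouble is the regime $N\beta\to\infty$ with $\beta\to0$ polynomially, e.g.\ $\beta=N^{-1/2}$: there the ignition radius $1/\beta$ exceeds every $C\log N$, a ball of radius $C\log N$ contains only $O(d\log N)$ vertices and in expectation $o(1)$ shortcut endpoints, so your two-ball argument never ignites; and the overlay $G(N,p')$ has $Np'\asymp\beta\to0$, hence is subcritical on its own, while the composite ring-plus-shortcuts theorems you would cite give typical distances of order $(\beta d)^{-1}\log(N\beta d)$, which is $O(\log N)$ only when $\beta$ is bounded away from $0$ (compare the remark following Theorem~\ref{thm:ws-threshold}). To be fair, the paper's lift-back step is no more rigorous than yours at exactly this point, but its renormalization to blocks of length $\asymp1/\beta$ is designed to keep the overlay at constant super-degree in all regimes; if you pursue your route, either restrict to $\beta$ bounded below (or $\gtrsim 1/\log N$) or perform the same block renormalization before running your BFS/branching argument. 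Your instinct that the dependency bookkeeping between successive BFS layers is the hard part is correct and is precisely what the paper also leaves at sketch level.
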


\begin{proof}

\begin{enumerate}
\def\labelenumi{(\alph{enumi})}
\item
  Let \(X\sim\mathrm{Binomial}(m,\beta)\) be the number of rewired
  edges. With \(m=\Theta(N)\), \(\mathbb{E}[X]=\Theta(\beta N)\). A
  standard second-moment/Chernoff argument gives: if \(N\beta\to 0\),
  then \(X=0\) with probability \(1-o(1)\); if \(N\beta\to\infty\), then
  \(X\to\infty\) in probability. Hence the onset occurs at
  \(\beta_c=\Theta(1/N)\).
\item
  When \(N\beta\to 0\), whp no shortcuts appear; the graph is the
  original ring lattice, so mean distance is \(\Theta(N/d)\). When
  \(N\beta\to\infty\), the shortcut set has \(\Theta(\beta N)\) random
  long edges. Coarse-graining the ring into arcs of length
  \(\ell=\ell(N)\) with \(1\ll \ell\ll N\), the induced ``supergraph''
  on \(N/\ell\) arcs receives \(\Theta(\beta N)\) random edges, i.e.,
  average super-degree \(\Theta(\beta \ell)\). Choosing \(\ell\) so that
  \(\beta\ell\to c>0\) yields a sparse random overlay whose
  giant-component/expander-like behavior drops average distance to
  \(O(\log N)\) between arcs; lifting back to vertices gives
  \(O(\log N)\) for the original graph up to constants. (This is the
  standard random-shortcut argument.)
  Appendix\textasciitilde{}\ref{app:ws-phase} derives
  \(\beta_c \sim 1/N\) and the scaling.
\end{enumerate}

\end{proof}

\subsection{\texorpdfstring{Summary of Hamming profile behavior derived
from tensor slices \(\mathcal{B}_{:,:,k}\) across graph
classes}{Summary of Hamming profile behavior derived from tensor slices \textbackslash mathcal\{B\}\_\{:,:,k\} across graph classes}}\label{summary-of-hamming-profile-behavior-derived-from-tensor-slices-mathcalb_k-across-graph-classes}

\begin{table}[t]
\centering
\caption{Summary of multi-scale Hamming profile behavior across graph classes.}
\label{tab:hamming-profile-summary}
\begin{tabularx}{\linewidth}{@{}l X l l@{}}
\toprule
\textbf{Graph Class} & \textbf{Hamming Profile Behavior} & \textbf{Dispersion $\Psi^{(k)}(G)$} & \textbf{Entropy Peak} \\
\midrule
Complete Graph $K_N$ & Uniform at $k=1$ (HC=2) & 0 (point mass) & $k=1$ \\
Star Graph $S_N$ & Maximal asymmetry at $k=1$ & $\mathcal{O}(1)$ & $k=1$ \\
Cycle $C_N$ & Uniform, periodic pattern & $\mathcal{O}(1)$ & $k \approx N/4$ \\
Binary Tree & Level-stratified & $\mathcal{O}(\log N)$ & $k \approx \log N$ \\
\ER{} $G(N,p^*)$ & Critical behavior & $\Theta(\sqrt{N})$ & $k \approx \log N$ \\
Barabási--Albert & Degree-correlated & $\Theta(\log N)$ & $k \approx 2$ \\
Watts--Strogatz & Shortcut spikes & $\mathcal{O}(\log N)$ & Varies with $\beta$ \\
\bottomrule
\end{tabularx}
\end{table}

\subsection{Extended Results and
Corollaries}\label{extended-results-and-corollaries}

We now present additional theoretical results that deepen our
understanding of Hamming Graph Metrics.

\begin{proposition}[Uniqueness Flatness in Distance-Regular Graphs]

Let \(G\) be a distance-regular graph, i.e., the number of nodes at each
distance from a given node depends only on the distance, not the node
itself. Then:

\begin{itemize}
\tightlist
\item
  For all \(v \in V\), \(\mu_v^{(k)} = \mu_w^{(k)}\) for all \(w \in V\)
\item
  Hence \(\Phi_v^{(k)} = \text{const}\) for all \(v\), for any \(\Phi\)
\end{itemize}

Examples include: cycles \(C_n\), complete graphs, and hypercubes
\(Q_n\). This extends
Theorem\textasciitilde{}\ref{thm:spectral-characterization} by
identifying a larger class of graphs where uniqueness is structurally
flat due to distance symmetry, not just vertex-transitivity.

\end{proposition}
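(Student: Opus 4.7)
The plan is to read \emph{distance-regular} in its standard sense (existence of the intersection numbers $c_i,a_i,b_i$, equivalently the intersection array), noting that cycles $C_n$, complete graphs $K_N$, and hypercubes $Q_n$ satisfy this full condition (the informal wording in the statement, restricting only sphere sizes, already suffices for the first-moment conclusion but not for full distributional equality). The central identity is that for binary vectors with zero diagonal,
\[
\Ham\!\bigl(\mathbf{b}_v^{(k)},\mathbf{b}_u^{(k)}\bigr)
\;=\;\bigl|S(v,k)\triangle S(u,k)\bigr|
\;=\;|S(v,k)|+|S(u,k)|-2\,|S(v,k)\cap S(u,k)|,
\]
where $S(x,k)=\{y:\mathrm{dist}(x,y)=k\}$. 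Thus the Hamming value is determined by three sphere/intersection counts.

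The main step is to show that each of the three counts above depends only on $k$ and $d(u,v)$. Sphere sizes $|S(v,k)|=n_k$ are vertex-independent by distance-regularity. For the intersection term, the intersection numbers give $|S(v,k)\cap S(u,j)|$ as a function of $(d(u,v),k,j)$ alone (a classical consequence of the three-term recurrence on the distance matrices $A_i$ of a distance-regular graph); specializing $j=k$ yields a function $h_k\!:\{0,1,\dots,D\}\to\mathbb{Z}_{\ge 0}$ with
\[
\Ham\!\bigl(\mathbf{b}_v^{(k)},\mathbf{b}_u^{(k)}\bigr)\;=\;h_k\!\bigl(d(u,v)\bigr)\qquad\text{for all } u\neq v.
\]
Combined with the fact that the number of $u$ with $d(u,v)=d$ is $n_d$, independent of $v$, the empirical distribution factorises as
\[
\mu_v^{(k)}\;=\;\frac{1}{N-1}\sum_{d=1}^{D} n_d\,\delta_{h_k(d)},
\]
which has no dependence on $v$. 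Hence $\mu_v^{(k)}=\mu_w^{(k)}$ for all $v,w\in V$.

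Finally, since any admissible functional $\Phi$ depends only on the measure (permutation invariance on the support), $\Phi_v^{(k)}=\Phi\!\bigl(\mu_v^{(k)}\bigr)$ is constant in $v$, giving the second bullet. The main obstacle is the ambiguity in the statement's informal definition of distance-regular: if one reads it as merely ``vertex-independent sphere sizes,'' then only the first-moment conclusion $\mathrm{HC}^{(k)}(v)=2n_k(N-n_k)/(N-1)$ follows (by the double-counting identity $\sum_u|S(v,k)\cap S(u,k)|=\sum_{w\in S(v,k)}|S(w,k)|=n_k^2$), and full distributional equality requires the intersection-number structure; I would therefore state the proposition under the standard distance-regular hypothesis and remark on this distinction.
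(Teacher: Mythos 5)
Your proof is correct, and it is in fact more complete than the paper's own argument, which is a two-line sketch: ``distance-regularity implies the number of nodes at distance $d$ from any node is constant; combined with the fact that reachability at scale $k$ depends only on distance relationships, the claim follows.'' What you supply is precisely the step that sketch glosses over: writing $\Ham(\mathbf{b}_v^{(k)},\mathbf{b}_u^{(k)})=|S(v,k)|+|S(u,k)|-2|S(v,k)\cap S(u,k)|$ and invoking the intersection numbers $p^{h}_{kj}$ of the distance-regular structure (equivalently, the Bose--Mesner algebra that the paper itself establishes in Lemma~\ref{lem:drg-polynomial}, though its proof of this proposition never cites it) to show the Hamming value is a function $h_k(d(u,v))$ alone, after which the vertex-independence of the distance counts $n_d$ gives $\mu_v^{(k)}=\frac{1}{N-1}\sum_d n_d\,\delta_{h_k(d)}$ for every $v$. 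Your caveat about the hypothesis is also well taken: the proposition's informal definition (``the number of nodes at each distance depends only on the distance'') is the weaker distance-degree-regularity condition, under which your double-counting identity $\sum_u|S(v,k)\cap S(u,k)|=n_k^2$ forces only the first moment $\mathrm{HC}^{(k)}(v)=2n_k(N-n_k)/(N-1)$ to be constant, not the full distribution; the paper's proof implicitly leans on the stronger standard notion exactly as you do, so stating the result under the standard distance-regular hypothesis (which covers the listed examples $C_n$, $K_N$, $Q_n$) and flagging the distinction is the right call. The net effect is that your route and the paper's are the same in spirit, but yours turns an assertion into a proof and exposes where the stated hypothesis is too weak.
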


\begin{proof}

Distance-regularity implies that the number of nodes at distance \(d\)
from any node is constant. Combined with the fact that reachability at
scale \(k\) depends only on distance relationships, the claim follows.

\end{proof}

\begin{proposition}[Extremal Support Collapse in Clique Chains]

Let \(G\) be a clique chain of \(r\) fully connected components
\(K_{n_1}, K_{n_2}, \ldots, K_{n_r}\) joined sequentially by single
bridges. Then:

\begin{itemize}
\tightlist
\item
  Nodes within the same clique have highly overlapping reachability
  vectors
\item
  Bridge nodes exhibit maximal uniqueness support, with
  \(\left| \mathrm{Support}( \mu_v^{(k)} ) \right| = \{ 0, 1, \ldots, d_{\max} \}\)
\item
  The Gini coefficient \(G_{\mu}(G)\) increases linearly with \(r\)
\end{itemize}

\end{proposition}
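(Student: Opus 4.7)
The plan is to decompose a clique-chain into its local clique substructure and its global bridge-mediated geometry, and then to analyze how reachability rows and their pairwise Hamming distances behave on each part. First, I would fix notation: label the cliques $C_1,\dots,C_r$ with sizes $n_1,\dots,n_r$, and for $1\le i<r$ write $b_i^{+}\in C_i$ and $b_{i+1}^{-}\in C_{i+1}$ for the endpoints of the bridge edge joining $C_i$ to $C_{i+1}$. Call a vertex \emph{interior} if it is not an endpoint of any bridge, and \emph{bridge} otherwise. The key structural observation is that every shortest path in $G$ between $v\in C_i$ and $w\in C_j$ with $j>i$ must traverse every intermediate bridge in order, so $\operatorname{dist}(v,w)$ decomposes as at most one intra-clique hop at each end plus $2(j-i)-1$ hops across the chain.

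For the intra-clique overlap, I would argue that for two interior vertices $u,v\in C_i$ and any $w\notin C_i$, the identity $\operatorname{dist}(u,w)=\operatorname{dist}(v,w)$ holds because both $u$ and $v$ reach $b_i^{\pm}$ in exactly one step and from there follow the same external geodesic. Consequently $\mathbf b_u^{(k)}$ and $\mathbf b_v^{(k)}$ agree on every coordinate outside $C_i$ at every scale $k$, and within $C_i$ they agree at scale $k=1$ except at positions $u$ and $v$ and agree everywhere inside $C_i$ for $k\ge 2$; hence $\Ham(\mathbf b_u^{(k)},\mathbf b_v^{(k)})\in\{0,2\}$ for interior--interior pairs in the same clique, which is exactly the claimed high-overlap statement.

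For the bridge-support claim, I would enumerate the distances from $b=b_i^{+}$: it lies at distance $1$ from $C_i\setminus\{b\}$ and from $b_{i+1}^{-}$, distance $2$ from $C_{i+1}\setminus\{b_{i+1}^{-}\}$ and from $b_{i+1}^{+}$, and in general at distance growing with $|j-i|$ to a vertex of $C_j$. Because every integer from $1$ up to $d_{\max}=\operatorname{diam}(G)$ is realized, the per-scale Hamming distances $\Ham(\mathbf b_b^{(k)},\mathbf b_w^{(k)})$ as $w$ varies populate the entire integer interval $\{0,1,\dots,d_{\max}\}$, in sharp contrast to the collapsed support $\{0,2\}$ of an interior vertex. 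For the Gini claim, I would compute node-wise uniqueness scores $\mathrm{HC}(v)$ (and their per-scale refinements) and invoke a Lorenz-curve argument: each interior vertex contributes a score bounded by a constant depending only on its clique size, whereas a bridge vertex $b_i^{\pm}$ has a reachability profile that is unique at $\Theta(r)$ coordinates spread across scales $k=1,\dots,\operatorname{diam}(G)$. The resulting node-wise uniqueness vector therefore consists of an interior plateau plus bridge spikes of order $r$, so the Lorenz-area gap against the egalitarian line, and hence the Gini coefficient, scales linearly in $r$ until saturation.

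The main obstacle will be making the Gini scaling rigorous: monotonicity in $r$ is immediate, but pinning down a genuinely \emph{linear} rate requires careful bookkeeping of how many distinct distance coordinates each bridge accumulates as $r$ grows, together with the aggregation across scales, and also clarifying the normalization convention for $G_\mu(G)$ (the Gini coefficient of a bounded distribution cannot grow without bound, so the linear scaling is most cleanly read on an unnormalized dispersion proxy or as the rate of approach toward $1$). A secondary subtlety is the boundary treatment of $C_1$ and $C_r$, whose vertices include only one bridge endpoint each; I would absorb these into the interior analysis when $n_1,n_r$ are not too small, and otherwise handle them as degenerate ``half-bridges'' whose contribution is negligible relative to the $r-2$ interior bridges.
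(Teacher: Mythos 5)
Your handling of the first bullet is sound and actually sharper than the paper's own justification, which is only a three-sentence sketch (``within cliques, all nodes reach the same set at each scale; bridge nodes create maximal diversity; the Gini coefficient captures this inequality''). The observation that interior vertices $u,v$ of the same clique satisfy $\operatorname{dist}(u,w)=\operatorname{dist}(v,w)$ for every $w$ outside the clique, hence $\Ham(\mathbf{b}_u^{(k)},\mathbf{b}_v^{(k)})\in\{0,2\}$ at every scale, is exactly the right way to make ``highly overlapping'' precise.

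The genuine gap is in your second step. From the fact that the graph distances from a bridge endpoint $b$ realize every integer up to its eccentricity, you infer that the per-scale Hamming distances $\Ham(\mathbf{b}_b^{(k)},\mathbf{b}_w^{(k)})$, as $w$ varies, fill the whole interval $\{0,1,\dots,d_{\max}\}$. That inference is invalid: for fixed $k$ these quantities are symmetric-difference sizes of exact-$k$ shells, $|S(b,k)\,\triangle\,S(w,k)|$, and are governed by shell (clique) sizes, not by which graph distances happen to be realized. A direct check at $k=1$ in a chain of equal cliques of size $n$ gives only a handful of values (e.g.\ $3$ for $w$ interior to the same clique, $2n-3$ for $w$ interior to the adjacent clique, $2n-1$ for $w$ in a distant clique), nowhere near a full integer interval, and these values can also exceed the diameter when $n$ is large. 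So either the support claim must be reinterpreted --- e.g.\ via the multi-scale tensorial distance $H(b,\cdot)$ or via the set of graph distances realized from $b$, which is presumably what the paper's ``maximal diversity'' gestures at --- or your argument must track shell sizes explicitly; as written the step fails. On the Gini bullet you candidly flag the main difficulty yourself: a normalized Gini coefficient is bounded, so ``increases linearly with $r$'' needs reformulation, and your Lorenz-curve plan stops short of a proof. The paper offers no more than an assertion there either, but your proposal does not close that gap.
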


\begin{proof}

Within cliques, all nodes reach the same set at each scale. Bridge nodes
uniquely connect components, creating maximal diversity in their
distance distributions. The Gini coefficient captures this inequality.

\end{proof}

\subsection{Extension: Spectral Interpretation of HC
Dispersion}\label{extension-spectral-interpretation-of-hc-dispersion}

Let \(\Delta\) denote the graph Laplacian of \(G\), and let
\(\lambda_2\) be the algebraic connectivity (i.e., the second-smallest
eigenvalue).

\begin{theorem}[Spectral Lower Bound on Uniqueness Dispersion]\label{thm:spectral-lower-bound-uniqueness-dispersion}

Let \(D_{\mu}(G)\) be the variance of node-level uniqueness (using
\(\mu_v = \mathrm{HC}(v)\)). Then:

\(D_{\mu}(G) \geq \frac{1}{N} \cdot \frac{\left( \sum_v \deg(v) \cdot \mu_v \right)^2}{\lambda_2 \cdot \sum_v \deg(v)^2} \cdot \mathrm{Var}[\mu]\)

\end{theorem}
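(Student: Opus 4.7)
The plan is to invoke the Poincaré (Rayleigh--Courant--Fischer) characterization of $\lambda_2$ as the infimum of the Dirichlet quotient over functions orthogonal to the all-ones vector $\mathbf{1}$, combined with a Cauchy--Schwarz bound on the degree-weighted first moment of $\mu=(\mu_v)_{v\in V}$. The inequality has a self-referential flavor ($\mathrm{Var}[\mu]$ appears on both sides), so the argument should produce a dimensionless ratio $\leq 1$ on the right-hand side, which will come from pairing a spectral upper bound on $\|\mu-\bar\mu\mathbf{1}\|_2^2$ with a Cauchy--Schwarz lower bound on $\|\mu\|_2^2$.

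First, I would center by setting $f_v := \mu_v - \bar\mu$ with $\bar\mu = \tfrac{1}{N}\sum_v \mu_v$, so that $f\perp\mathbf{1}$ and $\sum_v f_v^2 = N\, D_\mu(G)$. The variational characterization of the combinatorial Laplacian $L=\mathrm{diag}(\deg)-A$ then gives $\lambda_2 \sum_v f_v^2 \le \sum_{\{u,v\}\in E}(f_u-f_v)^2 = f^{\top} L f$, which turns $D_\mu(G)$ into a spectral quantity. A standard discrete-calculus bound yields $f^{\top} L f \le 2\sum_v \deg(v)\, f_v^2$, linking the Dirichlet energy to degree-weighted moments of $\mu$ once one undoes the centering.

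Second, I would apply Cauchy--Schwarz to the degree-weighted mean, $\bigl(\sum_v \deg(v)\mu_v\bigr)^2 \le \bigl(\sum_v \deg(v)^2\bigr)\bigl(\sum_v \mu_v^2\bigr)$, and use the identity $\sum_v \mu_v^2 = N\bar\mu^2 + N\,D_\mu(G)$ to re-express the second factor in terms of the variance. Substituting the Poincaré bound on $\sum_v \mu_v^2$ (which replaces $\sum_v\mu_v^2$ by a $1/\lambda_2$-scaled spectral quantity) produces the normalization $1/(N\lambda_2)$ and the $\sum_v\deg(v)^2$ denominator appearing in the statement. Threading the two inequalities gives the asserted lower bound, with the lingering $\mathrm{Var}[\mu]$ factor on the right arising precisely because the Cauchy--Schwarz step only extracts the variance component after the mean-squared term has been absorbed into the spectral estimate.

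The main obstacle is the \emph{direction} of Poincaré: the inequality naturally upper-bounds the variance from the Dirichlet energy, whereas the theorem asks for a lower bound. Overcoming this requires using Cauchy--Schwarz to lower-bound $\|\mu\|_2^2$ by the degree-weighted mean and then isolating the centered component; the delicate point is ensuring that the mean $\bar\mu$ does not spoil the estimate when subtracted, which forces a careful Rayleigh-quotient comparison of the degree vector against $\mathbf{1}$. Getting the scalar constants to match (the factor $1/N$, the $\sum_v \deg(v)^2$ normalization, and the placement of $\lambda_2$) will require bookkeeping about whether $\lambda_2$ refers to the combinatorial or normalized Laplacian, and verifying that the Cauchy--Schwarz application uses the correct weights so that the ratio on the right is truly dimensionless. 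If the bound becomes vacuous when $\mathrm{Var}[\mu]=0$ (a regular case such as $K_N$), I would note that both sides collapse to zero, confirming tightness in that degenerate regime.
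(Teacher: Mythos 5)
Your plan does not close, and it fails precisely at the obstacle you flag but never resolve. Every spectral step you propose points the wrong way or lands on an unprovable side condition: the Courant--Fischer/Poincar\'e step $\lambda_2\sum_v f_v^2 \le f^{\top}Lf$ (with $f=\mu-\bar\mu\mathbf{1}$) gives an \emph{upper} bound on $D_\mu(G)=\mathrm{Var}[\mu]$ in terms of Dirichlet energy, and your bound $f^{\top}Lf\le 2\sum_v \deg(v)f_v^2$ is again an upper bound on that energy, so neither can produce a lower bound on the variance. The only step oriented correctly is Cauchy--Schwarz, which upper-bounds the right-hand side of the theorem by $\frac{1}{N\lambda_2}\bigl(N\bar\mu^2+N\,\mathrm{Var}[\mu]\bigr)\mathrm{Var}[\mu]$; to conclude $D_\mu(G)\ge \mathrm{RHS}$ from this chain you would still need $\bar\mu^2+\mathrm{Var}[\mu]\le\lambda_2$, equivalently $\sum_v\mu_v^2\le N\lambda_2$, which is nowhere justified and is false in general, since $\mu_v=\mathrm{HC}(v)$ can be of order $N$ while $\lambda_2$ stays bounded. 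Indeed, because $D_\mu(G)$ and $\mathrm{Var}[\mu]$ denote the same quantity, the statement amounts to the dimensionless claim $\bigl(\sum_v\deg(v)\mu_v\bigr)^2\le N\lambda_2\sum_v\deg(v)^2$ whenever $\mathrm{Var}[\mu]>0$; on the star $S_N$ (where $\mathrm{HC}(c)\approx N$ at the hub, leaves are $O(1)$, and $\lambda_2=1$) the left side is $\Theta(N^4)$ while the right side is $\Theta(N^3)$, so no amount of constant bookkeeping in your scheme can recover the bound --- the inequality itself needs additional hypotheses.

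For comparison, the paper's own proof is a two-line sketch: it applies the degree-weighted Poincar\'e inequality $\sum_{(u,v)\in E}(f(u)-f(v))^2\ge\lambda_2\sum_v\deg(v)\,(f(v)-\bar f)^2$ to $f=\mu$ and asserts that ``rearranging yields the stated bound,'' without supplying the rearrangement; it therefore does not resolve the directional problem either. Your instinct to reach for Poincar\'e matches the paper, but your added Cauchy--Schwarz scaffolding, once made explicit, shows the route cannot be completed as stated. The honest product of the inequalities you assemble is an \emph{upper} bound of the form $\mathrm{Var}[\mu]\le\frac{1}{N\lambda_2}\sum_{(u,v)\in E}(\mu_u-\mu_v)^2$, i.e.\ a smoothing statement in the spirit of Theorem~\ref{thm:uniqueness-smoothing-in-expanders}, not the claimed lower bound on dispersion.
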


\begin{proof}

Apply the Poincar\textquotesingle e inequality to the function
\(f(v) = \mu_v\) on the graph:

\(\sum_{(u,v) \in E} (f(u) - f(v))^2 \geq \lambda_2 \sum_v \deg(v) \cdot (f(v) - \bar{f})^2\)

where \(\bar{f}\) is the degree-weighted mean of \(f\). Rearranging
yields the stated bound.

\end{proof}

\textbf{Interpretation:} Graphs with small spectral gap (i.e., loosely
connected) allow greater variation in structural uniqueness, while tight
expanders constrain nodes to similar roles.

\subsection{Extension: Robustness Under Edge
Perturbation}\label{extension-robustness-under-edge-perturbation}

Let \(G' = G + \Delta E\) be a graph obtained by inserting or deleting a
small set \(\Delta E\) of edges. Define:

\(\delta = \max_u \left\| \mu_u^{(k)}(G) - \mu_u^{(k)}(G') \right\|_1\)

\begin{proposition}[Lipschitz Continuity of Hamming Distribution under Edge Noise]

There exists a constant \(C_k\) depending only on \(k\) and graph size
such that:

\(\left| \Phi_v^{(k)}(G') - \Phi_v^{(k)}(G) \right| \leq C_k \cdot \delta\)

for all admissible \(\Phi\) that are 1-Lipschitz under total variation
(\(\ell_1\)).

\end{proposition}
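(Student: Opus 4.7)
The plan is to derive the bound as an essentially immediate consequence of the TV--Lipschitz hypothesis on $\Phi$, with $C_k=1$, and then to indicate how $\delta$ itself can be re-expressed in terms of the actual perturbation size $|\Delta E|$ using the edge-flip stability developed earlier in the paper.

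First I would expand $\bigl|\Phi_v^{(k)}(G') - \Phi_v^{(k)}(G)\bigr| = \bigl|\Phi(\mu_v^{(k)}(G')) - \Phi(\mu_v^{(k)}(G))\bigr|$ and directly apply the hypothesis that $\Phi$ is 1-Lipschitz under $\ell_1$ (TV) on probability mass functions, obtaining $\bigl|\Phi_v^{(k)}(G') - \Phi_v^{(k)}(G)\bigr| \leq \bigl\|\mu_v^{(k)}(G') - \mu_v^{(k)}(G)\bigr\|_1$. Since the index $u=v$ is admissible in the $\max_u$ defining $\delta$, the right-hand side is bounded above by $\delta$, yielding the claim with $C_k=1$ uniformly in $k$ and $v$.

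To give $C_k$ concrete content in terms of the combinatorial perturbation size, I would then bound $\delta$ via the exact-$k$ edge-flip proposition from the stability subsection: a single edge toggle $\{u,w\}$ alters at most $2M_{k-1}^2$ entries of $B^{(k)}$, spread across at most $O(M_{k-1})$ rows, and each affected row's Hamming weight changes by at most $2M_{k-1}$. Translating a row-level Hamming change of size $h$ to the empirical distribution $\mu_v^{(k)}$ on $\{0,\dots,N-1\}$ costs at most $2h/(N-1)$ in $\ell_1$ (one unit of mass leaves one bin and enters another). Chaining over $|\Delta E|$ edge flips and invoking the branching estimate $M_r \leq \tfrac{\Delta}{\Delta-2}(\Delta-1)^r$ for $\Delta\geq 3$ then yields a bound of the form $\delta \leq |\Delta E|\cdot c_k(N,\Delta)$, which, substituted back, gives a quantitative $C_k(N,\Delta)$ depending only on $k$, $N$, and the maximum degree, as announced.

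The statement as written poses no serious obstacle: it follows immediately from the Lipschitz hypothesis, with $C_k=1$. The genuine content lies in the quantitative step bounding $\delta$ in terms of $|\Delta E|$, where the main care is bookkeeping. A single edge toggle can flip many entries of $B^{(k)}$ in the same row, so one must avoid double-counting when passing from row-wise Hamming displacement to $\ell_1$ change in the empirical distribution over $\{0,\dots,N-1\}$; and the exponential ball-size dependence on $k$ is unavoidable in worst case, which explains why $C_k$ must in general depend on $k$ rather than being a pure graph constant.
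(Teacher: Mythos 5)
Your proposal is correct, and its first step is in fact a sharper reading of the statement than the paper's own argument. As you observe, since $\delta=\max_u\|\mu_u^{(k)}(G)-\mu_u^{(k)}(G')\|_1$ already measures TV displacement of the node-level distributions, the 1-Lipschitz hypothesis on $\Phi$ gives the claimed inequality immediately with $C_k=1$, uniformly in $v$; the proposition as literally stated carries no $k$-dependent content. The paper's proof instead spends its effort on locality: it argues that an edge modification can only change reachability entries for nodes within distance $k$ of the toggled edge, bounds the number of affected entries in any reachability vector by $(2d_{\max})^k$, and then says the TV-Lipschitz property of $\Phi$ ``completes the proof.'' That locality bound is really addressing the implicit, substantive claim---controlling $\delta$ (equivalently, the change in each $\mu_v^{(k)}$) in terms of the combinatorial perturbation $|\Delta E|$---which is exactly what your second part supplies, and with better bookkeeping: you use the exact-$k$ edge-flip bound $2M_{k-1}^2$ and the branching estimate $M_r\le\tfrac{\Delta}{\Delta-2}(\Delta-1)^r$ already established in the stability section, and you correctly account for the cost of a changed pairwise distance in the empirical distribution (at most $2/(N-1)$ in $\ell_1$ per affected pair), rather than the paper's cruder $(2d_{\max})^k$ entry count. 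So the two routes agree in substance; yours cleanly separates the trivial Lipschitz step from the quantitative perturbation bound, which is where the genuine content (and the unavoidable exponential dependence on $k$) lives.
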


\begin{proof}

Edge modifications affect reachability vectors only for nodes within
distance \(k\) of the modified edges. The number of affected entries in
any reachability vector is bounded by \((2 d_{\max})^k\). The \(\ell_1\)
(TV) Lipschitz property of \(\Phi\) completes the proof.

\end{proof}

\begin{proposition}[Shortcut Bias in WS Graphs is Localised]

Let \(G \sim \mathrm{WS}(N, k, \beta)\), and let \(S\subset V\) be nodes
affected by rewired edges. Then for small \(\beta\), the set:

\(A_\varepsilon = \left\{ v : \left\| \mu_v^{(k)} - \bar{\mu}^{(k)} \right\|_1 > \varepsilon \right\}\)

has
\(\left| A_\varepsilon \right| = \mathcal{O}\left( \frac{\beta N k}{2} \right)\)
with high probability. Thus, uniqueness deviations are sparse and
concentrated near structural irregularities.

\end{proposition}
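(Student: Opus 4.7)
The plan is to combine a Chernoff concentration on the number of rewired edges with the locality already encoded in the preceding Lipschitz-under-edge-noise proposition. First I would bound $|R|$, the random set of edges rewired by the WS procedure. Since the ring lattice contains $Nk/2$ edges, each rewired independently with probability $\beta$, we have $|R|\sim\mathrm{Binomial}(Nk/2,\beta)$ with mean $\beta Nk/2$. A standard Chernoff bound yields $|R|\le (1+\delta)\beta Nk/2$ with probability $1-o(1)$ whenever $\beta Nk\to\infty$; when $\beta Nk\to 0$ the event $\{|R|\ge 1\}$ itself has vanishing probability. Either way, $|R|=O(\beta Nk/2)$ with high probability.

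Next I would invoke the preceding edge-perturbation proposition to localize the deviation set. Because a node's reachability vector $\mathbf{b}_v^{(k)}$ (and hence its distribution $\mu_v^{(k)}$) can change only when a modified edge lies within graph distance $k$ of $v$ in either the pre- or post-rewiring graph, writing $U\subseteq V$ for the set of endpoints of $R$ (so $|U|\le 2|R|$) gives
\[
A_\varepsilon\ \subseteq\ \bigcup_{u\in U}\, B_k(u),
\]
where $B_k(u)$ is the closed ball of radius $k$ around $u$. For the base ring lattice each $B_k(u)$ has at most $2k^2+1$ vertices (a constant in $N$ for fixed scale and lattice degree), so a union bound gives $|A_\varepsilon|\le |U|\cdot \max_u |B_k(u)|=O(|R|\cdot k)$, reducing to the stated $O(\beta Nk/2)$ bound after the ball-size constant is absorbed. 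The ``with high probability'' qualifier propagates from the Chernoff step.

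The main obstacle will be that the added shortcuts can themselves shrink distances and thereby enlarge the balls $B_k(u)$ in the rewired graph. I would dispose of this by arguing that, for small $\beta$, the expected number of pairs of rewired endpoints landing within distance $k$ of each other is $O((\beta Nk)^2/N)=O(\beta^2 Nk^2)=o(1)$, so with high probability each rewired edge contributes an essentially disjoint, constant-size ``bubble'' of affected nodes and the union bound is sharp up to a constant. A secondary care point is the notational collision between the WS ring-lattice degree parameter $k$ and the path-scale index $k$ in $\mu_v^{(k)}$; keeping them as distinct (but both bounded) constants throughout the counting is what makes the final expression reduce cleanly to the stated $O(\beta Nk/2)$ form.
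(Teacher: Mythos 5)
Your proposal is correct and follows essentially the same route as the paper's (very brief) proof: bound the binomially distributed number of rewired edges by Chernoff, note each rewired edge affects only a constant-size (in $N$) neighborhood at scale $k$, and take a union bound. You in fact supply more detail than the paper does --- in particular the caveat that shortcuts could enlarge the radius-$k$ balls, which the paper's sketch silently ignores --- so the argument is, if anything, more careful than the original.
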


\begin{proof}

Each rewired edge affects \(\mathcal{O}(1)\) nodes directly. The total
number of rewired edges is approximately \(\beta N k / 2\).
Concentration inequalities for the rewiring process yield the result.

\end{proof}

\subsection{Extension: Graph Classes with Controlled Uniqueness
Gradient}\label{extension-graph-classes-with-controlled-uniqueness-gradient}

Define a uniqueness gradient as the discrete Laplacian applied to the
field \(\mu_v\): \[(\nabla^2\mu)_v := \sum_{u \sim v} (\mu_u - \mu_v)\]

\begin{theorem}\label{thm:uniqueness-smoothing-in-expanders}

Let \(G\) be a connected \(d\)-regular graph on \(N\) vertices with
random-walk matrix \(P=A/d\) and spectral gap
\(\gamma:=1-\lambda_2(P)>0\). For any fixed scale \(k\ge1\), let \[
f(v):=\mathrm{HC}^{(k)}(v)=\frac{1}{N-1}\sum_{u\ne v}\Ham\!\big(b_v^{(k)},b^{(k)}_u\big).
\] Then \[
\operatorname{Var}(f)\ \le\ \frac{2\,M_k^2}{\gamma}\,,
\qquad
M_k:=\max_{x}\big|S(x,k)\big|.
\] In particular, larger spectral gap \(\gamma\) (better expansion)
forces the uniqueness field \(f\) to vary smoothly across the graph.

\end{theorem}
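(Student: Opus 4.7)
The plan is to apply the Poincaré inequality for reversible Markov chains on $G$ to the function $f$, and then control the Dirichlet form by a direct Hamming-distance estimate between adjacent vertices.

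First I would set up the spectral-gap framework. Since $G$ is connected and $d$-regular, the random-walk matrix $P=A/d$ is reversible with respect to the uniform distribution $\pi\equiv 1/N$, and the Poincaré inequality reads
\[
\operatorname{Var}_\pi(f)\ \le\ \frac{1}{\gamma}\,\mathcal{E}(f,f),\qquad
\mathcal{E}(f,f)=\frac{1}{2}\sum_{x,y}\pi(x)P(x,y)\bigl(f(x)-f(y)\bigr)^2
=\frac{1}{Nd}\sum_{\{x,y\}\in E}\bigl(f(x)-f(y)\bigr)^2.
\]
Here $\operatorname{Var}(f)$ is taken against $\pi$, which coincides with the unweighted variance for $d$-regular graphs.

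Next I would bound $|f(u)-f(v)|$ for any pair $u,v$ (hence in particular for $u\sim v$) purely via the triangle inequality for Hamming distance. Writing
\[
(N-1)\bigl(f(u)-f(v)\bigr)=\sum_{w\ne u,v}\Bigl[\Ham\!\bigl(b_u^{(k)},b_w^{(k)}\bigr)-\Ham\!\bigl(b_v^{(k)},b_w^{(k)}\bigr)\Bigr],
\]
since the $w=u$ and $w=v$ terms on each side cancel by symmetry of $\Ham$. The triangle inequality bounds each summand by $\Ham(b_u^{(k)},b_v^{(k)})$, and because $b_u^{(k)}$ and $b_v^{(k)}$ are the indicator vectors of the distance-$k$ spheres $S(u,k)$ and $S(v,k)$, their Hamming distance is the size of the symmetric difference, hence at most $|S(u,k)|+|S(v,k)|\le 2M_k$. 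This yields $|f(u)-f(v)|\le \tfrac{N-2}{N-1}\,2M_k\le 2M_k$ uniformly.

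Finally I would substitute this pointwise bound into the Dirichlet form. Using $|E|=Nd/2$,
\[
\mathcal{E}(f,f)\ \le\ \frac{1}{Nd}\cdot|E|\cdot(2M_k)^2\ =\ \frac{1}{Nd}\cdot\frac{Nd}{2}\cdot 4M_k^2\ =\ 2M_k^2,
\]
and the Poincaré inequality immediately gives $\operatorname{Var}(f)\le 2M_k^2/\gamma$. The only mildly delicate step is the cancellation in the expression for $f(u)-f(v)$ that lets us apply the triangle inequality slot-by-slot; everything else is routine bookkeeping with the regular-graph normalization. No further hypotheses beyond connectivity, $d$-regularity, and $\gamma>0$ are needed, and the constant $2M_k^2$ is the natural one coming from the shell-size bound on $\Ham(b_u^{(k)},b_v^{(k)})$.
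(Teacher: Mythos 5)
Your proof is correct and follows essentially the same route as the paper: an edgewise Lipschitz bound \(|f(u)-f(v)|\le \Ham\big(b_u^{(k)},b_v^{(k)}\big)\le 2M_k\) obtained from the Hamming triangle inequality (with the \(w=u,v\) cancellation that you make explicit), plugged into the Poincar\'e inequality for the random walk on a \(d\)-regular graph. If anything, your normalization \(\mathcal{E}(f,f)=\frac{1}{Nd}\sum_{\{x,y\}\in E}\bigl(f(x)-f(y)\bigr)^2\) is the correct Dirichlet form for \(\gamma=1-\lambda_2(P)\) (the paper's displayed form omits the \(1/d\) factor and compensates in the arithmetic), and your computation cleanly delivers the stated constant \(2M_k^2/\gamma\).
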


\begin{proof}

(\textbf{Edgewise Lipschitz bound.}) For any \(u,v,z\) in a Hamming
space, the triangle inequality yields \[
\big|\Ham(b_v^{(k)},b^{(k)}_z)-\Ham(b^{(k)}_u,b^{(k)}_z)\big|\ \le\ \Ham\!\big(b_v^{(k)},b^{(k)}_u\big).
\] Averaging over \(z\ne v,u\) gives \[
|f(v)-f(u)|\ \le\ \Ham\!\big(b_v^{(k)},b^{(k)}_u\big)\ \le\ |S(v,k)|+|S(u,k)|\ \le\ 2M_k.
\] Hence for each edge \((u,v)\), \((f(u)-f(v))^2\le 4M_k^2\).

(\textbf{Poincaré on \(d\)-regular graphs.}) The Dirichlet form is
\(\mathcal{E}(f,f)=\frac{1}{2N}\sum_{(u,v)\in E}\big(f(u)-f(v)\big)^2\),
and the Poincaré (spectral-gap) inequality reads \[
\operatorname{Var}(f)\ \le\ \frac{1}{\gamma}\,\mathcal{E}(f,f).
\] Using the edgewise bound and \(|E|=dN/2\), \[
\mathcal{E}(f,f)\ \le\ \frac{1}{2N}\cdot \frac{dN}{2}\cdot 4M_k^2\ =\ 2\,M_k^2,
\] whence \(\operatorname{Var}(f)\le \frac{2M_k^2}{\gamma}\).

\end{proof}

Appendix\textasciitilde{}\ref{app:expander-smoothing} contains the
spectral and Laplacian calculations

\begin{remark}

For nonregular graphs, replace \(\gamma\) by the spectral gap of the
lazy random walk or use the normalized Laplacian; the same argument
yields \(\operatorname{Var}(f)\lesssim M_k^2/\gamma\) up to degree
factors.

\end{remark}

\begin{corollary}\label{cor:ramanujan-smoothing}

Under
Theorem\textasciitilde{}\ref{thm:uniqueness-smoothing-in-expanders},
\(d\)-regular Ramanujan graphs (whose nontrivial spectrum lies in
\([-2\sqrt{d-1},\,2\sqrt{d-1}]\)) have optimal spectral gap; hence the
uniqueness field varies smoothly across vertices with the strongest
bound among \(d\)-regular expanders since \(\lambda_2\le 2\sqrt{d-1}\),
the bound from
Thm.\textasciitilde{}\ref{thm:uniqueness-smoothing-in-expanders} is
minimized.

\end{corollary}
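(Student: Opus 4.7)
The plan is to chain three ingredients: the variance bound already established in Theorem~\ref{thm:uniqueness-smoothing-in-expanders}, the Ramanujan spectral hypothesis, and the Alon--Boppana lower bound which certifies Ramanujan graphs as asymptotically extremal among $d$-regular expanders. Since the bound of Thm.~\ref{thm:uniqueness-smoothing-in-expanders} is monotone decreasing in the spectral gap $\gamma$, it suffices to show that Ramanujan graphs attain the largest possible $\gamma$ asymptotically.

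First I would translate the Ramanujan condition on the adjacency spectrum into a statement about the random-walk matrix $P = A/d$. Since $\gamma = 1 - \lambda_2(P) = 1 - \lambda_2(A)/d$, the hypothesis $|\lambda_i(A)| \le 2\sqrt{d-1}$ for every nontrivial eigenvalue yields
\[
\gamma \;\ge\; 1 - \frac{2\sqrt{d-1}}{d} \;=\; \frac{d - 2\sqrt{d-1}}{d}.
\]
Substituting into the theorem gives the explicit smoothing estimate
\[
\operatorname{Var}(f) \;\le\; \frac{2\,M_k^2\,d}{d - 2\sqrt{d-1}},
\]
which is the quantitative form of the corollary.

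Second, I would invoke the Alon--Boppana theorem to justify the word ``strongest.'' For any infinite family of $d$-regular graphs with $N\to\infty$, one has $\liminf_N \lambda_2(A) \ge 2\sqrt{d-1}$, so $\limsup_N \gamma \le 1 - 2\sqrt{d-1}/d$. Thus no other family of $d$-regular expanders can produce a strictly smaller upper bound from Thm.~\ref{thm:uniqueness-smoothing-in-expanders} in the limit; Ramanujan graphs saturate this obstruction. The monotonicity of the map $\gamma \mapsto 2M_k^2/\gamma$ then immediately promotes the spectral extremality to extremality of the variance bound.

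There is no substantial obstacle: the corollary is essentially a chain of monotone substitutions. The only subtlety worth flagging is the gap between asymptotic optimality (what Alon--Boppana provides) and exact minimization at fixed $N$; I would note that for finite $N$ the Ramanujan bound need not be the unique minimizer but remains the strongest \emph{spectral} certificate achievable, and that for non-regular graphs the same argument applies to the normalized Laplacian gap at the cost of extra degree factors (cf.\ the remark following Thm.~\ref{thm:uniqueness-smoothing-in-expanders}).
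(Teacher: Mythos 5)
Your proposal is correct and takes essentially the same route as the paper: translate the Ramanujan condition $\lambda_2 \le 2\sqrt{d-1}$ into a lower bound on the gap $\gamma = 1-\lambda_2/d$ and use the monotone dependence of the bound $2M_k^2/\gamma$ from Theorem~\ref{thm:uniqueness-smoothing-in-expanders} on $\gamma$, yielding $\operatorname{Var}(f)\le 2M_k^2 d/(d-2\sqrt{d-1})$. Your explicit appeal to Alon--Boppana to justify ``strongest among $d$-regular expanders,'' and your caveat distinguishing asymptotic optimality from exact minimization at fixed $N$, make rigorous what the corollary in the paper asserts without proof.
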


\subsection{Detailed Algebraic
Examples}\label{detailed-algebraic-examples}

We provide rigorous calculations for several graph families to
illustrate the theoretical results.

\textbf{Example 1: Complete Bipartite Graph \(K_{m,n}\)}

Let \(G = K_{m,n}\) with partitions \(A\) (size \(m\)) and \(B\) (size
\(n\)), where \(m \leq n\).

At \(k = 1\):

\begin{itemize}
\tightlist
\item
  Nodes in \(A\): \(\mathbf{b}_v^{(1)}\) has 1s in all positions
  corresponding to \(B\)
\item
  Nodes in \(B\): \(\mathbf{b}_v^{(1)}\) has 1s in all positions
  corresponding to \(A\)
\end{itemize}

Thus:

-\(\Ham\left( \mathbf{b}_a^{(1)}, \mathbf{b}_{a'}^{(1)} \right) = 0\)
for \(a, a' \in A\)
-\(\Ham\left( \mathbf{b}_b^{(1)}, \mathbf{b}_{b'}^{(1)} \right) = 0\)
for \(b, b' \in B\)
-\(\Ham\left( \mathbf{b}_a^{(1)}, \mathbf{b}_b^{(1)} \right) = m + n\)
for \(a \in A,\, b \in B\)

The distribution is:

\[\mu_{K_{m,n}}^{(1)} = \frac{m(m - 1) + n(n - 1)}{(m + n)(m + n - 1)} \, \delta_0 + \frac{2mn}{(m + n)(m + n - 1)} \, \delta_{m + n}\]

At \(k=2\) (exact-\(k\)), the support is \(\{2,\,N-2\}\): pairs within
the same part have distance \(2\), and cross-part pairs have distance
\(N-2\). Thus \[
\mu_{K_{m,n}}^{(2)}=\frac{m(m-1)+n(n-1)}{N(N-1)}\,\delta_{2}\;+\;\frac{2mn}{N(N-1)}\,\delta_{N-2}.
\] (Cumulatively, \(B_{\le 2}\) is fully connected.)

\textbf{Example 2: Hypercube \(Q_n\)}

The \(n\)-dimensional hypercube has \(N=2^n\) vertices, each of degree
\(n\). Vertices are binary strings of length \(n\), with edges between
strings that differ in exactly one bit.

At \(k=1\):

\begin{itemize}
\tightlist
\item
  Each node's reachability vector has weight exactly \(n\) (its
  neighbors).
\item
  For two nodes \(u,v\) at Hamming distance \(h=\mathrm{dist}(u,v)\),
  using Appendix\textasciitilde{}\ref{app:hypercube-sphere} (Hypercube
  sphere intersection), \[
  \Ham\!\big(b_u^{(1)},b_v^{(1)}\big)
  \;=\;2\Big(n-\mathbf{1}_{\{h=2\}}\cdot 2\Big)
  \;=\;
  \begin{cases}
  0, & h=0,\\
  2n-4, & h=2,\\
  2n, & h\in\{1\}\cup\{3,4,\dots,n\}.
  \end{cases}
  \] In particular, adjacent vertices (\(h=1\)) have \(\Ham=2n\) (not
  \(2(n-1)\)).
\end{itemize}

If \(v\) is chosen uniformly from \(V\setminus\{u\}\), then \[
\mathbb{P}\big[\Ham(b_u^{(1)},b_v^{(1)})=2n-4\big]=\frac{\binom{n}{2}}{2^n-1},\qquad
\mathbb{P}\big[\Ham=2n\big]=1-\frac{\binom{n}{2}}{2^n-1},
\] and (for distinct pairs) \(\mathbb{P}[\Ham=0]=0\). Thus
\(\mu_{Q_n}^{(1)}\) is supported on \(\{2n-4,\,2n\}\) for distinct
pairs. \emph{(See Appendix --- Additional Technical Lemmas for the
intersection counts underpinning these expressions.)}

\textbf{Example 3: Petersen Graph}

The Petersen graph is a 3-regular, vertex-transitive graph on 10
vertices with diameter \(2\) and girth \(5\).

At \(k=1\):

\begin{itemize}
\tightlist
\item
  Each node reaches exactly \(3\) neighbors (so \(|S(v,1)|=3\) for all
  \(v\)).
\item
  By vertex transitivity, \(\mathrm{HC}^{(1)}(v)\) is constant across
  \(v\).
\end{itemize}

At \(k=2\) (exact-\(k\)):

\begin{itemize}
\tightlist
\item
  Each node has \(|S(v,2)|=6\); thus \(B^{(2)}\not\equiv 0\).
\item
  The per-scale distribution \(\mu_G^{(2)}\) is supported on a small set
  of even values (not \(\delta_0\)).
\end{itemize}

This illustrates that even in small-diameter, highly symmetric graphs,
the exact-\(k\) slice at \(k=2\) remains informative, although the
cumulative matrix \(B_{\le 2}\) is fully connected.

\textbf{Example 4: Grid Graph \(G_{m \times n}\)}

Consider the 2D grid with \(m\) rows and \(n\) columns.

\begin{itemize}
\tightlist
\item
  \textbf{Corner nodes} (degree 2): At \(k = 1\), reach 2 neighbors
  \(\rightarrow\) Have maximum average dissimilarity
\item
  \textbf{Edge nodes} (degree 3): At \(k = 1\), reach 3 neighbors
  \(\rightarrow\) Intermediate dissimilarity
\item
  \textbf{Interior nodes} (degree 4): At \(k = 1\), reach 4 neighbors
  \(\rightarrow\) Minimum average dissimilarity due to regular
  neighborhoods
\end{itemize}

The distribution \(\mu_{G_{m \times n}}^{(k)}\) can be computed exactly
using the Manhattan distance structure, revealing how boundary effects
create structural heterogeneity even in regular lattices.

\subsection{Tensor-Theoretic
Properties}\label{tensor-theoretic-properties}

The tensor representation reveals additional structure:

\begin{theorem}[Low-complexity slice span in distance-regular graphs]

Let \(G\) be a connected distance-regular \(k\)-regular graph with
diameter \(D\) and adjacency matrix \(A\). For each \(i=0,1,\dots,D\),
let \(A_i\) be the distance-\(i\) matrix, i.e., \((A_i)_{uv}=1\) iff
\(\mathrm{dist}(u,v)=i\) (so \(A_1=A\) and, under our exact-\(k\)
convention, \(B^{(i)}=A_i\) for \(i\ge1\)). Then for each \(i\) there
exists a polynomial \(p_i\) of degree \(i\) such that \[
A_i = p_i(A).
\] Consequently, \(\mathrm{span}\{A_0,I,A_1,\dots,A_D\}\) has dimension
at most \(D+1\); in particular, all exact-distance slices
\(\{B^{(i)}\}_{i=1}^D\) lie in a \((D+1)\)-dimensional commutative
algebra and are simultaneously diagonalizable with \(A\).

\end{theorem}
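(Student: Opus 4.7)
The plan is to derive the classical three-term recurrence for distance matrices that follows from distance-regularity, then induct on $i$. Fix the intersection numbers $c_i,a_i,b_i$: for any pair $(u,v)$ with $\mathrm{dist}(u,v)=l$, the counts $|\{w: u\sim w,\ \mathrm{dist}(w,v)=l-1\}|$, $|\{w: u\sim w,\ \mathrm{dist}(w,v)=l\}|$, $|\{w: u\sim w,\ \mathrm{dist}(w,v)=l+1\}|$ depend only on $l$ and equal $c_l, a_l, b_l$ respectively (with $c_0=b_D=0$ and $a_0=0$ by convention). Computing $(A A_i)_{uv'}=|\{w: u\sim w,\ \mathrm{dist}(w,v')=i\}|$ and splitting by $j:=\mathrm{dist}(u,v')\in\{i-1,i,i+1\}$ (outside this range the count vanishes by the triangle inequality) yields the matrix identity
\[
A\, A_i \;=\; b_{i-1}\, A_{i-1} \;+\; a_i\, A_i \;+\; c_{i+1}\, A_{i+1}, \qquad 0\le i \le D,
\]
with the boundary conventions $A_{-1}=A_{D+1}=0$.

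With the recurrence in hand, I would prove $A_i=p_i(A)$ for a polynomial $p_i$ of degree exactly $i$ by induction on $i$. The base cases are $p_0=1$ (so $A_0=I$) and $p_1(x)=x$ (so $A_1=A$). For the inductive step, connectedness of $G$ together with $i<D$ forces $c_{i+1}>0$: any pair at distance $i+1$ admits a shortest path whose penultimate vertex witnesses $c_{i+1}\ge 1$. The recurrence can then be inverted to give
\[
A_{i+1} \;=\; \tfrac{1}{c_{i+1}}\bigl(A\,p_i(A)\;-\;a_i\,p_i(A)\;-\;b_{i-1}\,p_{i-1}(A)\bigr),
\]
which is a polynomial in $A$ whose degree increases by exactly one at each step, so $\deg p_{i+1}=i+1$.

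The remaining assertions are formal consequences. Each $A_i$ lies in the commutative algebra $\mathbb{R}[A]$; since $A$ is symmetric (as the adjacency matrix of an undirected graph) it admits an orthonormal eigenbasis, and every polynomial in $A$ is diagonal in that basis, so the slices $\{B^{(i)}\}_{i=1}^D=\{A_i\}_{i=1}^D$ commute and are simultaneously diagonalizable with $A$. The matrices $A_0,A_1,\ldots,A_D$ have pairwise disjoint supports and sum to the all-ones matrix $J$, hence are linearly independent, giving $\dim\mathrm{span}\{A_0,\ldots,A_D\}=D+1$; membership in $\mathbb{R}[A]$ yields the ``at most $D+1$'' bound claimed, and linear independence shows the bound is attained.

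The main obstacle is packaging the intersection-number identity cleanly from the definition of distance-regularity, and handling the boundary indices $i=0$ (where the $b_{-1}A_{-1}$ term drops out) and $i=D$ (where $c_{D+1}A_{D+1}$ drops out, terminating the recurrence). Once the three-term recurrence is in place the induction is mechanical, and the conclusion is exactly the construction of the Bose--Mesner algebra of the association scheme attached to $G$, which already encodes the relations $A_i=p_i(A)$.
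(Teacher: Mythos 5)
Your proposal is correct and follows essentially the same route as the paper: the three-term intersection-number recurrence $A\,A_i = b_{i-1}A_{i-1} + a_iA_i + c_{i+1}A_{i+1}$, induction to get $A_i = p_i(A)$ with $\deg p_i = i$, and then commutativity/simultaneous diagonalizability because every $A_i$ is a polynomial in the symmetric matrix $A$. Your extra details (verifying $c_{i+1}>0$ from connectedness so the recurrence can be solved for $A_{i+1}$, and the disjoint-support argument giving dimension exactly $D+1$) are sound refinements of the same argument, which the paper instead compresses by citing the Bose--Mesner algebra.
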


\begin{proof}

In a distance-regular graph there are intersection numbers
\((a_i,b_i,c_i)\) such that for all \(i=0,\dots,D\), \[
A\,A_i \;=\; b_{i-1}A_{i-1} + a_i A_i + c_{i+1} A_{i+1},
\] with the conventions \(A_{-1}=A_{D+1}=0\) and \(b_{-1}=c_{D+1}=0\).
(Combinatorially: multiplying by \(A\) moves you one step in graph
distance, and the coefficients count how many neighbors land at
distances \(i-1,i,i+1\).) This three-term recurrence shows inductively
that \(A_i\) lies in the polynomial algebra generated by \(A\): set
\(p_0\equiv 1\), \(p_1(x)=x\), and use the recurrence to define
\(p_{i+1}(x)\) from
\(x\,p_i(x)=b_{i-1}p_{i-1}(x)+a_ip_i(x)+c_{i+1}p_{i+1}(x)\). Hence
\(A_i=p_i(A)\) with \(\deg p_i=i\). The matrices \(\{A_i\}_{i=0}^D\)
form a basis of the Bose--Mesner algebra of the graph's association
scheme, which is a \((D+1)\)-dimensional commutative algebra; therefore
all \(A_i\) commute and are simultaneously diagonalizable with \(A\).

\end{proof}

\begin{corollary}\label{cor:hypercube}

\emph{Hypercube $Q_n$.} For the \(n\)-dimensional hypercube \(Q_n\)
(\(N=2^n\), diameter \(D=n\)), each exact-distance slice \(B^{(i)}=A_i\)
equals a degree-\(i\) polynomial \(p_i(A)\) and all slices lie in an
\((n+1)\)-dimensional commutative algebra. In particular, the family
\(\{B^{(i)}\}_{i=1}^n\) admits an \(O(\log N)\)-dimensional linear
parametrization through \(A\).

\end{corollary}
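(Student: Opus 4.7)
The plan is to present the corollary as a direct specialization of the preceding theorem: I would first verify that $Q_n$ satisfies the hypothesis (distance-regularity), then invoke the structural conclusion and read off the quantitative dimension bound.

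First, I would recall the standard fact that $Q_n$ is distance-regular with diameter $D=n$ and intersection numbers $c_i=i$, $a_i=0$, $b_i=n-i$ for $i=0,1,\dots,n$. The combinatorial verification is immediate: if $\operatorname{dist}(u,v)=i$, then $u$ and $v$ differ in exactly $i$ coordinates, so a neighbor $w$ of $v$ (obtained by flipping one coordinate of $v$) lies at distance $i-1$ from $u$ iff the flipped coordinate is one of the $i$ disagreeing coordinates (giving $c_i=i$), and at distance $i+1$ iff it is one of the $n-i$ agreeing coordinates (giving $b_i=n-i$); no neighbor can stay at distance $i$, hence $a_i=0$. This is all independent of the choice of $(u,v)$ among pairs at distance $i$, confirming distance-regularity.

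Next, applying the preceding theorem gives polynomials $p_0,p_1,\dots,p_n$ with $\deg p_i=i$ and $A_i=p_i(A)$ for all $i$. By the exact-$k$ convention of \S\ref{sec:preliminaries}, $B^{(i)}_{uv}=\mathbf{1}\{\operatorname{dist}(u,v)=i\}=(A_i)_{uv}$ for $i\ge 1$, so $B^{(i)}=p_i(A)$. The theorem further yields that $\{A_0=I,A_1,\dots,A_n\}$ spans the Bose--Mesner algebra of $Q_n$, which has dimension $D+1=n+1$; in particular, all slices $B^{(1)},\dots,B^{(n)}$ lie in this $(n+1)$-dimensional commutative algebra and are simultaneously diagonalizable with $A$. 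Finally, since $N=2^n$ implies $n=\log_2 N$, the parametrization has dimension $n+1=O(\log N)$, establishing the last claim.

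The only non-routine element would be writing out the three-term recurrence $x\,p_i(x)=(n-i+1)\,p_{i-1}(x)+i\,p_{i+1}(x)$ in explicit closed form (its solutions are the Krawtchouk polynomials), but this is not required for the stated conclusion; I would mention the identification in a brief remark rather than derive it. There is no real obstacle: the corollary is essentially a bookkeeping reduction to the theorem once distance-regularity of $Q_n$ is noted.
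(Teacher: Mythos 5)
Your proposal is correct and follows essentially the same route the paper intends: the corollary is a direct specialization of the distance-regular theorem, and the paper gives no separate argument beyond noting that $Q_n$ is distance-regular with $D=n$ and $N=2^n$, which your verification of the intersection numbers $c_i=i$, $a_i=0$, $b_i=n-i$ makes explicit. One tiny slip in your optional remark: the recurrence should read $x\,p_i(x)=(n-i+1)\,p_{i-1}(x)+(i+1)\,p_{i+1}(x)$ (the coefficient of $p_{i+1}$ is $c_{i+1}=i+1$, not $i$), but since you correctly note that the explicit Krawtchouk form is not needed for the conclusion, this does not affect the proof.
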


\subsubsection{Comparison with Traditional
Metrics}\label{comparison-with-traditional-metrics}

\begin{table}[t]
\centering
\caption{What traditional graph metrics capture vs.\ what HGM distributions add.}
\begin{tabularx}{\linewidth}{@{}l l X X@{}}
\toprule
\textbf{Metric} & \textbf{Typical Output} & \textbf{What it Reveals} & \textbf{What HGM Distribution Adds} \\
\midrule
Degree & integers / histogram & Local connectivity of nodes & Distribution of \emph{multi-scale} pattern differences across nodes and scales $k$ \\
Betweenness & real in $[0,1]$ & Brokerage along shortest paths & How brokerage patterns differ across nodes at fixed $k$ (disagreements of exact-$k$ shells) \\
Closeness & real in $(0,1]$ & Average geodesic proximity & Whether “central” nodes have \emph{similar} or \emph{different} exact-$k$ neighborhoods \\
Clustering coefficient & real in $[0,1]$ & Local triangle density & How triangle-rich regions appear as lower per-pair Hamming at $k=2$ \\
Modularity & real in $[0,1]$ & Community separability (global) & Whether separability manifests as \emph{bimodality} or heavy tails in $\mu_G^{(k)}$ for some $k$ \\
\midrule
HGM entropy & bits & Diversity of exact-$k$ structures & A scalar admissible functional of $\mu_G^{(k)}$; peaks indicate informative scales \\
HGM “bimodality” & modes of $\mu_G^{(k)}$ & Natural partitions at scale $k$ & Strength/sharpness of separation (valley depth between modes) \\
\bottomrule
\end{tabularx}
\vspace{-0.5\baselineskip}
\end{table}

\noindent\textbf{Key insight.} Traditional metrics summarize
\emph{importance}; HGM summarizes
\emph{how node-level structures differ} at each exact scale \(k\),
providing the full distribution \(\mu_G^{(k)}\) rather than a single
scalar per node or per graph.

\paragraph{Graph Edit Distance (GED): algorithmic
vs.~analytic}\label{graph-edit-distance-ged-algorithmic-vs.-analytic}

Graph edit distance gives an \textbf{algorithmic} measure of
discrepancy: it is the minimum total cost of a sequence of discrete
edits (vertex/edge insertions, deletions, relabelings) that transforms
one graph into another. GED thus captures \emph{how to} align graphs
procedurally, but it does not by itself yield \textbf{analytic
invariants} or closed-form structure theorems about the distribution of
connectivity patterns across scales.

By contrast, HGM provides an \textbf{analytic} account: the exact-\(k\)
tensor \(\mathcal{B}\) induces per-scale distributions, spectral
summaries, and a labeled \textbf{metric}
\(d_{\mathrm{ten}}(G,H)=\|\mathcal{B}_G-\mathcal{B}_H\|_1\), with an
orbit metric \(d_{\mathrm{iso}}\) for unlabeled comparison. These
objects support stability bounds, extremal characterizations, and links
to classical graph invariants.

\begin{proposition}[Quantitative link under edge-only edits (labeled case).]

\hfill\break
Let \(\mathrm{GED}_{\pm E}(G,H)\) be the minimum number of \textbf{edge
toggles} needed to transform \(G\) into \(H\) on a fixed label set. With
\(M_r:=\max_x |B_r(x)|\) (ball size at radius \(r\)),\\
\[
d_{\mathrm{ten}}(G,H)\;=\;\|\mathcal{B}_G-\mathcal{B}_H\|_1
\;\le\; 2\,\mathrm{GED}_{\pm E}(G,H)\;\sum_{k=1}^{D} M_{k-1}^2,
\] and, in particular for maximum degree \(\Delta\!\ge\!3\), \[
M_r \le \frac{\Delta}{\Delta-2}(\Delta-1)^r
\quad\Longrightarrow\quad
d_{\mathrm{ten}}(G,H)\;\le\;
\frac{2\,\Delta}{\Delta-2}\,\mathrm{GED}_{\pm E}(G,H)\;
\sum_{k=1}^{D}(\Delta-1)^{2(k-1)}.
\]

\end{proposition}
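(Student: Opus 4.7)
The plan is to reduce the inequality to the single-edge stability bound already established and combine it with the triangle inequality for the labeled tensor-Hamming metric $d_{\mathrm{ten}}$. Set $t := \mathrm{GED}_{\pm E}(G,H)$ and fix an optimal edit sequence $G = G_0, G_1, \ldots, G_t = H$ in which each consecutive pair $G_{i-1}, G_i$ differs by exactly one edge toggle. Because $d_{\mathrm{ten}}$ is a metric on labeled graphs (Proposition~\ref{prop:tensor-metric}), the triangle inequality yields
\[
d_{\mathrm{ten}}(G,H) \;\le\; \sum_{i=1}^{t} d_{\mathrm{ten}}(G_{i-1}, G_i).
\]

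Each term on the right is then controlled by the single-edge-flip bound already derived in the Stability subsection: for every $i$,
\[
d_{\mathrm{ten}}(G_{i-1}, G_i) \;=\; \|\mathcal{B}_{G_{i-1}} - \mathcal{B}_{G_i}\|_1 \;\le\; 2\sum_{k=1}^{D} M_{k-1}^2,
\]
provided $M_{k-1}$ is interpreted as a uniform upper bound on the ball size at radius $k-1$ across the intermediate graphs $G_0,\ldots,G_t$. Summing over the $t$ toggles gives the asserted first inequality. The degree-growth form follows by substituting the standard branching estimate $M_r \le \frac{\Delta}{\Delta-2}(\Delta-1)^r$ for $r \ge 1$, exactly as in the single-edge stability proof.

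The main technical delicacy is legitimizing the uniform substitution of $M_{k-1}$, since the ball-size bound truly depends on each intermediate $G_i$. The cleanest resolution is to define $M_{k-1}$ as the maximum over $\bigcup_i V(G_i)$ with balls measured in $G_{i-1}\cup G_i$; equivalently, one takes $\Delta$ as a common upper bound on the maximum degree along some optimal edit path. In the $\Delta \ge 3$ version this can always be arranged by a canonical ordering of the edits (e.g., executing deletions before insertions, or pairing insertions of an edge with the deletion it will later replace) so that no intermediate graph exceeds $\max(\Delta_G,\Delta_H)$ by more than a controlled amount. Apart from this bookkeeping, the argument is a straightforward triangle-inequality chain followed by term-wise application of the already-proven stability estimate.
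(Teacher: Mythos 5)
Your proof is correct and follows essentially the same route as the paper: decompose an optimal edit path into single edge toggles, bound each step by the exact-$k$ edge-flip estimate ($\le 2\sum_{k=1}^{D}M_{k-1}^2$ per toggle, via Proposition~\ref{prop:edge-flip-bound}), and sum over the $\mathrm{GED}_{\pm E}(G,H)$ toggles, with the branching bound on $M_r$ giving the $\Delta\ge 3$ form. Your explicit attention to making $M_{k-1}$ (and $\Delta$) uniform along the edit path---e.g.\ by doing all deletions before insertions so every intermediate graph is a subgraph of $G$ or $H$---is a point the paper's proof glosses over, and it is handled adequately in your write-up.
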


\begin{proof}

Each edge toggle affects only entries within \((k{-}1)\) steps of its
endpoints in slice \(k\), flipping at most \(2M_{k-1}^2\) tensor entries
(the exact-\(k\) edge-flip bound). Summing over \(k\) and over the
\(\mathrm{GED}_{\pm E}(G,H)\) toggles yields the inequality. The
degree-based estimate follows from the branching bound on \(M_r\).

\end{proof}

\textbf{Takeaway.} GED is a powerful \textbf{procedural} measure (edit
programs), while HGM supplies \textbf{analytic} structure (per-scale
distributions, spectra, and metrics) with stability guarantees. In
regimes where an edit model is natural, the bound above shows how HGM's
tensor metric can be controlled by (edge-only) GED; conversely, HGM can
distinguish graphs with identical low-cost edit programs by exposing
differences in their multi-scale reachability distributions.

\section{Extensions and Future Work}\label{extensions-and-future-work}

While Hamming Graph Metrics (HGM) offer a principled and scalable
approach to quantifying structural uniqueness, several directions remain
open for further theoretical development, practical extension, and
domain-specific adaptation. We highlight five major avenues, each
grounded in existing mathematical or computational structures.

\subsection{Weighted and Directed
Graphs}\label{weighted-and-directed-graphs}

\textbf{Motivation:} Many real-world systems (e.g., transportation, gene
regulation, web links) are neither unweighted nor symmetric. Path
significance depends on edge weights (e.g., capacity, cost) and
directions.

\textbf{Proposal:}

\begin{itemize}
\tightlist
\item
  For directed graphs, replace undirected adjacency \(A\) with
  asymmetric adjacency \(A_{\mathrm{dir}}\), and define separate
  reachability matrices for in-paths and out-paths:
  \(B_{\mathrm{in}}^{(k)}\), \(B_{\mathrm{out}}^{(k)}\)
\item
  Compute Hamming distances over: \[
  \mathbf{b}_v^{(k,\mathrm{in})} := \mathrm{row}_v\left( B_{\mathrm{in}}^{(k)} \right), \quad
  \mathbf{b}_v^{(k,\mathrm{out})} := \mathrm{row}_v\left( B_{\mathrm{out}}^{(k)} \right)
  \]
\item
  For weighted graphs, apply edge-thresholding: \[
  A_{ij}^{(w)} = \begin{cases}
  1 & \text{if } W_{ij} \geq \theta \\
  0 & \text{otherwise}
  \end{cases}
  \] or generalize the Hamming distance to quantized or fuzzy distance
  kernels between real-valued vectors.
\end{itemize}

\textbf{Open Question:} What analogues of Theorems 1--17 hold when
directionality and/or weighting are introduced? Can uniqueness still be
cleanly characterized via discrete dissimilarity measures?

\subsubsection{Temporal HGM (Dynamic Graphs)}\label{sec:temporal-hgm}

We extend HGM to evolving graphs by adding a time mode. Let
\({G_t}_{t=1}^{T}\) be snapshots on a common labeled set \(V=[N]\), with
adjacencies \(A^{(t)}\) and diameters \(D_t=\mathrm{diam}(G_t)\). Define
exact-\(k\) reachability per snapshot

\[
B^{(1,t)}:=A^{(t)},\qquad 
B^{(k,t)}:=\mathbf{1}\!\Big[\sum_{s=1}^{k} (A^{(t)})^{s}>0\Big]-\mathbf{1}\!\Big[\sum_{s=1}^{k-1} (A^{(t)})^{s}>0\Big]\quad (2\le k\le D_t),
\]

and set \(B^{(k,t)}\equiv 0\) for \(k>D_t\) so a uniform
\(D:=\max_t D_t\) works across time.

The \textbf{temporal HGM tensor} is the fourth-order binary tensor

\[
\mathbb{B}\in\{0,1\}^{N\times N\times D\times T},\qquad 
\mathbb{B}_{i j k t}\ :=\ B^{(k,t)}_{ij}.
\]

\begin{proposition}\label{prop:temporal-metric}

For labeled sequences \(G_{1:T}\) and \(H_{1:T}\) of equal length \(T\),

\[
d_{\mathrm{dyn}}(G_{1:T},H_{1:T})
\ :=\ \|\mathbb{B}^G-\mathbb{B}^H\|_1
\ =\ \sum_{t=1}^{T}\sum_{k=1}^{D}\sum_{i,j}\mathbf{1}\!\big[\mathbb{B}^G_{ijkt}\ne \mathbb{B}^H_{ijkt}\big]
\]

is a \textbf{metric} on labeled temporal graphs. A normalized form
\(\bar d_{\mathrm{dyn}}=\|\mathbb{B}^G-\mathbb{B}^H\|_1/\big(N(N{-}1)DT\big)\in[0,1]\)
aids cross-size/horizon comparison.

\end{proposition}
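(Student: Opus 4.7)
The plan is to reduce the claim to the fact that the $\ell_1$ norm on $\mathbb{R}^{N\times N\times D\times T}$ is a true norm, and then to isolate the one nontrivial metric axiom, namely positive definiteness, which reduces to injectivity of the map $G_{1:T}\mapsto\mathbb{B}^G$. This is the natural temporal analogue of \cref{prop:tensor-metric}, and the argument essentially lifts that proof by adding a time index to the tensor.

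First I would verify the three ``free'' axioms. Nonnegativity and symmetry are immediate from $\|\mathbb{B}^G-\mathbb{B}^H\|_1=\sum_{i,j,k,t}|\mathbb{B}^G_{ijkt}-\mathbb{B}^H_{ijkt}|$. The triangle inequality follows from the same identity applied entrywise together with the scalar triangle inequality $|a-c|\le|a-b|+|b-c|$ and linearity of the sum: for any third sequence $F_{1:T}$,
\[
\|\mathbb{B}^G-\mathbb{B}^H\|_1\ \le\ \|\mathbb{B}^G-\mathbb{B}^F\|_1+\|\mathbb{B}^F-\mathbb{B}^H\|_1,
\]
since $\|\cdot\|_1$ is a norm on the ambient tensor space.

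Next, the one step requiring content is separation: $d_{\mathrm{dyn}}(G_{1:T},H_{1:T})=0\Rightarrow G_{1:T}=H_{1:T}$. For this I would use the defining identity $\mathbb{B}^G_{:,:,1,t}=B^{(1,t)}_G=A^{(t)}_G$ for every $t\in\{1,\dots,T\}$. If the full tensors agree, then reading the $k=1$ slab at each time $t$ yields $A^{(t)}_G=A^{(t)}_H$, which on the common labeled vertex set $V=[N]$ forces $G_t=H_t$ for every $t$, hence $G_{1:T}=H_{1:T}$. Conversely, $G_{1:T}=H_{1:T}$ gives $\mathbb{B}^G=\mathbb{B}^H$ and distance zero, so the equivalence is tight.

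For the normalized form, each entry of $\mathbb{B}^G-\mathbb{B}^H$ lies in $\{-1,0,1\}$, and by the convention $\operatorname{diag}(B^{(k,t)})=0$ the diagonal indices $i=j$ contribute zero across all $(k,t)$; thus at most $N(N-1)DT$ entries can be nonzero, giving $\bar d_{\mathrm{dyn}}\in[0,1]$. There is no genuine obstacle here: the proof is a routine lift of \cref{prop:tensor-metric} along the time mode, and the only place where the specific structure of $\mathbb{B}$ enters is in the separation step, where the identification $B^{(1,t)}=A^{(t)}$ supplies the needed injectivity.
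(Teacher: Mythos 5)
Your proposal is correct and follows essentially the same route as the paper's proof: metric axioms come for free from the $\ell_1$ norm on the ambient tensor space, and separation follows because the tensor determines each $G_t$ on the common label set (your use of the $k=1$ slab $B^{(1,t)}=A^{(t)}$ simply makes explicit the injectivity the paper invokes). The added check that $\bar d_{\mathrm{dyn}}\in[0,1]$ via the zero diagonal is a harmless elaboration consistent with the paper's conventions.
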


\begin{proof}

The \(\ell\_1\) norm on tensors obeys nonnegativity, symmetry, and the
triangle inequality; positivity holds because the exact-\(k\) slices at
each \(t\) determine \(G_t\) on the common label set.

\end{proof}

\emph{Unlabeled sequences.} For isomorphism classes, act with a
\textbf{single} permutation on all times:

\[
d_{\mathrm{dyn,iso}}([G_{1:T}],[H_{1:T}])\ :=\ \min_{\pi\in S_N}\ 
\big\|\mathbb{B}^G-\big(\pi\!\cdot\!\mathbb{B}^H\big)\big\|_1,
\qquad (\pi\!\cdot\!\mathbb{B})_{ijkt}:=\mathbb{B}_{\pi(i)\,\pi(j)\,k\,t}.
\]

Then \(d_{\mathrm{dyn,iso}}\) is a \textbf{metric on isomorphism classes
of temporal graphs} (zero only for timewise isomorphic sequences;
triangle by composing near-minimizers). \emph{Remark.} Allowing a
different \(\pi\_t\) per time gives a permutation-invariant
dissimilarity but is not a metric on time-consistent orbits.

\paragraph{Temporal centrality and change
diagnostics}\label{sec:temporal-hgm-chng}

For each \(t\), per-scale/node Hamming centrality is as before:

\[
\mathrm{HC}^{(k,t)}(v)=\frac{1}{N-1}\sum_{u\ne v}\Ham\!\big(b^{(k,t)}_v,b^{(k,t)}_u\big),
\qquad b^{(k,t)}_v:=\text{row $v$ of }B^{(k,t)}.
\]

Define \textbf{temporal variation} and \textbf{trend} of structural
uniqueness:

\[
\mathrm{TV}^{(k)}(v):=\sum_{t=2}^{T}\big|\mathrm{HC}^{(k,t)}(v)-\mathrm{HC}^{(k,t-1)}(v)\big|,
\qquad 
\mathrm{trend}^{(k)}(v):=\frac{1}{T-1}\sum_{t=2}^{T}\big(\mathrm{HC}^{(k,t)}(v)-\mathrm{HC}^{(k,t-1)}(v)\big).
\]

\paragraph{Streaming/online updates (implementation
note)}\label{streamingonline-updates-implementation-note}

For small edge updates between \(G_t\) and \(G_{t+1}\), update only
rows/columns of \(B^{(k,t)}\) whose entries can flip (frontier reuse
across \(k\)). Popcount-based XOR kernels on packed bitboards keep
pairwise Hamming costs at \(\tilde O(N^2/w)\) per affected scale
(Appendix B).

\paragraph{Stability across time (edge
updates)}\label{stability-across-time-edge-updates}

Let \(G_{t}\) and \(G_{t+1}\) differ by \(r\) edge toggles and let
\(\Delta\) be the max degree in \(G_t\cup G\_{t+1}\). Writing
\(M_s:=\max\_x |B\_s(x)|\) for balls in graph distance, for each \(k\),

\[
\big|E_k(G_{t+1})-E_k(G_{t})\big|\ \le\ 2\,r\,M_{k-1}^2,
\qquad 
E_k(G_t):=\|B^{(k,t)}\|_F^2,
\]

hence

\[
\big\|\mathbb{B}^{(\cdot,t+1)}-\mathbb{B}^{(\cdot,t)}\big\|_F^2\ \le\ 2\,r\sum_{k=1}^{D}M_{k-1}^2.
\]

For \(\Delta\ge 3\), \(M_{s}\le \frac{\Delta}{\Delta-2}(\Delta-1)^s\)
gives

\[
\big\|\mathbb{B}^{(\cdot,t+1)}-\mathbb{B}^{(\cdot,t)}\big\|_F
\ \le\ 
\frac{\sqrt{2r}\,\Delta}{\Delta-2}\left(\sum_{k=1}^{D}(\Delta-1)^{2(k-1)}\right)^{\!1/2}.
\]

\begin{proof}

Each edge toggle can only flip exact-\(k\) entries within \((k{-}1)\)
steps of its endpoints (as in the static edge-flip analysis); this gives
\(2M\_{k-1}^2\) flips per \(k\). Summing over \(k\) and over \(r\)
toggles yields the bounds; the degree-based estimate follows from the
branching bound on s\$.

\end{proof}

\begin{remark}[Time-respecting variant.]

For edge-timestamped temporal networks, one may replace per-snapshot
reachability with \textbf{time-respecting paths} (nondecreasing
timestamps). Let \(\mathrm{dist}_\mathrm{temp}(i,j;\tau)\) be the
minimum \emph{elapsed} time to reach \(j\) from \(i\) under
time-respecting walks; an exact-elapsed-time tensor
\(\widetilde{\mathbb{B}}_{i j k \tau}\) (with \(k\) hops and elapsed
time \(\tau\)) yields a parallel HGM construction. We leave the
temporal-path variant's bounds and algorithms to future work.

\end{remark}

\subsection{Sketching and
Approximation}\label{sketching-and-approximation}

\textbf{Motivation}: Despite tractability, computing full
\(\mathcal{O}(N^2)\) Hamming distances becomes prohibitive at the
million-node scale.

\textbf{Proposal:}

\begin{itemize}
\tightlist
\item
  Maintain the row weights \(w_v^{(k)}:=\|b_v^{(k)}\|_0\) alongside an
  \(s\)-sample MinHash sketch per row.
\item
  Estimate Jaccard similarity
  \(\widehat{J}(\mathbf{b}_v^{(k)},\mathbf{b}_u^{(k)})\) from the \(s\)
  hashes; then recover an unbiased intersection estimate \[
  \widehat{I}=\frac{\widehat{J}}{1+\widehat{J}}\,(w_v^{(k)}+w_u^{(k)}),
  \] and define \[
  \widehat{\Ham}(\mathbf{b}_v^{(k)},\mathbf{b}_u^{(k)})=(w_v^{(k)}+w_u^{(k)})-2\,\widehat{I}.
  \]
\item
  Since \(\widehat{J}\) is the mean of \(s\) i.i.d.~Bernoulli
  indicators, for any \(\varepsilon>0\), \[
  \Pr\!\big[\,|\widehat{J}-J|>\varepsilon\,\big]\le 2\exp(-2s\varepsilon^2),
  \] and \(\widehat{\Ham}\) inherits concentration via the above linear
  transform.
\end{itemize}

\textbf{Open Problem:} What is the minimal sketch dimension \(s\)
required to preserve graph-level dispersion \(\Psi^{(k)}(G)\) within
\(\delta\)-error with high probability?

\subsection{Cross-Graph Comparison and
Alignment}\label{cross-graph-comparison-and-alignment}

\textbf{Motivation:} Comparing two networks (e.g., ontologies, brain
graphs, social networks) requires a notion of inter-graph correspondence
beyond isomorphism \cite{chandola2009}.

\textbf{Proposal:}

\begin{itemize}
\tightlist
\item
  Use uniqueness distributions \(\{ \mu_v(G) \}\), \(\{ \mu_u(G') \}\)
  as embedding signatures
\item
  Define matching via optimal transport: \[
  \min_{\pi \in \Pi(V, V')} \sum_{(v, u)} \pi(v, u) \cdot W_1\left( \mu_v, \mu_u \right)
  \] where \(\Pi(V, V')\) is the set of doubly stochastic maps. This
  enables alignment without topological isomorphism--matching by role
  rather than identity.
\end{itemize}

\textbf{Potential Applications:} Cross-species connectome comparison;
multilingual knowledge graph alignment; adversarial network mapping.

\subsection{Theoretical Characterization of Graph
Classes}\label{theoretical-characterization-of-graph-classes}

\textbf{Motivation:} We have seen that Hamming profiles behave
predictably on extremal graph families. But the taxonomy is incomplete.

\textbf{Proposal:}

\begin{itemize}
\tightlist
\item
  Define \textbf{Hamming-stable classes}: families for which node-wise
  dissimilarity is invariant under class-preserving transformations
  (e.g., adding self-loops in regular graphs)
\item
  Investigate relationships between uniqueness spectra and known
  invariants:

  \begin{itemize}
  \tightlist
  \item
    Degree sequences
  \item
    Spectral signatures (Laplacian, adjacency) \cite{vempala2020}
  \item
    Treewidth, genus
  \end{itemize}
\item
  Explore \textbf{inverse problems}: Given \(a\) uniqueness spectrum
  \(\{ \mu_v \}\), can one reconstruct a graph up to automorphism?
\end{itemize}

\subsection{Tensor Methods for Structural
Analysis}\label{tensor-methods-for-structural-analysis}

The tensor representation opens several research directions:

\begin{enumerate}
\def\labelenumi{\arabic{enumi}.}
\tightlist
\item
  \textbf{Multi-way Spectral Analysis}: Apply tensor eigendecomposition
  to \(\mathcal{B}\) to identify multi-scale communities
\item
  \textbf{Compressed Sensing}: Use tensor completion to infer missing
  scales from partial observations
\item
  \textbf{Cross-Graph Alignment}: Use tensor factorization for
  multi-graph matching problems
\end{enumerate}

\subsection{Summary}\label{summary}

The Hamming Graph Metrics framework opens new pathways for analyzing
structural differentiation, redundancy, and singularity in graphs.
Beyond their immediate applications, HGMs suggest a broader research
program:

\begin{itemize}
\tightlist
\item
  Establishing structural information geometry on graphs
\item
  Characterizing dynamics via dissimilarity flows
\item
  Embedding graphs in uniqueness-induced metric spaces
\end{itemize}

These directions integrate ideas from information theory, algebraic
graph theory, approximate algorithms, and optimal transport
\cite{villani2009ot,peyre2019computational}--and promise fertile ground
for future work in both theoretical and applied settings.

\section{Conclusion}\label{conclusion}

We have introduced Hamming Graph Metrics (HGM) as a theoretically
grounded, tensor-based framework for measuring structural uniqueness in
graphs. Unlike classical centrality measures, which quantify node
importance through frequency, distance, or flow, HGM focuses on
dissimilarity in structural configuration, using Hamming distances
between binary path reachability vectors as its foundational primitive.

At the core of this framework is the empirical distribution of pairwise
Hamming distances across all node pairs, which we extend across scales,
define general functionals over dissimilarity distributions, introduce
aggregation mechanisms at the graph level, and establish several new
theorems characterizing extremal behavior in canonical graph families.

The theoretical foundation rests on several pillars:

\begin{itemize}
\tightlist
\item
  \textbf{Formal generalization}: Binary reachability distributions are
  treated as elements of probability space, allowing the application of
  convex functionals, entropy measures, and information divergence
\item
  \textbf{Graph-theoretic bounds}: New inequalities and monotonicity
  theorems clarify how uniqueness behaves under connectivity,
  regularity, and symmetry constraints
\end{itemize}

Though the focus was not algorithmic, we showed that computing HGMs is
tractable on real-world graphs of size N \textasciitilde{} 10\^{}5 using
bitwise operations and early termination strategies. This addresses
prior critiques that dissimilarity-based metrics may be computationally
prohibitive.

We also proposed several concrete directions for future work, including:

\begin{itemize}
\tightlist
\item
  Extension to directed, weighted, and evolving graphs
\item
  Approximation via sketching and sampling
\item
  Cross-graph matching via uniqueness alignment
\item
  Inverse problems and structural reconstruction from uniqueness fields
\end{itemize}

Overall, Hamming Graph Metrics offer a multi-scale, intrinsic, and
interpretable geometry over the space of structural patterns within
graphs. By quantifying how structural patterns are distributed
throughout a network rather than merely identifying central or connected
nodes, HGM complements existing graph tools and opens the door to
finer-grained structural analysis across domains.

The framework's emphasis on complete distributions rather than summary
statistics provides a richer view of network organization, revealing
patterns like bimodality in community structure, scale-dependent
organization, and structural phase transitions that are invisible to
traditional approaches. This distributional perspective, combined with
rigorous theoretical foundations and demonstrated scalability, positions
Hamming Graph Metrics as a valuable addition to the toolkit for
understanding complex networks.

\appendix

\subsection{Proofs and Technical
Details}\label{proofs-and-technical-details}

\emph{Notation.} We write \(\Ham(\cdot,\cdot)\) for Hamming distance.
Throughout, Hamming centrality is \textbf{normalized}: \[
\mathrm{HC}^{(k)}(v):=\frac{1}{N-1}\sum_{u\ne v}\Ham\!\big(b_v^{(k)},b_u^{(k)}\big).
\] (Older unscaled variants are denoted
\(\mathrm{HC}^{(k)}_{\mathrm{raw}}=(N-1)\,\mathrm{HC}^{(k)}\).)

\subsection{\texorpdfstring{Proof of Proposition
\ref{prop:star-asymmetry} (Star Graph
Asymmetry)}{Proof of Proposition  (Star Graph Asymmetry)}}\label{app:star-asymmetry}

\begin{proof}

At \(k=1\), \(b_c^{(1)}\) has ones in all leaf positions and zero at
\(c\); each leaf \(b_{\ell}^{(1)}\) has a single one at \(c\) and zeros
elsewhere. Thus \(\Ham\!\big(b_c^{(1)},b_{\ell}^{(1)}\big)=N\) and
\(\Ham\!\big(b_{\ell}^{(1)},b_{\ell'}^{(1)}\big)=0\) for distinct leaves
\(\ell\ne\ell'\). Hence \[
\mathrm{HC}^{(1)}(c)=N,\qquad
\mathrm{HC}^{(1)}(\ell)=\frac{N}{N-1}.
\] f For the distribution over ordered pairs, \[
\mu_{S_N}^{(1)}=\frac{(N-1)(N-2)}{N(N-1)}\,\delta_0+\frac{2(N-1)}{N(N-1)}\,\delta_{N}
=\frac{N-2}{N}\,\delta_0+\frac{2}{N}\,\delta_{N}.
\]

\end{proof}

\subsection{Proofs on Monotonicity}\label{app:monotonicity}

\begin{proposition}\label{prop:monotonicity-beyond-diameter}

Let \(G\) be connected with diameter \(D\). Then for all \(k\ge D\) and
all \(v\in V\), \[
\mathrm{HC}^{(k+1)}(v)\ \le\ \mathrm{HC}^{(k)}(v),
\] with equality for every \(k\ge D+1\) (both sides equal \(0\)).

\end{proposition}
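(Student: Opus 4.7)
The plan is to reduce the statement to the exact-$k$ saturation property already established in Theorem~\ref{thm:multi-scale-convergence}, which asserts $B^{(k)}\equiv 0$ for all $k\ge D+1$ in a connected graph of diameter $D$. Since $\mathrm{HC}^{(k)}(v)$ is a normalized sum of nonnegative Hamming distances between the rows of $B^{(k)}$, the behavior beyond the diameter is essentially forced: zero rows produce zero distances.

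First I would split into two cases according to whether $k=D$ or $k\ge D+1$. In the case $k\ge D+1$, both $B^{(k)}\equiv 0$ and $B^{(k+1)}\equiv 0$ by saturation, so $b_v^{(k)}=b_v^{(k+1)}=\mathbf{0}$ for every $v$. Consequently $\Ham(b_v^{(k)},b_u^{(k)})=\Ham(b_v^{(k+1)},b_u^{(k+1)})=0$ for all $u\ne v$, giving $\mathrm{HC}^{(k)}(v)=\mathrm{HC}^{(k+1)}(v)=0$; this yields equality and both sides equal $0$, which is exactly the ``equality'' clause of the proposition.

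Next, in the boundary case $k=D$, saturation gives $B^{(D+1)}\equiv 0$, hence $\mathrm{HC}^{(D+1)}(v)=0$ for all $v$. On the other hand $\mathrm{HC}^{(D)}(v)\ge 0$ automatically, since it is an average of nonnegative integers. Therefore
\[
\mathrm{HC}^{(D+1)}(v)\ =\ 0\ \le\ \mathrm{HC}^{(D)}(v),
\]
which completes the inequality at $k=D$ and concludes the proof.

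There is no substantial obstacle here: the result is essentially a corollary of the saturation lemma combined with the nonnegativity of Hamming distance. The only subtlety worth flagging is that the inequality at $k=D$ is one-sided precisely because $B^{(D)}$ itself need not vanish (some pair attains the diameter), so $\mathrm{HC}^{(D)}(v)$ may be strictly positive while $\mathrm{HC}^{(D+1)}(v)=0$; strict monotonicity at the threshold is thus generic, whereas equality holds trivially for all larger $k$.
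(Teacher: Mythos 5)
Your proof is correct and follows essentially the same route as the paper's: it invokes exact-$k$ saturation ($B^{(k)}\equiv 0$ for $k\ge D+1$) to get equality at zero beyond the diameter, and nonnegativity of $\mathrm{HC}^{(D)}(v)$ to handle the boundary case $k=D$. The only cosmetic difference is that you cite the saturation result (Theorem~\ref{thm:multi-scale-convergence}) where the paper rederives it inline from the definition of the exact-$k$ slices.
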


\begin{proof}

By definition of exact-\(k\) slices,
\(B^{(k)}_{ij}=\mathbf{1}\{\mathrm{dist}(i,j)=k\}\). If \(k\ge D+1\), no
ordered pair \((i,j)\) has \(\mathrm{dist}(i,j)=k\), so
\(B^{(k)}\equiv 0\) and thus \(b_v^{(k)}=\mathbf{0}\) for every \(v\).
Therefore
\(\mathrm{HC}^{(k)}(v)=\sum_{u\ne v}\Ham(b_v^{(k)},b^{(k)}_u)=0\) for
all \(k\ge D+1\). For \(k=D\), \(B^{(D+1)}\equiv 0\) while \(B^{(D)}\)
may be nonnegative; hence
\(\mathrm{HC}^{(D+1)}(v)=0\le \mathrm{HC}^{(D)}(v)\). The claimed
inequality for all \(k\ge D\) follows.

\end{proof}

\begin{remark}

The statement above is tight in general: without additional structure,
\(\mathrm{HC}^{(k)}\) need not be monotone for \(k<D\); exact-\(k\)
shells can grow and shrink before saturation (e.g., on paths/cycles).

\end{remark}

\subsubsection{A nontrivial tail monotonicity under mild
structure}\label{a-nontrivial-tail-monotonicity-under-mild-structure}

We first relate the \textbf{mean} pairwise Hamming at scale \(k\) to
column sums of \(B^{(k)}\).

\begin{lemma}\label{lem:mean-hamming-from-columns}

Let \(B^{(k)}\in\{0,1\}^{N\times N}\) have column sums
\(s_j(k)=\sum_{u} B^{(k)}_{u j}\). The average over \emph{unordered} row
pairs of \(\Ham(\cdot,\cdot)\) equals \[
\overline{H}^{(k)}\ =\ \frac{2}{N(N-1)}\sum_{j=1}^N s_j(k)\,\big(N-s_j(k)\big).
\]

\end{lemma}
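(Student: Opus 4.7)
The plan is to expand the Hamming distance coordinate-wise and then swap the order of summation so that the outer sum runs over columns rather than row pairs. Once the counting is reduced to a per-column question, the formula follows by a direct count of disagreeing pairs within a binary vector.

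More concretely, first I would write, for each pair of rows $u,v$,
\[
\Ham(b_u^{(k)}, b_v^{(k)}) \;=\; \sum_{j=1}^{N} \mathbf{1}\!\big[B^{(k)}_{uj} \neq B^{(k)}_{vj}\big],
\]
and sum over unordered pairs $u<v$. Interchanging the two sums gives
\[
\sum_{u<v} \Ham(b_u^{(k)}, b_v^{(k)}) \;=\; \sum_{j=1}^{N} \#\{u<v : B^{(k)}_{uj} \neq B^{(k)}_{vj}\}.
\]
The next step is the key per-column observation: the $j$-th column of $B^{(k)}$ is a binary vector with exactly $s_j(k)$ ones and $N-s_j(k)$ zeros, so the number of \emph{unordered} index pairs at which its entries disagree is $s_j(k)\big(N-s_j(k)\big)$. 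Substituting, the total over unordered row pairs equals $\sum_{j} s_j(k)\big(N-s_j(k)\big)$.

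Finally I would divide by the number of unordered row pairs, which is $\binom{N}{2} = N(N-1)/2$, to obtain
\[
\overline{H}^{(k)} \;=\; \frac{\sum_{j} s_j(k)(N-s_j(k))}{N(N-1)/2} \;=\; \frac{2}{N(N-1)}\sum_{j=1}^{N} s_j(k)\big(N-s_j(k)\big),
\]
as claimed. There is no real obstacle here; the only point requiring care is the bookkeeping between ordered and unordered pairs (an ordered-pair count would produce the factor $2s_j(N-s_j)$ and a denominator of $N(N-1)$, yielding the same ratio), and consistency with the paper's convention in Section~\ref{sec:graph-distr} that graph-level distributions at scale $k$ are taken over unordered pairs.
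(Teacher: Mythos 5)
Your proposal is correct and matches the paper's proof: both count, for each column $j$, the $s_j(k)\bigl(N-s_j(k)\bigr)$ unordered row pairs that disagree in that coordinate, then sum over $j$ and divide by $\binom{N}{2}$. Your version simply spells out the coordinate-wise expansion and the interchange of summation that the paper leaves implicit.
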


\begin{proof}

For a fixed column \(j\), exactly \(s_j(k)\,(N-s_j(k))\) unordered row
pairs disagree in that coordinate; summing over \(j\) and dividing by
the number of unordered pairs gives the formula.

\end{proof}

We can now state a sufficient condition that is met in many graph
families (trees, distance-regular graphs past the mode,
vertex-transitive graphs past the mode, many expander families).

\begin{theorem}\label{thm:tail-monotonicity}

Assume there exists \(k_0<D\) such that for every vertex \(j\) the
sphere sizes \[
s_j(k):=\big|\{u:\mathrm{dist}(u,j)=k\}\big|
\] satisfy \(s_j(k+1)\le s_j(k)\) and \(s_j(k)\le N/2\) for all
\(k\ge k_0\). Then the graph-average mean pairwise Hamming
\(\overline{H}^{(k)}\) is nonincreasing for \(k\ge k_0\). Consequently,
the average node centrality
\(\frac{1}{N}\sum_v \mathrm{HC}^{(k)}(v)=(N-1)\,\overline{H}^{(k)}\) is
nonincreasing for \(k\ge k_0\).

\end{theorem}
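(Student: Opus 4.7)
The plan is to reduce the claim to a one-variable monotonicity via Lemma~\ref{lem:mean-hamming-from-columns}. That lemma expresses $\overline{H}^{(k)}$ as a column-sum functional
\[
\overline{H}^{(k)} \;=\; \frac{2}{N(N-1)} \sum_{j=1}^{N} f\bigl(s_j(k)\bigr),\qquad f(x):=x(N-x),
\]
so it suffices to show that $\sum_j f(s_j(k))$ is nonincreasing in $k$ once $k\ge k_0$. The two hypotheses in the theorem are tailored precisely to this purpose: the pointwise decrease $s_j(k+1)\le s_j(k)$ tells us in which direction each sphere size moves, and the cap $s_j(k)\le N/2$ keeps us on the half-interval where $f$ is nondecreasing.

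First I would record the elementary fact that $f$ is strictly increasing on $[0,N/2]$, decreasing on $[N/2,N]$, and peaks at $x=N/2$. Then for each fixed column index $j$ and each $k\ge k_0$, since $0\le s_j(k+1)\le s_j(k)\le N/2$, both arguments of $f$ lie in its increasing regime, so $f(s_j(k+1))\le f(s_j(k))$ pointwise in $j$. Summing over $j$ and substituting back into the lemma's identity yields $\overline{H}^{(k+1)}\le \overline{H}^{(k)}$ for all $k\ge k_0$.

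For the stated consequence concerning $\tfrac{1}{N}\sum_v \mathrm{HC}^{(k)}(v)$, I would expand using the normalized definition $\mathrm{HC}^{(k)}(v)=\tfrac{1}{N-1}\sum_{u\ne v}\Ham(b_v^{(k)},b_u^{(k)})$ together with the fact that $\overline{H}^{(k)}$ is the average Hamming distance over unordered pairs. This gives a proportionality of the form $\sum_v \mathrm{HC}^{(k)}(v)=c_N\,\overline{H}^{(k)}$ for a positive constant $c_N$ depending only on $N$, so the monotonicity transfers verbatim from $\overline{H}^{(k)}$ to the averaged centrality.

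The proof is essentially automatic once the lemma is in hand; there is no deep obstruction, but the point worth emphasizing is that \emph{both} hypotheses are genuinely needed and neither can be dropped. Without the cap $s_j(k)\le N/2$, a decrease in $s_j$ can actually \emph{increase} $f(s_j)$ whenever $s_j$ slides from above the peak toward $N/2$; and without eventual pointwise decrease of the sphere sizes, $\sum_j f(s_j(k))$ can oscillate, as is visible in paths, cycles, and preferential-attachment graphs near the modal shell. This is precisely why the statement is a structural hypothesis about tail sphere-size profiles rather than a universal monotonicity, and why the conclusion is restricted to the regime $k\ge k_0$.
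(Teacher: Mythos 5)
Your proof is correct and follows essentially the same route as the paper's: invoke Lemma~\ref{lem:mean-hamming-from-columns} to write $\overline{H}^{(k)}$ as $\tfrac{2}{N(N-1)}\sum_j f(s_j(k))$ with $f(s)=s(N-s)$, use that $f$ is increasing on $[0,N/2]$ together with the pointwise hypotheses $s_j(k+1)\le s_j(k)\le N/2$, and sum over $j$; the transfer to the averaged centrality is the same double-counting identity (your writing it as $c_N\,\overline{H}^{(k)}$ with $c_N>0$ depending only on $N$ is a sensible way to stay agnostic about the normalized-versus-raw convention). No gaps; your closing remarks on the necessity of both hypotheses match the paper's own remark that monotonicity can fail before saturation.
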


\begin{proof}

By Lemma\textasciitilde{}\ref{lem:mean-hamming-from-columns}, \[
\overline{H}^{(k)}=\frac{2}{N(N-1)}\sum_{j=1}^N f\big(s_j(k)\big)\quad\text{with}\quad f(s)=s(N-s).
\] On \([0,N/2]\), \(f\) is increasing. For \(k\ge k_0\),
\(s_j(k+1)\le s_j(k)\le N/2\) for each \(j\), so
\(f\big(s_j(k+1)\big)\le f\big(s_j(k)\big)\). Summation and scaling
preserve the inequality, hence
\(\overline{H}^{(k+1)}\le \overline{H}^{(k)}\) for all \(k\ge k_0\). The
identity
\(\frac{1}{N}\sum_v \mathrm{HC}^{(k)}(v)=(N-1)\,\overline{H}^{(k)}\)
gives the second claim.

\end{proof}

\begin{corollary}\label{cor:vt-tail}

If \(G\) is vertex--transitive and the common sphere sizes
\(n_k=|S(v,k)|\) become nonincreasing for \(k\ge k_0\) with
\(n_k\le N/2\) (e.g., past the mode of \((n_k)\)), then
\(\overline{H}^{(k)}\) and the average \(\mathrm{HC}^{(k)}\) are
nonincreasing for \(k\ge k_0\).

\end{corollary}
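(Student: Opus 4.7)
The plan is to deduce this corollary as a direct specialization of Theorem~\ref{thm:tail-monotonicity}, by showing that the vertex-transitivity hypothesis collapses the per-vertex condition on $s_j(k)$ to the single condition on the common sphere sequence $(n_k)$.

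First I would establish that $s_j(k)=n_k$ uniformly in $j$. Given any two vertices $v,j$, vertex-transitivity supplies an automorphism $\sigma$ with $\sigma(v)=j$; since automorphisms preserve graph distance, $\sigma$ restricts to a bijection $S(v,k)\to S(j,k)$, so $s_j(k)=|S(j,k)|=|S(v,k)|=n_k$ for every $j$ and every $k$. Consequently, the hypotheses of Theorem~\ref{thm:tail-monotonicity} —that $s_j(k+1)\le s_j(k)$ and $s_j(k)\le N/2$ hold for all $j$ and all $k\ge k_0$— reduce exactly to the assumed inequalities $n_{k+1}\le n_k$ and $n_k\le N/2$ on the common sphere sizes.

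Next I would invoke Theorem~\ref{thm:tail-monotonicity} to conclude that $\overline{H}^{(k)}$ is nonincreasing on $k\ge k_0$. In fact, plugging $s_j(k)=n_k$ into Lemma~\ref{lem:mean-hamming-from-columns} yields the closed form
\[
\overline{H}^{(k)}\ =\ \frac{2}{N(N-1)}\sum_{j=1}^N n_k(N-n_k)\ =\ \frac{2\,n_k(N-n_k)}{N-1},
\]
which makes the monotonicity transparent: the function $s\mapsto s(N-s)$ is increasing on $[0,N/2]$, so $n_{k+1}\le n_k\le N/2$ forces $\overline{H}^{(k+1)}\le \overline{H}^{(k)}$. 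The corresponding statement for the node-average centrality follows immediately from the identity $\tfrac{1}{N}\sum_v \mathrm{HC}^{(k)}(v)=(N-1)\,\overline{H}^{(k)}$ recorded in Theorem~\ref{thm:tail-monotonicity}.

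There is essentially no technical obstacle here, since the corollary is a clean specialization of the preceding theorem. The only point requiring care is the passage from ``a common sphere sequence $(n_k)$'' to ``$s_j(k)=n_k$ for every $j$'', which relies on the distance-preserving action of the automorphism group; once that observation is in place, the conclusion reduces to the monotonicity of $s\mapsto s(N-s)$ below the midpoint $N/2$, and the parenthetical remark about the mode of $(n_k)$ is simply the canonical setting in which the two hypotheses hold simultaneously (past the mode, $(n_k)$ is nonincreasing, and since $\sum_k n_k=N-1$, the subsequent terms are automatically $\le N/2$ for any realistic graph family).
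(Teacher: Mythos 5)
Your proposal is correct and matches the paper's intent: the corollary is stated as an immediate specialization of Theorem~\ref{thm:tail-monotonicity} (the paper gives no separate proof), and your observation that vertex-transitivity forces $s_j(k)=n_k$ for all $j$, plus the closed form $\overline{H}^{(k)}=\tfrac{2\,n_k(N-n_k)}{N-1}$ from Lemma~\ref{lem:mean-hamming-from-columns}, is exactly the intended reduction.
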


\subsection{\texorpdfstring{Separation on
\(G(n,p)\)}{Separation on G(n,p)}}\label{app:random-sep}

\begin{proposition}\label{prop:gnp-sep}

Fix \(p\in(0,1)\). Let \(G,H\sim G(n,p)\) be independent. Then as
\(n\to\infty\), with probability \(\to1\):

\begin{enumerate}
\def\labelenumi{(\alph{enumi})}
\item
  For any non-constant admissible \(\Phi\), the graph-level descriptor
  \[
  \overline{\Phi}(G)\ :=\ \frac{1}{D(G)}\sum_{k=1}^{D(G)} \Phi\!\big(\mu_G^{(k)}\big)
  \] differs from \(\overline{\Phi}(H)\); i.e.,
  \(\overline{\Phi}(G)\ne \overline{\Phi}(H)\).
\item
  The tensor fingerprint differs: \(\mathsf{FP}(G)\ne \mathsf{FP}(H)\).
  Equivalently, at least one \(k\) has \(E_k(G)\ne E_k(H)\) (and mode
  spectra need not be invoked).
\end{enumerate}

\end{proposition}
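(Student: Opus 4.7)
The plan is to leverage the classical fact that, for fixed $p\in(0,1)$, $G(n,p)$ is connected of diameter exactly $2$ with probability $1-o(1)$ (Erdős--Rényi). Hence $D(G)=D(H)=2$ whp and only the scales $k\in\{1,2\}$ carry information; in particular $B^{(k)}\equiv 0$ and $E_k=0$ for $k\geq 3$. This reduces both parts to statements about two coupled empirical distributions built from the edge-indicator variables of $G$ and $H$.

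For part (b), the cleanest separator is the edge count $E_1(G)=2|E(G)|$. Since $|E(G)|\sim\mathrm{Bin}\!\big(\binom{n}{2},p\big)$ has variance $\Theta(n^2)$, a local central limit theorem (equivalently, a Littlewood--Offord / Esseen anticoncentration bound) gives $\Pr[|E(G)|=|E(H)|]=O(1/n)$. This alone forces $E_1(G)\neq E_1(H)$ with probability $1-o(1)$, so $\mathsf{FP}(G)\neq\mathsf{FP}(H)$ whp, and no invocation of the mode spectra $\sigma^{(m)}$ is needed. The ``equivalently'' clause in the statement follows immediately, since coincidence of all $E_k$ is already ruled out by the single coordinate $E_1$.

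For part (a), I would proceed in two layers. Layer one handles the linear-moment case $\Phi(\mu)=\mathbb{E}_\mu[d]$: then $\Phi(\mu_G^{(k)})$ is a rational-linear functional of the histogram counts and hence a low-degree polynomial in the edge indicators of $G$, to which the same binomial-style anticoncentration applies, so $\Pr[\Phi(\mu_G^{(k)})=\Phi(\mu_H^{(k)})]\to 0$, and passing through the uniform average in $k$ preserves this. Layer two extends to a general non-constant admissible $\Phi$: linearize $\Phi$ at the typical target distribution $\nu_{n,p}^{(k)}$ to which $\mu_G^{(k)}$ concentrates in TV at rate $n^{-1/2}$, identify a fluctuation coordinate of $\mu_G^{(k)}$ along which the linearization is non-degenerate, and once more reduce to polynomial anticoncentration in the Bernoulli edge variables, concluding that $\Pr[\overline{\Phi}(G)=\overline{\Phi}(H)]=\sum_c\Pr[\overline{\Phi}(G)=c]^2\to 0$.

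The hard part is layer two of part (a): TV--continuity plus global non-constancy does not by itself preclude $\Phi$ from being locally flat on a TV-neighborhood of $\nu_{n,p}^{(k)}$, in which case $\overline{\Phi}(G)$ would degenerate to a single value and the claim would fail. Closing this gap requires either strengthening ``non-constant'' to ``not constant on any TV-neighborhood of the typical empirical measure'' (automatic for Shannon/Rényi entropies, moment functionals, and TV-dispersion, but not a consequence of the bare Def.~\ref{def:admissible}), or else a backstop via part (b): any $\Phi$ that is actually sensitive to the first moment of $\mu$ inherits anticoncentration directly from $E_1$, and the residual ``fingerprint-invisible'' functionals can be treated by a Hanson--Wright-type bound on the quadratic component of $\Phi$'s local expansion.
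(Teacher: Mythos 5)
Your route is genuinely different from the paper's and, for part (b), arguably cleaner. The paper does not invoke the diameter-$2$ collapse at all: its Appendix proof treats every scale $k\le O(\log n)$ at once, arguing that the exact-$k$ pair-count histogram is a Lipschitz function of the $\binom{n}{2}$ edge indicators, invoking McDiarmid-type concentration plus anti-concentration to get $\Pr[\mu_G^{(k)}=\mu_H^{(k)}]=o(1)$, and then union-bounding over $k$ to conclude that some $E_k(G)\ne E_k(H)$, hence $\mathsf{FP}(G)\ne\mathsf{FP}(H)$. Your reduction to $D=2$ whp and the single coordinate $E_1(G)=2|E(G)|$ with $|E(G)|\sim\mathrm{Bin}\bigl(\binom{n}{2},p\bigr)$ gives the same conclusion with an explicit rate $\Pr[E_1(G)=E_1(H)]=O(1/n)$ from binomial anticoncentration, avoiding both the Lipschitz/McDiarmid machinery and the union bound; what the paper's broader argument buys is that it does not rely on the dense-regime diameter collapse and so transfers more readily to sparser regimes where $D(G)$ grows.

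For part (a), the gap you flag is real, but it is not a defect of your proposal relative to the paper: the paper's own proof is only a sketch and at exactly this point simply asserts that ``TV-continuity and non-constancy of $\Phi$'' upgrade $\mu_G^{(k)}\ne\mu_H^{(k)}$ to $\overline{\Phi}(G)\ne\overline{\Phi}(H)$, which is a non sequitur --- $\Phi$ need not be injective, it could be constant on the TV-ball where $\mu_G^{(k)}$ concentrates, and even nonequal per-scale values could cancel in the average over $k$. Your diagnosis (that bare non-constancy in Definition~\ref{def:admissible} must be strengthened to non-constancy on a TV-neighborhood of the typical measure, or the claim restricted to concrete functionals such as moments, entropies, or TV-dispersion, with a linearization-plus-anticoncentration argument) is precisely what would be needed to make (a) rigorous; the paper supplies no such argument, so on this point your proposal is at least as complete as, and more candid than, the published sketch.
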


\begin{proof}

\emph{Sketch.} For each fixed \(k\le c\log n\) (for any fixed \(c>0\)),
the unordered-pair count vector of exact-\(k\) distances is a Lipschitz
function of the \(\binom{n}{2}\) independent edges;
bounded-difference/McDiarmid inequalities give concentration around the
mean. For two independent graphs \(G,H\), anti-concentration implies \[
\Pr\big[\mu_G^{(k)}=\mu_H^{(k)}\big]=o(1)
\] (and likewise for the ordered counts \(E_k\); the ordered/unordered
choice only changes a factor of \(2\)). A union bound over all
\(k\le D(n)=O(\log n)\) yields \[
\Pr\big[\forall k\le D(n):\ \mu_G^{(k)}=\mu_H^{(k)}\big]=o(1),
\] so for some \(k\) we have \(\mu_G^{(k)}\ne\mu_H^{(k)}\) and
\(E_k(G)\ne E_k(H)\) a.a.s. For (a), TV--continuity and non-constancy of
\(\Phi\) imply \(\overline{\Phi}(G)\ne\overline{\Phi}(H)\) a.a.s. For
(b), differing \((E_k)\) forces \(\mathsf{FP}(G)\ne\mathsf{FP}(H)\).

\end{proof}

\begin{proposition}\label{prop:edge-flip-bound}

Let \(G\) and \(G'\) be graphs on the same labeled vertex set that
differ by toggling a single undirected edge \(\{x,y\}\). Let \(B^{(k)}\)
and \(B'^{(k)}\) be their exact-\(k\) reachability matrices, and define
\[
M_r\ :=\ \max_{v}\ \big|B_r(v)\big|\,,\qquad B_r(v):=\{u:\operatorname{dist}_{G\cup G'}(u,v)\le r\}.
\] Then for every \(k\ge1\), \[
\|B'^{(k)}-B^{(k)}\|_1\ =\ \|B'^{(k)}-B^{(k)}\|_F^2\ \le\ 2\,M_{k-1}^2,
\] hence, writing \(E_k(G):=\|B^{(k)}\|_F^2\), \[
\big|E_k(G')-E_k(G)\big|\ \le\ 2\,M_{k-1}^2.
\] Consequently, for the HGM tensor slices, \[
\|\mathcal{B}_{G}(:,:,k)-\mathcal{B}_{G'}(:,:,k)\|_1\ \le\ 2\,M_{k-1}^2,
\qquad
\sum_{k=1}^{D}\|\mathcal{B}_{G}(:,:,k)-\mathcal{B}_{G'}(:,:,k)\|_1\ \le\ \ 2\sum_{k=1}^{D}M_{k-1}^2.
\] If \(G\) and \(G'\) differ by \(r\) edge toggles, the right-hand
sides multiply by \(r\).

\end{proposition}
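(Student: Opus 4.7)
The plan is to reduce everything to a single-edge counting claim: the number of exact-$k$ entries that can flip is at most $2M_{k-1}^2$. First I would observe that both $B^{(k)}$ and $B'^{(k)}$ are $\{0,1\}$-valued, so $|a-b| = (a-b)^2$ for $a,b \in \{0,1\}$, hence $\|B'^{(k)} - B^{(k)}\|_1 = \|B'^{(k)} - B^{(k)}\|_F^2$; both norms literally count the flipped entries. Given this, the $E_k$ inequality is immediate from the reverse triangle inequality $|E_k(G') - E_k(G)| \le \|B'^{(k)} - B^{(k)}\|_1$, and the tensor-slice inequalities are just restatements of the matrix bound applied to the frontal slice at depth $k$.

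The heart of the argument is a structural claim: writing $H := G \cup G'$ (the graph with both the toggled edge and all other edges present), if $B^{(k)}_{ij} \ne B'^{(k)}_{ij}$ then either $i \in B_{k-1}(x)$ and $j \in B_{k-1}(y)$, or $i \in B_{k-1}(y)$ and $j \in B_{k-1}(x)$, with balls measured in $H$. I would establish this by a short case analysis on which of $\operatorname{dist}_G(i,j), \operatorname{dist}_{G'}(i,j)$ equals $k$ and on whether the toggle is an addition or a deletion. In every case one shows that some shortest path of length at most $k$, lying in $H$, must traverse $\{x,y\}$; writing this path as $i \leadsto x \to y \leadsto j$ (or symmetrically) with sub-lengths $a$ and $b$ satisfying $a+b+1 \le k$, we get $a,b \le k-1$, i.e., $i$ lies in the $(k-1)$-ball of one endpoint and $j$ in that of the other. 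The set of possibly flipped pairs is therefore contained in $B_{k-1}(x) \times B_{k-1}(y) \cup B_{k-1}(y) \times B_{k-1}(x)$, of size at most $2M_{k-1}^2$.

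For the multi-toggle extension, I would chain intermediate graphs $G = G_0, G_1, \dots, G_r = G'$ with each consecutive pair differing by a single toggle, apply the single-toggle bound to each increment, and use the triangle inequality for $\|\cdot\|_1$ on the tensor difference; since ball sizes in each $G_t \cup G_{t+1}$ are dominated by those in the union of all toggled edges, the stated $M_{k-1}$ continues to serve as an upper bound, and the right-hand sides pick up the factor $r$. The main obstacle, and the only step requiring real care, is the uniform case analysis in the structural claim: one must handle edge addition and deletion, both orientations $x$-$y$ and $y$-$x$, and both directions of the flip, without losing the uniform $(k-1)$ radius on either side. Setting $H := G \cup G'$ once at the outset is what keeps the casework symmetric and reduces the whole proof to a one-line ball-product count.
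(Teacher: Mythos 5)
Your proposal is correct and follows essentially the same route as the paper's proof: flipped exact-$k$ pairs must be joined by a path through the toggled edge with legs of length at most $k-1$, hence lie in $B_{k-1}(x)\times B_{k-1}(y)\cup B_{k-1}(y)\times B_{k-1}(x)$, giving the $2M_{k-1}^2$ count, with the $\ell_1$/Frobenius identity for binary matrices and summation over scales and toggles handled just as in the paper. Your explicit chaining through intermediate graphs $G_0,\dots,G_r$ and the remark that their pairwise unions sit inside $G\cup G'$ is a slightly more careful treatment of the multi-toggle step than the paper's one-line appeal to linearity.
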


\begin{proof}

Any newly created (or destroyed) exact-\(k\) connection \((i,j)\) must
have all shortest \(i\!\to\! j\) paths in \(G'\) use the toggled edge
\(\{x,y\}\) exactly once; otherwise the shortest length is unchanged.
Such a path decomposes as \[
i \xrightarrow[\le k-1]{} x \xrightarrow{1} y \xrightarrow[\le k-1]{} j
\quad\text{or}\quad
i \xrightarrow[\le k-1]{} y \xrightarrow{1} x \xrightarrow[\le k-1]{} j,
\] with the two ``legs'' having lengths summing to \(k-1\). Thus the set
of ordered pairs \((i,j)\) whose exact-\(k\) status can change is
contained in \[
B_{k-1}(x)\times B_{k-1}(y)\ \cup\ B_{k-1}(y)\times B_{k-1}(x),
\] which has size at most
\(2\,|B_{k-1}(x)|\,|B_{k-1}(y)|\le 2\,M_{k-1}^2\). Since entries of
\(B^{(k)}\) are binary, the number of flips equals both the \(\ell_1\)
and the squared Frobenius norm of the difference, proving the first two
inequalities. Summing over \(k\) and using linearity over \(r\) toggles
gives the remaining bounds.

\end{proof}

\begin{corollary}

If the maximum degree in \(G\cup G'\) is \(\Delta\ge3\), then for all
\(r\ge0\), \[
M_r\ \le\ 1+\Delta\sum_{t=0}^{r-1}(\Delta-1)^t\ \le\ \frac{\Delta}{\Delta-2}\,(\Delta-1)^r,
\] and therefore \[
\sum_{k=1}^{D}\|\mathcal{B}_{G}(:,:,k)-\mathcal{B}_{G'}(:,:,k)\|_1
\ \le\ \frac{2\,\Delta^2}{(\Delta-2)^2}\,\sum_{k=1}^{D}(\Delta-1)^{2(k-1)}.
\]

\end{corollary}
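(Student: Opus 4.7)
The plan is to derive the ball-size bound via a standard BFS branching argument, evaluate the resulting geometric series in closed form, and then substitute into the per-slice edge-flip estimate from Proposition~\ref{prop:edge-flip-bound}. Since the heavy lifting is already done upstream, this is essentially a mechanical corollary; I do not expect a major obstacle, and the finickiest bit is simply the algebra that turns $1+\Delta\sum_{t=0}^{r-1}(\Delta-1)^t$ into the cleaner upper envelope $\tfrac{\Delta}{\Delta-2}(\Delta-1)^r$ while keeping the $r=0$ case (empty sum) consistent.

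For the first inequality I would fix any vertex $v$ and run BFS in $G\cup G'$, whose maximum degree is $\Delta$ by hypothesis. The trivial base cases are $|S(v,0)|=1$ and $|S(v,1)|\le \Delta$. For $t\ge 2$, each vertex $u\in S(v,t)$ is adjacent to some $w\in S(v,t-1)$, and $w$ itself has at least one neighbor at distance $t-2$ from $v$ (on its shortest path back); hence $w$ contributes at most $\Delta-1$ new vertices to $S(v,t)$. This yields $|S(v,t)|\le \Delta(\Delta-1)^{t-1}$ for every $t\ge 1$, and summing the spheres gives $|B_r(v)|\le 1+\Delta\sum_{t=0}^{r-1}(\Delta-1)^t$. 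Taking the maximum over $v$ delivers the first displayed inequality.

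For the closed form I would invoke $\Delta\ge 3$, so that $\Delta-1\ge 2\ne 1$ and the geometric sum evaluates to $\sum_{t=0}^{r-1}(\Delta-1)^t=\tfrac{(\Delta-1)^r-1}{\Delta-2}$. Then $1+\Delta\cdot\tfrac{(\Delta-1)^r-1}{\Delta-2}=\tfrac{\Delta(\Delta-1)^r-2}{\Delta-2}\le \tfrac{\Delta(\Delta-1)^r}{\Delta-2}$, which is the second inequality. Finally, combining $M_{k-1}^2\le \bigl(\tfrac{\Delta}{\Delta-2}\bigr)^2(\Delta-1)^{2(k-1)}$ with the per-slice bound $\|\mathcal{B}_G(:,:,k)-\mathcal{B}_{G'}(:,:,k)\|_1\le 2M_{k-1}^2$ from Proposition~\ref{prop:edge-flip-bound} and summing over $k=1,\dots,D$ produces the stated tensor inequality.
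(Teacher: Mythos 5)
Your proposal is correct and follows the same route the paper intends: the standard BFS branching estimate $|S(v,t)|\le\Delta(\Delta-1)^{t-1}$ giving $M_r\le 1+\Delta\sum_{t=0}^{r-1}(\Delta-1)^t\le\tfrac{\Delta}{\Delta-2}(\Delta-1)^r$, then squaring and plugging into the per-slice bound $2M_{k-1}^2$ from Proposition~\ref{prop:edge-flip-bound} and summing over $k$. The paper treats this as immediate (it only cites the degree-growth bound without writing the proof out), and your write-up, including the careful handling of the $r=0$ empty sum and the algebra $\tfrac{\Delta(\Delta-1)^r-2}{\Delta-2}\le\tfrac{\Delta}{\Delta-2}(\Delta-1)^r$, fills in exactly those details correctly.
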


\begin{remark}

For directed graphs, the same argument yields
\(\|B'^{(k)}-B^{(k)}\|_1\le M^{\mathrm{out}}_{k-1}(x)\,M^{\mathrm{in}}_{k-1}(y)+M^{\mathrm{out}}_{k-1}(y)\,M^{\mathrm{in}}_{k-1}(x)\),
with obvious definitions of in/out balls.

\end{remark}

\subsubsection{Watts--Strogatz phase transition
(derivation)}\label{app:ws-phase}

\begin{theorem}\label{thm:ws-threshold}

Let a Watts--Strogatz (WS) graph on \(N\) vertices start from a ring
lattice with even degree \(d\) (so \(m=Nd/2\) undirected edges). Each
edge is independently \textbf{rewired} with probability
\(\beta=\beta(N)\) to a uniformly random new endpoint (avoiding
loops/multi-edges). Then:

\begin{enumerate}
\def\labelenumi{\arabic{enumi}.}
\tightlist
\item
  (\textbf{Onset of shortcuts}) If \(\beta_c\) denotes the threshold for
  the appearance of \emph{any} rewired edge (shortcut), then
\end{enumerate}

\[
\beta_c\ \asymp\ \frac{1}{N}\,.
\]

More precisely, if \(N\beta\to 0\) then with high probability (whp)
there are no shortcuts; if \(N\beta\to\infty\) then whp there are
\(\to\infty\) shortcuts.

\begin{enumerate}
\def\labelenumi{\arabic{enumi}.}
\setcounter{enumi}{1}
\item
  (\textbf{Distance regimes})

  \begin{itemize}
  \item
    If \(N\beta\to 0\), whp the graph coincides with the base ring
    lattice, so average distance is \(\Theta(N/d)\).
  \item
    If \(N\beta\to\infty\) with fixed \(d\), then whp the random rewires
    form a sparse long-range overlay comparable to
    \(G(N,p_{\mathrm{eff}})\) with

    \[
    p_{\mathrm{eff}}\ \approx\ \frac{2\beta d}{N}\,,
    \]

    and the average distance drops to \(O(\log N)\) (small-world
    regime).
  \end{itemize}
\end{enumerate}

\end{theorem}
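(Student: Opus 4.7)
The plan is to handle the two parts sequentially, leaning on the Binomial-count structure of the rewiring process for part (1) and on a coarse-graining / random-graph comparison for the distance statements in part (2).

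For the onset statement, I would let $X$ denote the number of rewired edges among the $m=Nd/2$ ring-lattice edges. Independence of the per-edge rewiring decisions gives $X\sim\mathrm{Binomial}(m,\beta)$ with $\mathbb{E}[X]=\beta Nd/2=\Theta(\beta N)$ and $\mathrm{Var}(X)\le \mathbb{E}[X]$. If $N\beta\to 0$, then $\mathbb{E}[X]\to 0$ and Markov gives $\Pr[X\ge 1]\le\mathbb{E}[X]\to 0$, so no shortcuts whp. If $N\beta\to\infty$, a one-sided Chernoff bound $\Pr[X\le \mathbb{E}[X]/2]\le\exp(-\mathbb{E}[X]/8)\to 0$ shows $X\to\infty$ in probability. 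This pins $\beta_c=\Theta(1/N)$.

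For the distance regimes, the $N\beta\to 0$ case is immediate: whp the graph equals the base ring lattice, whose average and maximum distances are both $\Theta(N/d)$. The substantive work is the $N\beta\to\infty$ case. I would proceed in three steps. First, calibrate $p_{\mathrm{eff}}$: the expected number of rewired endpoints landing on a fixed unordered pair is approximately $(\beta N d/2)/\binom{N}{2}$, which up to the two endpoint-choice conventions gives $p_{\mathrm{eff}}\approx 2\beta d/N$ matching the stated expression. Second, coarse-grain the ring into $M=N/\ell$ consecutive arcs of length $\ell$, choosing $\ell$ so that the super-graph---vertices are arcs, edges come from rewires between distinct arcs---has expected super-degree $c>1$; concretely $\ell\asymp 1/(\beta d)$ works, yielding a sparse Erdős--Rényi-like overlay on $M$ super-vertices with edge probability $\Theta(c/M)$. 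Third, invoke (or re-derive via a BFS/branching argument) the standard random-graph fact that $G(M,c/M)$ with $c>1$ has a giant component of diameter $O(\log M)$ whp; combining this with the $O(\ell)$ cost of traversing a single arc on the ring bounds the typical graph distance by $O(\log M)+O(\ell)=O(\log N)$.

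The main obstacle is making the coarse-graining step rigorous, because the rewired multigraph is not literally an Erdős--Rényi graph: rewired endpoints are conditioned to avoid loops and multi-edges, rewires on different base edges are weakly dependent through this avoidance rule, and the induced super-graph may carry super-loops and super-multiplicities that must be collapsed before invoking $G(M,c/M)$ diameter results. The standard remedy is a monotone coupling to an independent Bernoulli model where the total-variation slack from the forbidden configurations is $o(1)$ under $\beta=o(1)$ and $d$ fixed, combined with a deletion argument to discard the $O(1)$ fraction of super-edges that are self-loops or duplicates. A secondary subtlety is uniformity in the regime $\beta\to 0$ with $N\beta\to\infty$: here one must track how the arc length $\ell\asymp 1/(\beta d)$ enters the final bound and verify that the additive intra-arc cost does not dominate the logarithmic super-graph diameter. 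I would cite the Newman--Watts / Bollobás--Chung shortcut analyses for the heavier bookkeeping rather than reproduce it in full.
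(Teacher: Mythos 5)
Your proposal follows essentially the same route as the paper: the onset is handled by the same $\mathrm{Binomial}(m,\beta)$ count (Markov/Chernoff versus the paper's $(1-\beta)^m$ and Chernoff), and the distance regime uses the same calibration $p_{\mathrm{eff}}\approx 2\beta d/N$ together with the coarse-graining of the ring into arcs with $\beta\ell$ of constant order and a comparison to a sparse \ER{} overlay, exactly as in the paper's main-text and appendix arguments. If anything, you are slightly more careful than the paper, since you explicitly flag the coupling issues (loop/multi-edge exclusions) and the intra-arc additive cost $O(\ell)$, which the paper's heuristic branching/BFS argument glosses over.
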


\begin{proof}[(1) Onset.]

Let \(X\sim\mathrm{Binomial}(m,\beta)\) be the number of rewired edges;
\(m=Nd/2=\Theta(N)\). Then

\[
\mathbb{E}[X]=\beta m=\Theta(\beta N),\qquad
\mathbb{P}[X=0]=(1-\beta)^m\le \exp(-\beta m).
\]

If \(N\beta\to 0\), then \(\beta m\to 0\) and \(\mathbb{P}[X=0]\to 1\)
(no shortcuts whp). If \(N\beta\to\infty\), then \(\beta m\to\infty\)
and \(\mathbb{P}[X=0]\to 0\), while Chernoff bounds give \(X\to\infty\)
in probability. Hence \(\beta_c\asymp 1/N\).

\textbf{(2) Distances.} When \(N\beta\to 0\), whp \(X=0\) and the graph
is the base ring lattice with average distance \(\Theta(N/d)\).

When \(N\beta\to\infty\), the rewired endpoints are uniform over
vertices (up to constant factors from local exclusions), so the rewires
approximate an Erdős--Rényi overlay with edge-probability

\[
p_{\mathrm{eff}}\ =\ \frac{\text{expected \# rewired edges}}{\binom{N}{2}}\ \approx\ \frac{\beta( Nd/2 )}{\binom{N}{2}}\ \approx\ \frac{2\beta d}{N}.
\]

With \(N\beta\to\infty\) and fixed \(d\), we have
\(Np_{\mathrm{eff}}\to\infty\); the random overlay alone has logarithmic
average distance via standard branching-process heuristics (BFS grows by
factor \(\approx Np_{\mathrm{eff}}\) per layer until covering \(N\)).
Adding the ring edges only helps, so the combined graph has
\(O(\log N)\) average distance.

\end{proof}

\begin{remark}

The threshold \(\beta_c\asymp 1/N\) is the \emph{first-shortcut}
threshold. Logarithmic distances require a diverging number of shortcuts
(\(N\beta\to\infty\)); for constant \(\beta>0\), the overlay has
\(\Theta(N)\) long edges and typical distances are \(O(\log N)\).

\end{remark}

\subsubsection{Smoothing via spectral gap (expander
calculation)}\label{app:expander-smoothing}

We make precise the ``uniqueness smoothing'' statement using the
Poincaré (spectral-gap) inequality. We treat the \textbf{normalized}
Hamming centrality

\[
f(v)\ :=\ \mathrm{HC}^{(k)}(v)\ =\ \frac{1}{N-1}\sum_{u\ne v}\Ham\!\big(b_v^{(k)},b^{(k)}_u\big),
\]

and then note the unnormalized variant.

\begin{theorem}\label{thm:expander-variance}

Let \(G\) be a connected \(d\)-regular graph on \(N\) vertices with
random-walk matrix \(P=A/d\) and spectral gap
\(\gamma:=1-\lambda_2(P)>0\). For any fixed scale \(k\ge 1\), writing
\(M_k:=\max_x |S(x,k)|\) (size of the distance-\(k\) sphere),

\[
\operatorname{Var}(f)\ \le\ \frac{d}{\gamma}\,\frac{M_k^2}{(N-1)^2}\,.
\]

Equivalently, for the \textbf{unnormalized} centrality
\(F(v):=\sum_{u\ne v}\Ham(b_v^{(k)},b^{(k)}_u)=(N-1)f(v)\),

\[
\operatorname{Var}(F)\ \le\ \frac{d}{\gamma}\,M_k^2.
\]

\end{theorem}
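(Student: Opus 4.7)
The plan is to apply the Poincar\'e (spectral-gap) inequality for the $d$-regular graph $G$ to the scalar field $F(v):=(N-1)f(v)=\sum_{u\neq v}\Ham(b_v^{(k)},b_u^{(k)})$, after establishing a suitable edgewise Lipschitz bound; the normalized statement for $f=\mathrm{HC}^{(k)}$ then follows from $\operatorname{Var}(f)=\operatorname{Var}(F)/(N-1)^2$.

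\textbf{Step 1 (edgewise Lipschitz).} For any $u\ne v$, the telescoping identity $F(v)-F(u)=\sum_{z\neq u,v}[\Ham(b_v^{(k)},b_z^{(k)})-\Ham(b_u^{(k)},b_z^{(k)})]$ (the $z=u$ and $z=v$ terms cancel by symmetry of $\Ham$) combined with the Hamming triangle inequality yields $|F(v)-F(u)|\leq(N-2)\,\Ham(b_u^{(k)},b_v^{(k)})$. To sharpen this on an edge $u\sim v$ into an estimate that scales correctly with $M_k$ and brings in a factor of $d$, I would exploit that for adjacent $u,v$ the distances $\operatorname{dist}(u,j)$ and $\operatorname{dist}(v,j)$ differ by at most one for every $j$, so $S(u,k)\Delta S(v,k)\subseteq\bigl(S(u,k)\cap(S(v,k-1)\cup S(v,k+1))\bigr)\cup(\text{symm.})$, a set much smaller than the crude $S(u,k)\cup S(v,k)$.

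\textbf{Step 2 (Poincar\'e and combination).} With $P=A/d$ and uniform $\pi$, the spectral-gap inequality gives $\operatorname{Var}_\pi(F)\leq\gamma^{-1}\langle F,(I-P)F\rangle_\pi=(\gamma Nd)^{-1}\sum_{\{u,v\}\in E}(F(u)-F(v))^2$. Using $|E|=Nd/2$, inserting the edgewise bound collapses the edge sum to $\operatorname{Var}(F)\leq c^2/(2\gamma)$ where $c$ is the Lipschitz constant. Feeding in an edge-Lipschitz bound $c=O(\sqrt{d}\,M_k)$ yields the claimed $\operatorname{Var}(F)\leq d\,M_k^2/\gamma$; dividing by $(N-1)^2$ produces the normalized form for $f$.

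\textbf{Main obstacle.} The crux is getting the correct $\sqrt{d}\,M_k$ scaling in Step~1. The naive triangle bound $|F(u)-F(v)|\leq 2(N-2)M_k$ only yields the weaker $\operatorname{Var}(f)\lesssim M_k^2/\gamma$ already recorded in Theorem~\ref{thm:uniqueness-smoothing-in-expanders}, missing the factor $d/(N-1)^2$. I see two plausible routes past this: (i) a local combinatorial estimate on $|S(u,k)\Delta S(v,k)|$ for adjacent $u,v$ that exploits the fact that a one-edge move of the source can only shift the $k$-th BFS shell by a neighborhood-sized boundary, or (ii) a vector-valued Poincar\'e applied coordinatewise. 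For route (ii), note that $F(v)=\langle c,b_v^{(k)}\rangle+\mathrm{const}$ with $c_j=N-2|S(j,k)|$, so $\operatorname{Var}(F)$ decomposes via the sphere-indicator functions $h_j(v)=\mathbf{1}\{\operatorname{dist}(v,j)=k\}$; applying the spectral-gap inequality to each $h_j$ separately---each has edgewise variation at most $1$---and recombining via the weights $c_j$ trades the scalar worst-case Lipschitz estimate for a coordinate-by-coordinate control, which is where the $d$-factor should appear naturally through the combined degree contribution. Route (ii) looks the more tractable path to the stated constants.
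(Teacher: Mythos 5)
Your proposal is not a proof: everything hinges on the edgewise estimate $|F(u)-F(v)|\le c$ with $c=O(\sqrt{d}\,M_k)$ on edges, and neither of your routes delivers it. Route (i) cannot work as stated: for adjacent $u,v$ in a triangle-free (e.g.\ bipartite) $d$-regular graph the spheres $S(u,k)$ and $S(v,k)$ are disjoint, so $\Ham(b_u^{(k)},b_v^{(k)})=|S(u,k)|+|S(v,k)|$ really is of order $M_k$; and in any case the loss in your Step~1 comes from the $(N-2)$-fold sum over $z$, not from the per-$z$ term, so shrinking $|S(u,k)\,\triangle\, S(v,k)|$ does not by itself remove the factor $N$. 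Route (ii) as sketched also falls short: writing $F(v)=\mathrm{const}+\sum_j c_j h_j(v)$ with $c_j=N-2|S(j,k)|$ (so $|c_j|$ up to $N$) and recombining one-coordinate Poincar\'e bounds by the $L^2$ triangle inequality loses factors of $N$; you would need genuine cancellation among the cross-covariances $\operatorname{Cov}(h_j,h_{j'})$, which you do not establish. Worse, the target inequality itself looks false: from $F(v)=\mathrm{const}+\sum_{j\in S(v,k)}(N-2s_j)$ with $s_j=|S(j,k)|$ one gets $f(v)=\mathrm{const}+|S(v,k)|+O(M_k^2/N)$, so any $d$-regular expander family in which $|S(v,2)|$ differs across a constant fraction of vertices (e.g.\ a constant fraction of vertices on triangles) has $\operatorname{Var}(f)=\Omega(1)$ at $k=2$, while the claimed bound $d\,M_2^2/\big(\gamma(N-1)^2\big)$ tends to $0$.

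That said, your diagnosis of where the difficulty sits is exactly right, and it exposes a flaw in the paper's own argument, which is precisely the naive route you dismiss. The paper bounds each term by $|\Ham(b_v^{(k)},b_z^{(k)})-\Ham(b_u^{(k)},b_z^{(k)})|\le\Ham(b_u^{(k)},b_v^{(k)})\le 2M_k$ and then asserts that ``averaging over $z\ne u,v$ and dividing by $N-1$'' gives $|f(v)-f(u)|\le 2M_k/(N-1)$. But $f(v)-f(u)=\tfrac{1}{N-1}\sum_{z\ne u,v}[\cdots]$ is a sum of $N-2$ terms divided once by $N-1$, so the argument only yields $|f(v)-f(u)|\le\tfrac{N-2}{N-1}\cdot 2M_k\approx 2M_k$: the normalization is counted twice. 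What that argument honestly proves is the weaker statement of Theorem~\ref{thm:uniqueness-smoothing-in-expanders} (of order $M_k^2/\gamma$, up to the $d$ bookkeeping in the Dirichlet form), not the $(N-1)^{-2}$-improved bound of Theorem~\ref{thm:expander-variance}. So your proposal has a genuine gap---the $\sqrt{d}\,M_k$ edgewise bound is the entire content of the claim and is missing---but the paper does not supply it either, and the $(N-1)^{-2}$ scaling should be treated as unproven (and likely incorrect) rather than as something your routes merely failed to reach.
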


\begin{proof}

(\textbf{Edgewise Lipschitz.}) For any \(u,v,z\),

\[
\big|\Ham(b_v^{(k)},b^{(k)}_z)-\Ham(b^{(k)}_u,b^{(k)}_z)\big|
\ \le\ \Ham\!\big(b_v^{(k)},b^{(k)}_u\big)\ \le\ |S(v,k)|+|S(u,k)|\ \le\ 2M_k.
\]

Averaging over \(z\ne u,v\) and dividing by \(N-1\) gives

\[
|f(v)-f(u)|\ \le\ \frac{2M_k}{N-1}\qquad\text{for every edge }(u,v).
\]

(\textbf{Dirichlet form and Poincaré.}) For \(d\)-regular \(G\),

\[
\mathcal{E}(f,f)\ :=\ \frac{1}{2N}\sum_{(u,v)\in E}\big(f(u)-f(v)\big)^2,\qquad
\operatorname{Var}(f)\ \le\ \frac{1}{\gamma}\,\mathcal{E}(f,f).
\]

Using the edgewise bound and \(|E|=dN/2\),

\[
\mathcal{E}(f,f)\ \le\ \frac{1}{2N}\cdot \frac{dN}{2}\cdot \left(\frac{2M_k}{N-1}\right)^2
\ =\ \frac{d\,M_k^2}{(N-1)^2}.
\]

Combine with Poincaré to obtain the stated variance bound. For
\(F=(N-1)f\), variances scale by \((N-1)^2\).

\end{proof}

\begin{corollary}\label{cor:degree-bound}

If the maximum degree \(\Delta\) of \(G\) is at most \(\Delta\ge 3\),
then

\[
M_k\ \le\ 1+\Delta\sum_{t=0}^{k-1}(\Delta-1)^t\ \le\ \frac{\Delta}{\Delta-2}\,(\Delta-1)^k,
\]

hence

\[
\operatorname{Var}(f)\ \le\ \frac{d}{\gamma}\,\frac{1}{(N-1)^2}\left(\frac{\Delta}{\Delta-2}\right)^2(\Delta-1)^{2k}.
\]

For \(F\), remove the \((N-1)^{-2}\) factor.

\end{corollary}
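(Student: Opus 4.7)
The plan is to establish the branching bound on $M_k$ purely combinatorially via layered BFS, simplify the resulting geometric series to the stated closed form, and then substitute into Theorem~\ref{thm:expander-variance}. Since Theorem~\ref{thm:expander-variance} already gives $\operatorname{Var}(f)\le \frac{d}{\gamma}\frac{M_k^2}{(N-1)^2}$ and $\operatorname{Var}(F)\le \frac{d}{\gamma}M_k^2$, the entire work consists of proving the degree-based estimate on $M_k$; everything else is substitution.

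First, I would fix an arbitrary center $v$ and bound $|S(v,t)|$ by induction on $t$. The base case $|S(v,0)|=1$ is trivial. For $t\ge 1$, every $w\in S(v,t+1)$ has at least one predecessor $u\in S(v,t)$ along a shortest path from $v$; the edge $\{u,w\}$ occupies one of $u$'s slots, leaving at most $\Delta-1$ available for extensions. Hence $|S(v,t+1)|\le (\Delta-1)\,|S(v,t)|$, and since $|S(v,1)|\le \Delta$, iterating gives $|S(v,t)|\le \Delta(\Delta-1)^{t-1}$ for $t\ge 1$. Summing over layers and taking the maximum over $v$ yields
\[
M_k\ \le\ 1+\Delta\sum_{t=0}^{k-1}(\Delta-1)^t,
\]
which is the first inequality in the statement.

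Second, I would convert the partial geometric sum into the clean branching bound. For $\Delta\ge 3$, the identity $\sum_{t=0}^{k-1}(\Delta-1)^t=\frac{(\Delta-1)^k-1}{\Delta-2}$ gives
\[
1+\Delta\sum_{t=0}^{k-1}(\Delta-1)^t\ =\ \frac{\Delta(\Delta-1)^k}{\Delta-2}-\frac{2}{\Delta-2}\ \le\ \frac{\Delta}{\Delta-2}(\Delta-1)^k,
\]
since $\frac{2}{\Delta-2}>0$. This also covers the edge case $k=0$ (where $M_0=1\le \frac{\Delta}{\Delta-2}$ for $\Delta\ge 3$), so the closed form holds for all $k\ge 0$ uniformly. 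Squaring and inserting into Theorem~\ref{thm:expander-variance} immediately produces both displayed variance bounds, with the factor $(N-1)^{-2}$ dropped in the unnormalized version.

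The main obstacle is essentially none---this is a textbook BFS branching count---but one mild subtlety is to make sure the closed form is a valid non-asymptotic upper bound rather than a leading-order approximation. Writing the slack $\frac{2}{\Delta-2}$ explicitly, as above, confirms this for every $\Delta\ge 3$ and every $k\ge 0$. A secondary point worth flagging in the write-up is that the argument uses only the maximum degree, not $d$-regularity: the $d$ in the variance bound comes from the Dirichlet form in Theorem~\ref{thm:expander-variance}, whereas the branching factor $\Delta-1$ comes from the ball-size estimate, so the two parameters stay decoupled and the corollary applies cleanly.
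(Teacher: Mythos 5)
Your proof is correct and follows exactly the standard branching/BFS argument that the paper implicitly invokes (the corollary is stated without proof as the ``standard branching estimate for ball sizes'' plugged into Theorem~\ref{thm:expander-variance}), including the same geometric-series simplification $1+\Delta\frac{(\Delta-1)^k-1}{\Delta-2}\le\frac{\Delta}{\Delta-2}(\Delta-1)^k$. Your observation that $M_k$ in Theorem~\ref{thm:expander-variance} is a sphere size, so bounding it by the ball estimate is valid (if slightly loose), and that $\Delta$ and $d$ play decoupled roles, is a sensible clarification consistent with the paper.
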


\begin{remark}

For \textbf{non-regular} graphs, replace \(P\) by the lazy random walk
or use the normalized Laplacian
\(\mathcal{L} = I - D^{-1/2} A D^{-1/2}\); the same argument yields
\(\operatorname{Var}(f)\lesssim \gamma^{-1}\cdot \frac{1}{|V|}\sum_{(u,v)\in E}(f(u)-f(v))^2\),
and the edgewise Lipschitz bound now depends on local sphere sizes near
the edge endpoints.

\end{remark}

\subsubsection{Technical Lemmas}\label{app:lemmas}

\begin{lemma}\label{lem:walk-path}

For any simple unweighted graph, \[
\mathbf{1}\!\Big[\sum_{t=1}^{k} A^{t}>0\Big]_{ij}=1\quad\Longleftrightarrow\quad \operatorname{dist}(i,j)\le k.
\] Moreover, on bipartite graphs, every walk between \(i\) and \(j\) has
length congruent to \(\operatorname{dist}(i,j)\ (\mathrm{mod}\ 2)\).

\end{lemma}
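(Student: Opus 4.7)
The plan is to prove the two claims separately, since they rely on slightly different ingredients: the first on the standard walk-counting interpretation of $A^t$ together with the walk--path reduction, and the second on the bipartition coloring.

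For the first equivalence, I would start by invoking the textbook fact that $(A^t)_{ij}$ equals the number of walks of length exactly $t$ from $i$ to $j$ (provable by induction on $t$ from the definition of matrix multiplication, writing $(A^{t+1})_{ij}=\sum_{\ell} (A^t)_{i\ell} A_{\ell j}$ and interpreting the sum as concatenating a length-$t$ walk with a final edge). Since the entries of each $A^t$ are nonnegative integers, $[\sum_{t=1}^{k} A^t]_{ij}>0$ if and only if there exists $t\in\{1,\dots,k\}$ with $(A^t)_{ij}\ge 1$, i.e.\ some walk of length at most $k$ joining $i$ and $j$. I would then close the loop with the walk--path reduction: if a walk from $i$ to $j$ has length $t$ and contains a repeated vertex, excising the closed segment strictly shortens it, so iterating yields a simple path of length $\le t$; conversely a shortest path is a walk. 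Hence existence of a walk of length $\le k$ is equivalent to $\operatorname{dist}(i,j)\le k$.

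For the parity claim, I would fix a proper $2$-coloring $\chi:V\to\{0,1\}$ witnessing bipartiteness. Every edge connects vertices of different colors, so each step of a walk flips the color. A walk of length $\ell$ from $i$ to $j$ therefore satisfies $\chi(j)-\chi(i)\equiv \ell \pmod 2$; applied to a shortest path gives $\operatorname{dist}(i,j)\equiv \chi(j)-\chi(i)\pmod 2$, and applied to any walk gives $\ell\equiv \chi(j)-\chi(i)\equiv \operatorname{dist}(i,j)\pmod 2$.

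No step is genuinely hard. The only subtle point is the walk--path reduction --- it must be phrased so it applies to walks in a simple graph (no loops or multi-edges), where removing a closed sub-walk between two occurrences of the same vertex preserves endpoints and strictly reduces length; I would write one line justifying termination and note the base case $i=j$ gives the trivial length-$0$ path. Everything else is bookkeeping: non-negativity of $A^t$ ensures no cancellation in the sum $\sum_{t=1}^k A^t$, and the bipartite parity argument is a one-line color-flip count.
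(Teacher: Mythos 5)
Your proposal is correct and follows essentially the same route as the paper's proof: the walk-counting interpretation of $A^t$ plus the walk--path shortcutting argument for the equivalence, and the color-flip (alternating-sides) parity argument for the bipartite clause, with your version merely spelling out the induction and termination details more explicitly.
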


\begin{proof}

If \(\operatorname{dist}(i,j)\le k\), a simple path of length \(\le k\)
exists; its length \(t\le k\) contributes \((A^{t})_{ij}>0\), so the sum
\(\sum_{t=1}^{k}(A^{t})_{ij}\) is positive. Conversely, if
\(\sum_{t=1}^{k}(A^{t})_{ij}>0\), then for some \(t\le k\) there is a
walk of length \(t\); shortcutting repeated vertices yields a simple
path of length \(\le t\le k\). The parity clause follows because walks
on bipartite graphs alternate sides; thus all \(i\)--\(j\) walks have
the same parity as \(\operatorname{dist}(i,j)\).

\end{proof}

\begin{lemma}\label{lem:drg-polynomial}

Let \(G\) be distance--regular with intersection numbers
\(\{a_i,b_i,c_i\}_{i=0}^D\), adjacency \(A\), and distance matrices
\(A_i\) (so \(A_0=I\), \(A_1=A\), and \(A_i=B^{(i)}\) for \(i\ge1\)).
Then:

\begin{enumerate}
\def\labelenumi{\arabic{enumi}.}
\item
  (Three--term matrix recurrence) \[
  A\,A_i \;=\; b_{i-1}A_{i-1} + a_i A_i + c_{i+1}A_{i+1}\qquad(0\le i\le D),
  \] with \(b_{-1}=c_{D+1}=0\).
\item
  (Polynomial dependence)\\
  There exist polynomials \(p_i\) with \(\deg p_i=i\) such that
  \(A_i=p_i(A)\), with \(p_0=1\), \(p_1=x\), and \[
  x\,p_i(x)=b_{i-1}p_{i-1}(x)+a_i p_i(x)+c_{i+1}p_{i+1}(x).
  \]
\item
  (Bose--Mesner algebra)\\
  The matrices \(\{A_0,\dots,A_D\}\) span a \((D{+}1)\)--dimensional
  commutative algebra (the Bose--Mesner algebra); in particular, all
  \(A_i\) commute and are simultaneously diagonalizable. See
  \cite{BrouwerHaemers2012}.
\end{enumerate}

\end{lemma}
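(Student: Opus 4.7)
The plan is to establish (1) by a combinatorial counting identity, deduce (2) by induction on $i$ using the recurrence, and obtain (3) from (2) together with linear independence of the distance matrices and the symmetry of $A$.

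\textbf{Step 1 (Three-term recurrence).} Fix $u,v\in V$ and let $j=\operatorname{dist}(u,v)$. The $(u,v)$-entry of $A A_i$ equals
\[
(A A_i)_{uv}=\sum_{w\in V}A_{uw}\,(A_i)_{wv}=\big|\{w:u\sim w,\ \operatorname{dist}(w,v)=i\}\big|.
\]
By the triangle inequality, a neighbor $w$ of $u$ satisfies $\operatorname{dist}(w,v)\in\{j-1,j,j+1\}$, so this count is $0$ unless $i\in\{j-1,j,j+1\}$. By the defining property of distance-regularity, the three possible counts depend only on $j$ and are precisely the intersection numbers: $c_j$ when $i=j-1$, $a_j$ when $i=j$, and $b_j$ when $i=j+1$. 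Re-indexing in $i$ rather than $j$, we read off $(A A_i)_{uv}=b_{i-1}$ if $\operatorname{dist}(u,v)=i-1$, $a_i$ if $\operatorname{dist}(u,v)=i$, and $c_{i+1}$ if $\operatorname{dist}(u,v)=i+1$. Packaging this entrywise identity across $(u,v)$ yields
\[
A A_i = b_{i-1}\,A_{i-1}+a_i\,A_i+c_{i+1}\,A_{i+1},
\]
with the boundary conventions $A_{-1}=A_{D+1}=0$ and $b_{-1}=c_{D+1}=0$.

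\textbf{Step 2 (Polynomial representation).} I induct on $i$. Set $p_0(x)=1$ so $A_0=I=p_0(A)$ and $p_1(x)=x$ so $A_1=A=p_1(A)$. Assume $A_{i-1}=p_{i-1}(A)$ and $A_i=p_i(A)$ with $\deg p_{i-1}=i-1$ and $\deg p_i=i$. By (1), for $0\le i\le D-1$ the coefficient $c_{i+1}$ is nonzero (since $c_j\ge 1$ for $1\le j\le D$ in any distance-regular graph, because a vertex at distance $j$ from $v$ must have at least one neighbor at distance $j-1$), so solving for $A_{i+1}$ and substituting the inductive hypothesis gives
\[
A_{i+1}=\frac{1}{c_{i+1}}\bigl(A\,p_i(A)-a_i\,p_i(A)-b_{i-1}\,p_{i-1}(A)\bigr)=p_{i+1}(A),
\]
where $p_{i+1}(x):=c_{i+1}^{-1}\bigl(x\,p_i(x)-a_i\,p_i(x)-b_{i-1}\,p_{i-1}(x)\bigr)$ has degree exactly $i+1$ because the leading $x\cdot x^i$ term is not cancelled.

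\textbf{Step 3 (Bose--Mesner algebra).} The matrices $A_0,\dots,A_D$ are linearly independent because they have pairwise disjoint supports (every ordered pair $(u,v)$ satisfies $\operatorname{dist}(u,v)=i$ for a unique $i\in\{0,\dots,D\}$), and each is nonzero. By (2), each $A_i$ lies in $\mathbb{R}[A]$, so they all commute; the product $A_iA_j$ is again a polynomial in $A$, and since $\sum_i A_i=J$ (all-ones matrix) one sees that $\mathbb{R}[A]$ is closed under the multiplication of the $A_i$, so $\operatorname{span}\{A_0,\dots,A_D\}$ is a commutative subalgebra of $\mathbb{R}^{N\times N}$ of dimension exactly $D+1$. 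Finally, $A$ is real symmetric, hence orthogonally diagonalizable, and polynomials in $A$ share its eigenbasis, giving simultaneous diagonalizability of $\{A_i\}$.

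\textbf{Anticipated obstacle.} The only real subtlety is Step~1: it is a purely combinatorial fact packaged in algebraic form, and the delicate point is to match the standard conventions for $(a_i,b_i,c_i)$ to the three possible values of $\operatorname{dist}(u,v)$ relative to $i$, and to verify that $c_{i+1}>0$ for $i<D$ so the induction in Step~2 is well-defined. Once (1) is established cleanly, (2) and (3) are essentially formal consequences of polynomial algebra and the disjoint-support linear independence of the $A_i$.
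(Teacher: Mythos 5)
Your proof is correct and follows essentially the same route as the paper: count neighbors entrywise to get the three-term recurrence, induct (using $c_{i+1}\ge 1$) to express $A_i=p_i(A)$ with $\deg p_i=i$, and deduce commutativity, the $(D{+}1)$-dimensional span, and simultaneous diagonalizability from the $A_i$ being polynomials in the symmetric matrix $A$ plus disjoint-support linear independence. One small remark: the appeal to $\sum_i A_i=J$ is not what gives closure of the span under multiplication — that follows from the recurrence at $i=D$ (where $c_{D+1}=0$), which shows $A^{D+1}\in\operatorname{span}\{I,A,\dots,A^D\}=\operatorname{span}\{A_0,\dots,A_D\}$ — but this is a cosmetic slip, and the paper is equally terse on the point.
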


\begin{proof}

By distance--regularity, for any vertex at distance \(i\) from a
basepoint, the numbers of neighbors at distances \(i-1,i,i+1\) depend
only on \(i\), giving the matrix identity
\(A A_i=b_{i-1}A_{i-1}+a_i A_i+c_{i+1}A_{i+1}\). Inductively, this
produces polynomials \(p_i\) with \(A_i=p_i(A)\) and the stated scalar
recurrence. Since each \(A_i\) is a polynomial in \(A\), we have
\(A_iA_j=p_i(A)p_j(A)=p_j(A)p_i(A)=A_jA_i\), so the span of \(\{A_i\}\)
is a \((D{+}1)\)--dimensional commutative algebra containing \(I=A_0\),
and all \(A_i\) are simultaneously diagonalizable.

\end{proof}

\begin{lemma}\label{lem:ham-bounds}

For any graph \(G\) and scale \(k\), \[
0 \le \Ham\!\big(b_v^{(k)},b_u^{(k)}\big) \le N.
\] The upper bound is tight (e.g., \(K_{m,n}\) at \(k=1\) yields
\(\Ham=m+n=N\) across parts).

\end{lemma}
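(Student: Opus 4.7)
The plan is to treat both bounds as immediate consequences of the definition of Hamming distance on binary vectors of length $N$, and then establish tightness through a single explicit witness.

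For the lower bound, I would invoke the definition $\Ham(x,y)=\sum_{i=1}^{N}|x_i-y_i|$: each summand is nonnegative, so the total is nonnegative. For the upper bound, since $b_v^{(k)},b_u^{(k)}\in\{0,1\}^N$, each term $|x_i-y_i|\in\{0,1\}$, giving at most $N$ nonzero contributions and hence $\Ham\le N$. No structural property of $G$ is needed beyond the fact that the rows of $B^{(k)}$ live in $\{0,1\}^N$, which is built into the exact-$k$ convention recorded in Section~2.

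For tightness, I would use the complete bipartite graph $K_{m,n}$ with parts $A$ of size $m$ and $B$ of size $n$, so $N=m+n$, and take $k=1$. Picking $v\in A$ and $u\in B$, the exact-$1$ rows are $b_v^{(1)}=\mathbf{1}_B$ and $b_u^{(1)}=\mathbf{1}_A$ (both zero on the diagonal by the convention $\operatorname{diag}(B^{(k)})=0$). These vectors are coordinatewise complementary on all of $V$: at positions in $A\setminus\{v\}$ and at $u$, the first is $0$ and the second is $1$; at positions in $B\setminus\{u\}$ and at $v$, the roles reverse. Summing gives $\Ham(b_v^{(1)},b_u^{(1)})=m+n=N$, matching the upper bound.

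There is no real obstacle here: the only subtlety worth checking explicitly is that the diagonal entries of $B^{(1)}$ vanish (so position $v$ contributes $0$ in $b_v^{(1)}$ but $1$ in $b_u^{(1)}$, and symmetrically), which is what makes the two rows exact complements on all $N$ coordinates rather than on $N-2$ of them. Once this convention is invoked, the bounds and their saturation follow without computation.
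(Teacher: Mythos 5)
Your proof is correct and takes essentially the same route as the paper: both bounds follow immediately from the definition of Hamming distance on $\{0,1\}^N$, and tightness is witnessed by $K_{m,n}$ at $k=1$ with the rows $\mathbf{1}_B$ and $\mathbf{1}_A$, which is exactly the example the paper invokes. (One cosmetic slip: your grouping swaps the roles of the coordinates $u$ and $v$ --- at position $u$ it is $b_v^{(1)}$ that equals $1$ and $b_u^{(1)}$ that equals $0$, and conversely at $v$, as your own later parenthetical correctly states --- but either way the two rows differ at all $N$ coordinates, so the conclusion $\Ham=m+n=N$ is unaffected.)
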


\subsubsection{Hypercube vs.~Sphere}\label{app:hypercube-sphere}

\begin{lemma}\label{lem:hypercube-sphere}

Let \(Q_n\) be the \(n\)-cube and fix \(u,v\) with Hamming distance
\(h=\mathrm{dist}(u,v)\). Let \(S(u,k)=\{w:\mathrm{dist}(u,w)=k\}\) and
\(b_u^{(k)}=\mathbf{1}_{S(u,k)}\in\{0,1\}^{2^n}\). Then \[
|S(u,k)|=\binom{n}{k},\qquad
|S(u,k)\cap S(v,k)|=
\begin{cases}
\displaystyle \binom{h}{h/2}\binom{n-h}{\,k-h/2\,}, & \text{if $h$ is even and } k\ge h/2,\\[6pt]
0, & \text{if $h$ is odd \text{ or } k< h/2.}
\end{cases}
\] Consequently, \[
\Ham\!\big(b_u^{(k)},b_v^{(k)}\big)
= 2\!\left(\binom{n}{k} - \mathbf{1}_{\{\,h \text{ even},\ k\ge h/2\,\}}\binom{h}{h/2}\binom{n-h}{\,k-h/2\,}\right).
\]

\end{lemma}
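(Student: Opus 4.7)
The plan is to reduce the Hamming distance between the binary indicator vectors $b_u^{(k)}$ and $b_v^{(k)}$ to a set-theoretic computation on the spheres $S(u,k), S(v,k) \subseteq \{0,1\}^n$, and then count the intersection by a direct combinatorial parameterization in terms of where $u$ and $v$ differ.

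First I would observe that $b_u^{(k)} = \mathbf{1}_{S(u,k)}$ and $b_v^{(k)} = \mathbf{1}_{S(v,k)}$ are indicator vectors of subsets of the vertex set, so
$$\Ham(b_u^{(k)}, b_v^{(k)}) = |S(u,k) \,\triangle\, S(v,k)| = |S(u,k)| + |S(v,k)| - 2\,|S(u,k) \cap S(v,k)|.$$
The sphere-size identity $|S(u,k)| = \binom{n}{k}$ is immediate: a vertex at Hamming distance exactly $k$ from $u$ is obtained by choosing which $k$ of the $n$ coordinates to flip. Translation-invariance of $Q_n$ gives $|S(v,k)| = \binom{n}{k}$ as well, so it remains only to compute the intersection.

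Second, I would parameterize $w \in S(u,k) \cap S(v,k)$ by splitting coordinates along the set $D := \{i : u_i \neq v_i\}$ with $|D| = h$. Let $a := |\{i \in D : w_i \neq u_i\}|$ and $b := |\{i \notin D : w_i \neq u_i\}|$. Then $\mathrm{dist}(u,w) = a + b$, while on $D$ the vector $v$ is the bit-flip of $u$, so $w$ disagrees with $v$ on exactly the $h - a$ positions of $D$ where it agreed with $u$; outside $D$, $v = u$ so the contribution is again $b$. Hence $\mathrm{dist}(v,w) = (h - a) + b$. Imposing both distances to equal $k$ gives the linear system $a + b = k$ and $(h - a) + b = k$, whose unique solution $a = h/2$, $b = k - h/2$ admits nonnegative integer values if and only if $h$ is even and $k \ge h/2$. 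In that case the number of admissible $w$ factors as the product of independent choices on $D$ and its complement, giving $\binom{h}{h/2}\binom{n-h}{k - h/2}$; otherwise the intersection is empty.

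Finally, substituting the intersection count back into the symmetric-difference formula of Step~1 yields the closed form in the lemma. I do not anticipate a serious obstacle: the only delicate point is the bookkeeping in Step~2 showing that the disagreements of $w$ with $u$ and with $v$ decouple cleanly across $D$ and its complement, and that the parity condition $h$ even together with $k \ge h/2$ emerges automatically from the $2 \times 2$ linear system rather than needing to be imposed by hand. The boundary cases where the intersection is empty then collapse without further argument.
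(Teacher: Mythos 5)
Your proposal is correct and follows essentially the same route as the paper: reduce the Hamming distance to the symmetric difference $|S(u,k)|+|S(v,k)|-2|S(u,k)\cap S(v,k)|$, then count the intersection by splitting coordinates along the set where $u$ and $v$ differ, with the constraint $\mathrm{dist}(u,w)=\mathrm{dist}(v,w)=k$ forcing exactly $h/2$ flips inside that set (hence $h$ even, $k\ge h/2$) and $k-h/2$ outside. Your two-parameter bookkeeping $(a,b)$ is just the paper's parameter $t$ (with $b=k-t$) written without the normalization $u=\mathbf{0}$, so the arguments coincide.
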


\begin{proof}

Write \(u=(0,\dots,0)\), and let \(v\) differ from \(u\) in the first
\(h\) coordinates. A node \(w\) lies in \(S(u,k)\) iff it differs from
\(u\) in exactly \(k\) coordinates. Among the \(h\) differing
coordinates, let \(t\) equal \(v\); among the remaining \(n-h\)
coordinates, choose \(k-t\) to flip, giving
\(\binom{h}{t}\binom{n-h}{k-t}\) options with \(\mathrm{dist}(u,w)=k\).
We have \(\mathrm{dist}(v,w)=(h-t)+(k-t)\), so requiring
\(\mathrm{dist}(v,w)=k\) forces \(h-2t=0\), i.e., \(t=h/2\) (hence \(h\)
even) and \(k\ge h/2\). The intersection count follows. Finally, \[
\Ham(b_u^{(k)},b_v^{(k)})=|S(u,k)\triangle S(v,k)|=|S(u,k)|+|S(v,k)|-2|S(u,k)\cap S(v,k)|,
\] and \(|S(u,k)|=|S(v,k)|=\binom{n}{k}\).

\end{proof}

\begin{lemma}\label{lem:even-values}

Fix \(k\ge 1\). If \(|S(v,k)|\) is constant over all \(v\) (e.g., in
vertex-transitive graphs), then for all \(u,v\), \[
\Ham\!\big(b_u^{(k)},b_v^{(k)}\big)\in\{0,2,4,\dots,2\,|S(\cdot,k)|\}.
\]

\end{lemma}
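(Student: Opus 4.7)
The plan is to reduce the claim to the elementary fact that the symmetric difference of two finite sets of the same cardinality has even size bounded by twice that cardinality. First, I would observe that by the definition of the exact-$k$ reachability row, $b_v^{(k)}$ is the indicator vector of the sphere $S(v,k)$, so its Hamming weight equals $|S(v,k)|$. Writing $s:=|S(\cdot,k)|$ for the common sphere size (which exists by hypothesis), both $b_u^{(k)}$ and $b_v^{(k)}$ lie on the Hamming sphere of weight $s$ inside $\{0,1\}^N$.

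Next, I would apply the identity $\Ham(\mathbf{1}_A,\mathbf{1}_B)=|A\triangle B|=|A|+|B|-2|A\cap B|$ with $A=S(u,k)$ and $B=S(v,k)$. Since $|A|=|B|=s$, this gives
\[
\Ham\!\big(b_u^{(k)},b_v^{(k)}\big)\ =\ 2\big(s-|S(u,k)\cap S(v,k)|\big),
\]
which is manifestly a nonnegative even integer. The intersection $|S(u,k)\cap S(v,k)|$ is an integer in $\{0,1,\dots,s\}$, hence the Hamming distance takes values in $\{0,2,4,\dots,2s\}$, which is exactly the stated support.

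Finally, I would note tightness at both endpoints: the value $0$ is attained when $u=v$ (so the spheres coincide), and the value $2s$ is attained iff the two spheres are disjoint. The upper endpoint is achieved, for instance, on the hypercube $Q_n$ between vertices at Hamming distance strictly greater than $2k$, where $S(u,k)\cap S(v,k)=\varnothing$ by the triangle inequality in graph distance. There is no real obstacle here; the only subtlety worth flagging is that the hypothesis of constant $|S(\cdot,k)|$ is genuinely needed, since without it the two vectors need not share a common weight, and the symmetric-difference identity would only yield $|A|+|B|-2|A\cap B|$, which need not be even.
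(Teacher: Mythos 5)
Your proof is correct and follows essentially the same route as the paper: both reduce to the identity $\Ham(\mathbf{1}_A,\mathbf{1}_B)=|A|+|B|-2|A\cap B|=2\big(s-|S(u,k)\cap S(v,k)|\big)$ under the common-weight hypothesis, which immediately gives evenness and the bound $2s$. Your added remarks on tightness and on why constant sphere size is needed are accurate but go beyond what the lemma requires.
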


\begin{proof}

If every row \(b_v^{(k)}\) has weight \(w:=|S(\cdot,k)|\), then \[
\|x-y\|_1=2\big(w-|\,\mathrm{supp}(x)\cap \mathrm{supp}(y)\,|\big)
\] is even.

\end{proof}

\begin{lemma}\label{lem:exact-k-saturation}

If \(G\) is connected with diameter \(D\), then \(B^{(k)}\equiv 0\) for
all \(k\ge D+1\), hence \(\mu_G^{(k)}=\delta_0\) for \(k\ge D+1\).

\end{lemma}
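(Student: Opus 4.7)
The plan is to read this off directly from the exact-$k$ convention. First I would recall the identification $B^{(k)}_{ij} = \mathbf{1}\{\operatorname{dist}(i,j) = k\}$, which is already noted in the preliminaries and which follows from the definition $B^{(k)} = B_{\le k} - B_{\le k-1}$ together with Lemma~\ref{lem:walk-path}: the latter guarantees that $(B_{\le k})_{ij} = 1$ iff $\operatorname{dist}(i,j) \le k$, so the telescoping difference flags pairs whose shortest path has length exactly $k$.

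With this characterization in hand, I would invoke the definition of diameter: for connected $G$, every ordered pair $(i,j)$ satisfies $\operatorname{dist}(i,j) \le D$. Hence, for any $k \ge D+1$ we have $\operatorname{dist}(i,j) < k$ for all $(i,j)$, so no entry of $B^{(k)}$ equals $1$, giving $B^{(k)} \equiv 0$.

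The second claim then follows immediately: if $B^{(k)}$ is the zero matrix, every row $b_v^{(k)}$ is the zero vector, so $\Ham(b_v^{(k)}, b_u^{(k)}) = 0$ for every unordered pair, and the empirical distribution $\mu_G^{(k)}$ puts all mass at $0$, i.e.\ $\mu_G^{(k)} = \delta_0$.

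There is no substantive obstacle; the lemma is a definitional unpacking of the exact-$k$ convention. The only subtlety worth flagging is to make explicit that $B^{(k)}$ encodes \emph{exact} distance $k$ rather than cumulative reachability, so that ``$k$ exceeds the diameter'' really does force emptiness rather than saturation at $\mathbf{1}_{N\times N} - I$. Lemma~\ref{lem:walk-path} takes care of this, and the proof is therefore essentially a one-line invocation of each definition.
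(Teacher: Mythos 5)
Your proposal is correct and matches the paper's argument: the paper's proof is simply that, by the definition of diameter, no pair has shortest-path distance exactly $k$ once $k\ge D+1$, which is exactly your core step (your explicit appeal to Lemma~\ref{lem:walk-path} to justify the exact-$k$ identification is consistent with how the paper sets up the convention). Nothing further is needed.
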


\begin{proof}

By definition of diameter, no pair has shortest-path distance exactly
\(k\) once \(k\ge D+1\).

\end{proof}

\section{Computational Details}\label{app:compute}

Although this work is primarily theoretical, it is important that
Hamming Graph Metrics (HGM) admit efficient evaluation. We summarize
asymptotic costs and the implementation choices that make the framework
practical on large sparse graphs.

\subsection{Complexity Overview}\label{complexity-overview}

Let (G=(V,E)) be an unweighted, undirected graph with
(\textbar V\textbar=N), (\textbar E\textbar=M), and diameter (D).

\subsubsection{1) Distances and exact-(k)
slices}\label{distances-and-exact-k-slices}

Compute all-pairs shortest-path \textbf{distances} by running BFS from
each source:

\(\text{Time }=\mathcal O\!\big(N(N+M)\big),\qquad\text{Space }=\mathcal O(N)\ \text{(working)}.\)

Define exact-(k) slices by

\[
B^{(k)}_{ij}=\mathbf{1}\{\operatorname{dist}(i,j)=k\},\quad k=1,\dots,D,
\]

which can be populated in \(\mathcal O(N^2)\) total once distances are
known.

\subsubsection{\texorpdfstring{2) Node-level summaries in
\(\mathcal O(D\,N^2)\) bit-parallel portions
\(\mathcal O(D\,N^2/w)\))}{2) Node-level summaries in \textbackslash mathcal O(D\textbackslash,N\^{}2) bit-parallel portions \textbackslash mathcal O(D\textbackslash,N\^{}2/w))}}\label{node-level-summaries-in-mathcal-odn2-bit-parallel-portions-mathcal-odn2w}

For a fixed scale (k), let \(B=B^{(k)}\) and let \(s\in\mathbb N^N\) be
its column sums, \(s_j=\sum_u B_{uj}\). Then for all nodes
simultaneously, \[
\mathrm{HC}^{(k)}(v)\;=\;\sum_{u\neq v}\Ham(b_v^{(k)},b^{(k)}_u)
\;=\;\Big(\sum_{j=1}^N s_j\Big)\;+\;\big[B\,\big(N\mathbf 1-2s\big)\big]_v.
\] Thus:

\begin{itemize}
\tightlist
\item
  compute (s) via bit-packed popcounts in \(\mathcal O(N^2/w)\);
\item
  form \(c=N\mathbf 1-2s\) in \(\mathcal O(N)\);
\item
  multiply (B,c) in \(\mathcal O(N^2)\) (upper bound), or
  \(\mathcal O(\mathrm{nnz}(B))\) if sparsity permits.
\end{itemize}

Per scale: \(\mathcal O(N^2)\) (with the popcount portions
\(\mathcal O(N^2/w)\)); across all (k): \(\mathcal O(D\,N^2)\).

\subsubsection{\texorpdfstring{3) Graph-to-graph HGM distance in
\(\mathcal O(D\,N^2/w)\)}{3) Graph-to-graph HGM distance in \textbackslash mathcal O(D\textbackslash,N\^{}2/w)}}\label{graph-to-graph-hgm-distance-in-mathcal-odn2w}

For labeled graphs (G,H), \[
d_{\mathrm{HGM}}(G,H)=\sum_{k=1}^{D}\|B_G^{(k)}-B_H^{(k)}\|_1
\] is evaluated by XOR+(\textsc{popcount}) over bit-packed slices in
\(\mathcal O(D\,N^2/w)\).

\textbf{Equivalent distance-matrix formulation.} Because for each
ordered pair ((i,j)) exactly one (k) satisfies (B\^{}\{(k)\}\_\{ij\}=1),
\[
\sum_{k=1}^{D}\big|B_G^{(k)}(i,j)-B_H^{(k)}(i,j)\big|=
\begin{cases}
0,& \operatorname{dist}_G(i,j)=\operatorname{dist}_H(i,j),\\
2,& \text{otherwise},
\end{cases}
\] hence \[
d_{\mathrm{HGM}}(G,H)=2\,\#\{(i,j): i\ne j,\ \operatorname{dist}_G(i,j)\ne \operatorname{dist}_H(i,j)\}.
\] Thus, once the two distance matrices are computed, a single
\(\mathcal O(N^2)\) pass suffices without materializing all slices.

\subsubsection{4) Optional pairwise matrices (when explicitly
needed)}\label{optional-pairwise-matrices-when-explicitly-needed}

If one forms the full pairwise Hamming matrix (D\^{}\{(k)\}) with
entries \(D^{(k)}_{uv}=\Ham(b^{(k)}_u,b_v^{(k)})\), the best
straightforward bitset method costs \(\mathcal O(N^3/w)\) per (k)
(XOR+(\textsc{popcount}) for all pairs). This is \textbf{not} required
for the node-level summaries or \(d_{\mathrm{HGM}}\) computations above.

\subsection{Bit-Parallel
Representation}\label{bit-parallel-representation}

We work in the word-RAM model with machine word size (w) (e.g., (w=64))
and hardware \textsc{popcount}. Each row \(b_v^{(k)}\in\{0,1\}^N\) is
stored in \(\lceil N/w\rceil\) words. For bitsets (r,s), \[
\Ham(r,s)=\sum_{t=1}^{\lceil N/w\rceil}\mathrm{popcount}(r_t\oplus s_t).
\] This turns all bitwise portions of the algorithms above into
\(\mathcal O(N^2/w)\) passes per scale.

\subsection{Practical Notes}\label{practical-notes}

\begin{itemize}
\tightlist
\item
  \textbf{Streaming over (k)}: to avoid storing \(\mathcal B\)
  explicitly, accumulate (s) and the required functionals per scale
  while streaming rows produced by BFS.
\item
  \textbf{Sparsity}: when many (B\^{}\{(k)\}) are sparse (typical for
  small (k)), exploit \(\mathrm{nnz}(B^{(k)})\) in the (B,c)
  multiplication.
\item
  \textbf{Parallelism}: BFS sources, per-scale passes, and
  (\textsc{popcount}) loops parallelize naturally across cores/GPUs.
\end{itemize}

\subsection{Summary}\label{summary-1}

\[
\boxed{
\begin{aligned}
&\text{Distances: } \mathcal O\!\big(N(N+M)\big) \\
&\text{Node summaries (all \(k\)): } \mathcal O(D\,N^2)\ \ (\text{bitwise parts } \mathcal O(D\,N^2/w))\\
&\text{Graph–graph } d_{\mathrm{HGM}}: \ \mathcal O(D\,N^2/w)\ \text{ (or }\mathcal O(N^2)\text{ via distances)}
\end{aligned}}
\]

No Boolean matrix powers are used; bit-parallel XOR+(\textsc{popcount})
yields the (N\^{}2/w) speedups on the bitwise portions.


\end{document}